\newtheorem{theorem}{Theorem}[section]
\newtheorem{lemma}[theorem]{Lemma}
\newtheorem{proposition}[theorem]{Proposition}
\theoremstyle{definition}
\newtheorem{remark}[theorem]{Remark}
\numberwithin{equation}{section}
\newcommand{\ran}{\operatorname{ran }}
\DeclareMathOperator{\sgn}{sgn}
\newcommand{\CO}{\mathbb C}
\newcommand{\RE}{\mathbb R}
\newcommand{\rr}{\mathbb R}
\newcommand{\Gb}{\mathbb G}
\newcommand{\Mb}{\mathbb M}
\newcommand{\BB}{\mathcal B}
\newcommand{\NN}{N}
\newcommand{\QQ}{\mathcal Q}
\newcommand{\PP}{\mathcal P}
\newcommand{\Tau}{\mathcal{T}}
\renewcommand{\SS}{\mathcal S}
\newcommand{\Fou}{\mathscr{F}}
\newcommand{\Bou}{\mathscr{B}}
\newcommand{\Hil}{\mathscr{H}}
\newcommand{\Lscr}{\mathscr{L}}
\newcommand{\Lrm}{\mathrm{L}}
\newcommand{\ve}{\varepsilon}
\renewcommand{\d}{\mathrm{d}}
\newcommand{\x}{{\bf  x}}
\renewcommand{\k}{{\bf k}}
\newcommand{\dk}{\mathrm{d}{\bf k}}
\newcommand{\brevegg}{\breve{g}}
\newcommand{\bfg}{{g}}
\newcommand{\mm}{{ m}}
\newcommand{\Qeff}{\mathrm{Q^{eff}_\ve}}
\newcommand{\eff}{\mathrm{eff}}
\renewcommand{\Im}{\operatorname{Im}}
\newcommand{\Ai}{\operatorname{Ai}}
\newcommand{\supp}{\operatorname{supp}}
\newcommand{\dist}{\operatorname{dist}}
\newcommand{\bof}{\text{\rm b/f}}
\newcommand{\bos}{\text{\rm b}}
\newcommand{\fer}{\text{\rm f}}
\newcommand{\bosfer}{{\natural}}
\newcommand{\bosferpm}{{(+)}}
\newcommand{\alve}{\ve}
\newcommand{\be}{\begin{equation}}
\newcommand{\ee}{\end{equation}}
\newcommand{\eigenvalueH}{E}
\newcommand{\eigenvalueL}{\mathscr{E}}
\title[]{The Born-Oppenheimer approximation for a 1D 2+1 particle system with zero-range  interactions}
\author{Claudio Cacciapuoti}
\email{claudio.cacciapuoti@uninsubria.it}%
\address{Universit\`a dell'Insubria, Dipartimento di Scienza e Alta Tecnologia, Sezione di Matematica, Via Valleggio 11, 22100 Como, Italy, EU}
\author{Andrea Posilicano}
\email{andrea.posilicano@uninsubria.it}%
\address{Universit\`a dell'Insubria, Dipartimento di Scienza e Alta Tecnologia, Sezione di Matematica, Via Valleggio 11, 22100 Como, Italy, EU}
\author{Hamidreza Saberbaghi}
\email{hamidreza.saberbaghi@uninsubria.it}%
\address{Universit\`a dell'Insubria, Dipartimento di Scienza e Alta Tecnologia, Sezione di Matematica, Via Valleggio 11, 22100 Como, Italy, EU}
\keywords{Born-Oppenheimer approximation, mathematical methods in quantum mechanics, three-body, point interactions, zero-range Interactions, one-dimensional.}
\subjclass{81Q10, 81Q15, 81Q20, 70F07, 46N50.}
\begin{document}

\begin{abstract}
We study the self-adjoint Hamiltonian 
that models the quantum dynamics of a one-dimensional (1D) three-body system consisting of a light particle interacting with two heavy ones through a zero-range force.  For an attractive interaction we determine the behavior of   the eigenvalues below the essential spectrum in the regime $\varepsilon\ll 1$, where $\varepsilon$ is proportional to the square root of the mass ratio. We show that the $n$-th eigenvalue behaves as 
$E_{n}(\varepsilon)=-\alpha^{2}+|\sigma_{n}|\,\alpha^{2}\varepsilon^{2/3}+O(\varepsilon)$,   
where $\alpha$ is a negative constant that explicitly relates to the physical parameters and  $\sigma_{n}$ is either the $n$-th extremum or the $n$-th zero of the Airy function Ai, depending on the kind (respectively, bosons or fermions) of the two heavy particles. Additionally, we prove that the essential spectrum coincides with the half-line $[-\frac{\alpha^2}{4+\ve^{2}}\,,+\infty)$. 
\end{abstract}

\maketitle

\section{Introduction}

We consider two heavy particles (both bosons or fermions) of mass $M$ and one light particle of mass $m$  in one  dimension. Let $x_1$ and $x_2$ denote the coordinates of the heavy particles, and $x_3$ denote the coordinate of the light particle.   The Hilbert space of the system is $L^2_{\bos}(\RE^3)$ or $L^2_{\fer}(\RE^3)$, i.e., the subspace of square integrable functions that are either symmetric (if the heavy particles are bosons) or antisymmetric (if the heavy particles are  fermions) under the exchange of the coordinates $x_1$ and $x_2$:
\[
L^2_{\bof}(\RE^3) := \{\Psi\in L^2(\RE^3):\; \Psi(x_1,x_2,x_3) =\bosferpm_{\bof}\, \Psi(x_2,x_1,x_3) \},
\]
where $\bosferpm_{\bos} := +$ and $\bosferpm_{\fer} := -\,$. 
 
We assume that the heavy particles do not interact between themselves and that they interact with the light particle through a contact interaction. We introduce, heuristically, the  Hamiltonian 
\[
H^{\bof}_{2+1} := -\frac{\hbar^2}{2M}\, \partial^2_{x_1} -\frac{\hbar^2}{2M} \,\partial^2_{x_2} -\frac{\hbar^2}{2m} \,\partial^2_{x_3} + \beta \delta(x_3 - x_1) + \beta\delta (x_3-x_2)\,.
\]
Here $\beta$ is real valued and represents the interaction parameter, and $\delta(x_3 - x_1)$, $\delta (x_3-x_2)$ denote the Dirac delta distributions supported on the  coincidence planes of the heavy particles with the light particle: 
\[
\Pi_1:= \{(x_1,x_2,x_3)\in \RE^3: x_3 = x_1\} \qquad \text{and} \qquad  \Pi_2:= \{(x_1,x_2,x_3)\in \RE^3|\; x_3 = x_2\}.
\]
To proceed, we pass to units in which $\hbar =1$ and define the Jacobi coordinates
\[
x_{cm} = \frac{M(x_1+x_2)+m x_3}{M_{tot}} \,,\qquad x= x_1 - x_2 \,,\qquad  y = x_3 - \frac{x_1+x_2}{2} ; 
\]
\[
M_{tot} = 2M+m\,,\qquad \mu = \frac{2Mm}{M_{tot}}\,.
\]
In Jacobi coordinates, the Hamiltonian $H^{\bof}_{2+1}$ acts in the Hilbert space of square-integrable functions $\Psi$ such that $\Psi(x_{cm},x,y) = \bosferpm_{\bof}\Psi(x_{cm},-x,y)$ and takes the form (we abuse the notation and continue to denote the Hamiltonian by $H^{\bof}_{2+1}$)
\[
H^{\bof}_{2+1} = -\frac{1}{2M_{tot}}\,\partial^2_{x_{cm}}-\frac{1}{M}\,\partial^2_{x}-\frac{1}{2\mu}\,\partial^2_{y} + \beta\delta(y-x/2)+ \beta\delta(y+x/2)\, .
\]
Neglecting the coordinates of the center of mass and keeping the previous notation, we consider the Hilbert space
\[
L^2_{\bof}(\RE^2) := \{\psi\in L^2(\RE^2):\; \psi(x,y) =\bosferpm_{\bof} \,\psi(-x,y) \}
\]
and the coincidence lines
\begin{equation}\label{Pi12}
\Pi_1= \{(x,y) \in \RE^2:\; y = x/2\} \qquad \text{and}\qquad  \Pi_2=  \{(x,y) \in \RE^2:\; y = - x/2\} .
\end{equation}
Additionally, since we consider the regime $m/M\ll 1$, we define the rescaled Hamiltonian 
\[
H^{\bof}_{\ve} = -\ve^2 \partial^2_{x}-\partial^2_{y} + \alpha \delta(y-x/2)+ \alpha\delta(y+x/2)
\]
where the small parameter $\ve\ll 1$ is defined by  
\[
\ve^2 = \frac{2\mu}{M}
\]  
and $\alpha$ is  kept fixed (independent of $\ve$); we remark that $H^{\bof}_\ve$ is to be considered as an operator in the Hilbert space 
$ L^2_{\bof}(\RE^2)$. 
At a heuristic level, the Hamiltonian $H^{\bof}_\ve$ is given by  $H^{\bof}_\ve = 2 \mu  \left(H^{\bof}_{2+1} -K_{cm}\right)$, where $K_{cm}= -\frac{1}{2M_{tot}}\,\partial^2_{x_{cm}}$  represents the kinetic energy of the center of mass and $\alpha$ relates to $\beta$ through the formula   $\alpha = 2\mu\beta$.

The operator   $H^{\bof}_\ve$ can be  characterized as the self-adjoint, bounded from below, operator associated to the closed and bounded from below quadratic form 
\[
\BB^{\bof}_{\ve}: H^1(\RE^2)\cap  L^2_{\bof}(\RE^2)\to\RE 
\]
\[
\BB^{\bof}_{\ve}(\psi)= \int_{\RE^2} \ve^2\left|\partial_x \psi(x,y)\right|^2+\left| \partial_y \psi(x,y)\right|^2  \d\x +2 \alpha
 \int_{\RE} \left|\psi(s,s/2) \right|^2   \d s, 
\]
see Proposition \ref{p:quadratic}.

We point out that functions in  the domain of $H^{\bof}_\ve$ have to satisfy a  boundary condition on the coincidence planes $\Pi_1$ and $\Pi_2$ (as defined in Eq. \eqref{Pi12}); more precisely,  they  are regular outside  $\Pi_1$ and $\Pi_2$ , continuous on $\Pi_1$ and $\Pi_2$, and 
\begin{equation}\label{explicitBC}
\Big(- \frac{ \ve^2}{2} \partial_x  \psi   +  \partial_y \psi \Big)(x,(x/2)^+) - 
\Big(- \frac{ \ve^2}{2} \partial_x  \psi   +  \partial_y \psi  \Big)(x,(x/2)^-) 
 = \alpha  \psi(x,x/2) \; \text{for a.e. $x\in\RE$},
\end{equation}
$(x/2)^\pm$ denoting the right (respectively, left) limit. A  similar condition holds true on the coincidence plane $\Pi_2$ and is  obtained as a consequence of the bosonic or fermionic symmetry. The left hand side in the previous equation represents (up to a normalization factor) the jump of the normal derivative relative to $ \ve^2 \partial^2_{x}+\partial^2_{y}$ across $\Pi_1$, see also Remark \ref{r:bc}.

Our main result is the following: 

 \begin{theorem}\label{th:main}
 Let $\alpha  \in \rr$ and $\ve>0$.  Then, 
\begin{equation}\label{spectrum1}
    \begin{aligned}
        & \sigma(H^{\bof}_\ve)=\sigma_{ess}(H^{\bof}_\ve) =[0,+\infty)  &&\text{if}\quad  \alpha \geq 0, \\
        & \sigma(H^{\bof}_\ve)\subseteq [-\alpha^{2},+\infty),\qquad   \sigma_{ess}(H^{\bof}_\ve)=\left[-\frac{\alpha^2}{4+\ve^{2}}\,,+\infty\right) 
         \qquad &&\text{if} \quad \alpha < 0.
    \end{aligned}
\end{equation}
Moreover, if  $\alpha<0$,  for any  fixed  integer $n\ge0$ there exists $\ve>0$ sufficiently small such that $H^{\bof}_\ve$ has at least $(n+1)$  simple isolated eigenvalues 
\[
-\alpha^{2}<\eigenvalueH^{\bof}_{\ve,0}<\eigenvalueH^{\bof}_{\ve,1}<\dots<\eigenvalueH^{\bof}_{\ve,n}<-\frac{\alpha^2}{4+\ve^{2}}\,,\]
such that 
\[
            \eigenvalueH^{\bof}_{\ve, k}  = -\alpha^2 +  s_{k}^{\bof}\alpha^2\ve^{2/3} + O (\ve) \, , \qquad \text{for all }k=0,\dots, n,
\]
    where 
    \[ s_{k}^{\bos}=-\sigma_{2k}\,,\qquad s_{k}^{\fer}=-\sigma_{2k+1}\,,\]
    and the interlacing negative numbers $\sigma_{k}$ 
    \[
    \dots<
    \sigma_{2k+1}<\sigma_{2k}<\sigma_{2k-1}<\dots<\sigma_{2}<\sigma_{1}<\sigma_{0}<0
    \,,
    \] 
    are either the  extrema or the zeros of the Airy function $\Ai$, i.e., $\Ai'(\sigma_{2k})=0$ or $ \Ai(\sigma_{2k+1})=0$.
\end{theorem}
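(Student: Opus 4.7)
The plan is to perform a Born--Oppenheimer analysis that exploits the mass disparity $\ve\ll 1$. For each fixed heavy coordinate $x$, the fast operator $h_y(x):=-\partial_y^2+\alpha\delta(y-x/2)+\alpha\delta(y+x/2)$ in $L^2(\RE)$ has a unique simple bound state $\phi_0(\cdot;x)$, even in $y$, with energy $\mathscr{E}_0(x)=-\kappa(x)^2$, where $\kappa(x)>0$ solves the transcendental equation $\kappa(1+\tanh(\kappa|x|/2))=|\alpha|$. Solving this near $x=0$ and at $|x|\to\infty$ yields
\begin{equation*}
\mathscr{E}_0(x)=-\alpha^2+|\alpha|^3|x|+O(x^2)\quad (x\to 0),\qquad \mathscr{E}_0(x)\to -\alpha^2/4\quad(|x|\to\infty).
\end{equation*}
The effective operator for the heavy coordinate is $h^{\eff}_\ve:=-\ve^2\partial_x^2+\mathscr{E}_0(x)$, restricted to even (bosonic) or odd (fermionic) functions of $x$. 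The Lipschitz cusp of $\mathscr{E}_0$ at $x=0$, rather than a smooth quadratic minimum, is exactly what forces the nonstandard $\ve^{2/3}$ scaling in place of the familiar harmonic $\ve$.

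\textbf{Airy reduction.} The rescaling $x=\ve^{2/3}\xi/|\alpha|$, $E=-\alpha^2+\alpha^2\ve^{2/3}\lambda$ turns the effective eigenvalue equation $h^{\eff}_\ve\chi=E\chi$ into
\begin{equation*}
-\chi''(\xi)+|\xi|\chi(\xi)=\lambda\,\chi(\xi)+O(\ve^{2/3}).
\end{equation*}
On $\xi>0$ the decaying solution is $\Ai(\xi-\lambda)$; parity extension to $\xi<0$ is constrained by matching at $\xi=0$. Imposing $\chi$ even (bosons) requires $\Ai'(-\lambda)=0$, i.e.\ $\lambda=|\sigma_{2k}|$; imposing $\chi$ odd (fermions) requires $\Ai(-\lambda)=0$, i.e.\ $\lambda=|\sigma_{2k+1}|$. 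To promote this to rigorous two-sided bounds I would use the min--max principle: for the upper bound, with trial functions $\psi_k(x,y)=\chi_k(x)\phi_0(y;x)$ built from smoothed and truncated Airy profiles $\chi_k$, a direct computation of $\langle\psi_k,H_\ve^{\bof}\psi_k\rangle$ exploiting Airy decay and the continuity of $\phi_0(y;\cdot)$ yields the claim with $O(\ve)$ error; for the lower bound, decomposing $\psi=\chi(x)\phi_0(y;x)+\psi^\perp$ and invoking the uniform gap of at least $\alpha^2/4$ between $\mathscr{E}_0(x)$ and the rest of $\sigma(h_y(x))$ confines the effective dynamics to the ground adiabatic channel up to an $O(\ve)$ energy correction.

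\textbf{Essential spectrum and main obstacle.} For the essential spectrum in \eqref{spectrum1}, the inclusion $[-\alpha^2/(4+\ve^2),+\infty)\subseteq\sigma_{\mathrm{ess}}(H_\ve^{\bof})$ would be obtained by Weyl sequences modelled on two-cluster states in which the light particle is bound to a single heavy particle while the other propagates freely; computing the two-body bound-state energy in the rescaled Jacobi variables gives precisely the threshold $-\alpha^2/(4+\ve^2)$. The reverse inclusion follows from an HVZ-type argument using compactness of suitable resolvent differences obtainable from the explicit resolvent formula of Theorem \ref{th:resolvent-main}. The main obstacle I foresee is controlling the Born--Oppenheimer error uniformly near the cusp $x=0$, where $\phi_0(y;x)$ and $\mathscr{E}_0(x)$ fail to be $C^1$: the standard BO error estimate (based on smooth parameter dependence of the adiabatic basis) cannot be invoked directly, and the Berry-type off-diagonal couplings between the ground and excited adiabatic modes must be estimated by hand and shown to contribute only $O(\ve)$. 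The cleanest route is likely to recast the whole analysis through the resolvent formula of Theorem \ref{th:resolvent-main}, which reduces the eigenvalue equation to a Birman--Schwinger-type condition on the coincidence lines $\Pi_1\cup\Pi_2$; the asymptotic analysis of this reduced problem as $\ve\to 0$, combined with the Airy-scaling argument sketched above, should deliver the stated expansion.
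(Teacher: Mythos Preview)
Your overall Born--Oppenheimer strategy, the Airy reduction, and the identification of the $\ve^{2/3}$ scaling with even/odd parity selecting extrema/zeros of $\Ai$ are all correct and match the paper's conceptual framework. The technical execution, however, differs in two places worth flagging.

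\emph{Essential spectrum.} The paper does not run an HVZ argument. Instead, it uses the Kre\u{\i}n-type resolvent formula (Theorem~\ref{th:resolvent-main}) to reduce the question to $0\in\sigma_{ess}(1+2\alpha M_\ve(-\lambda))$; the operator $M_\ve$ splits as $\tfrac12(M_{d,\ve}\pm M_{od,\ve})$ with $M_{od,\ve}$ Hilbert--Schmidt, so by Weyl's criterion the essential spectrum is that of the explicit multiplication operator $1+\alpha M_{d,\ve}$, and the threshold $-\alpha^2/(4+\ve^2)$ drops out algebraically. This is quicker than constructing two-cluster Weyl sequences and proving HVZ-type compactness by hand. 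Your resolvent/Birman--Schwinger suggestion at the end is in fact exactly what the paper does here.

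\emph{Adiabatic reduction.} Your proposed lower bound via the decomposition $\psi=\chi(x)\phi_0(y;x)+\psi^\perp$ and the spectral gap is the classical CDS-style route; the paper explicitly remarks that this ``standard procedure does not work'' here, because the projection $\mathcal P$ onto $\phi_0(\cdot;x)$ does \emph{not} preserve $D(H_\ve^{\bof})$: the projected function fails the $\ve$-dependent jump condition in the normal derivative across $\Pi_1$. The paper therefore works entirely at the quadratic-form level, using the Krej\v{c}i\v{r}\'ik--Raymond--Royer--Siegl scheme: compare $\mathcal Q_\ve$ with the decoupled form $\widehat{\mathcal Q}_\ve(\phi)=\mathcal Q_\ve(\mathcal P\phi)+\mathcal Q_\ve(\mathcal P^\perp\phi)$, and transfer spectral information via a resolvent-difference bound controlled by $\|[\partial_x,\mathcal P]\|$. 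On this last point, your identified ``main obstacle'' is slightly misdiagnosed: the paper shows (Lemma~\ref{l:PH}) that $\sup_x\int|\partial_x\psi^{BO}|^2\,dy\le 4\alpha^2<\infty$ despite the cusp of $\lambda_0$ at $x=0$, so the commutator is bounded and the non-$C^1$ dependence in $x$ is harmless at the form level. The genuine obstacle is the domain mismatch, not the regularity of the adiabatic basis. Once reduced to the one-dimensional effective operator $-\ve^2\partial_x^2+V(x)$, the paper obtains the eigenvalue asymptotics by Simon's IMS localization (lower bound) and Rayleigh--Ritz with cut-off rescaled Airy functions (upper bound), which is essentially what you sketch.
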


The first part of Theorem~\ref{th:main} (Eq.~\eqref{spectrum1}) characterizes the bottom of the spectrum and the essential spectrum for arbitrary values of the mass ratio~$\varepsilon$. The second part establishes the validity of the Born--Oppenheimer approximation and provides the asymptotic behavior of the isolated eigenvalues in the small mass ratio limit.  

The Born-Oppenheimer approximation, introduced in the early years of quantum theory \cite{BO27}, was developed as a method for deriving the molecular structure from quantum mechanical principles. We do not attempt to provide an overview of the extensive mathematical literature on this topic. For an accessible introduction and references to both classical and recent developments, we refer the reader to the reviews \cite{HJ07, jecko14}.

In the context of zero-range interactions, the study of the quantum systems of three or more non relativistic bosons interacting via contact interactions in dimension three is plagued by the so-called ultraviolet catastrophe \cite{MF62a, MF62b, thomas}. This singular behavior also appears in systems consisting of two or more bosons or fermions interacting with a particle of a different nature (see, e.g., \cite{FT24a}, and \cite{Becker2018,CDFMT12,CDFMT15,CFT15, Yoshitomi17} for the fermionic case). Overcoming this issue typically requires the introduction of nonlocal modifications and/or effective three-body interaction terms, see, e.g., \cite{BCFTahp23, FT23,  FT24b, FT24a, FST24, Saberbaghi2025} and references therein. This problem  affects also the study of  a many particle system in three spatial dimensions within the Born-Oppenheimer framework, see, e.g., \cite{FST24}. 

However, these difficulties  do not arise whenever one considers particle systems in one spatial dimension, see, e.g.,  \cite{BCFTjmp18, GHL20}. Nevertheless, even in this case, the standard mathematical procedure to prove the validity of the Born-Oppenheimer approximation (as given, e.g., in \cite{CDS}) does not work, for this reason we use the general scheme developed in \cite{Krejcirik-etal-m2018}.

Let us remark that the eigenvalue expansion given in Theorem ~\ref{th:main} conforms with the one obtained (for the bosonic case) by Akbas, Turgut in  \cite{AT1}; their approach is more aligned with theoretical physics literature and does not present a rigorous proof of the validity of the Born-Oppenheimer approximation. In the same spirit a first insight into the two-dimensional case is provided in Ref.   \cite{AT2}.

Related to our work is the study of the spectrum of the  Laplacian in dimension two with   $\delta$-interactions supported on two crossing lines, see, e.g.,  \cite{ExnerNemcova03}, or on an almost straight line, see, e.g.,     \cite{ExnerIchinose01}; even though,  by the nature of the problem, in these works there is no account of the fermionic or bosonic symmetry. We mention also the work \cite{PankrashkinVogel22}, where the case of $\delta'$-interactions supported on two crossing lines is considered. 


After completing  our work, Nicholas Raymond drew our attention  to Ref. \cite{DucheneRaymond14}, addressing the analysis of the  discrete spectrum for the Laplacian in dimension two with a  $\delta$-interaction supported on a broken line, similar to  \cite{ExnerIchinose01} and \cite{ExnerNemcova03}.  The analysis in \cite{DucheneRaymond14} also makes use of a dimensional reduction argument and is closely related to the study of the Born-Oppenheimer approximation; by using bounds on the quadratic form, it results in an asymptotic expansion for the eigenvalues similar to the one given in Theorem~\ref{th:main}. 

\subsection*{Outline of the proof of Theorem \ref{th:main}}
We first define the Hamiltonian $H^{\bof}_\ve$ by means of standard tools from the theory of self-adjoint extensions of symmetric operators  and provide a formula for its  resolvent. This is done in Theorem \ref{th:resolvent-main}, within the approach developed in \cite{Pjfa01}. This allows a precise characterization of its essential spectrum which is also relevant for the proof of the second part of Theorem  \ref{th:main}.

The Born-Oppenheimer approximation is based on the idea that the dynamics of the system factorizes in a fast dynamics, relative to the light particle, and a slow one, describing the evolution of the heavy particles subsystem.  We fix the relative position $x$ of the heavy particles and study the spectrum of the light particle Hamiltonian $h_x$ associated to the quadratic form 
\[
b_{x}: H^1(\RE)  \to\RE
\]
\[
b_x (u) = \int_\RE |u'(y)|^2\, \d y + \alpha |u(x/2)|^2+\alpha |u(-x/2)|^2\,.
\]
 Note that $h_x$ is the Hamiltonian in dimension one with two delta interactions centered in $y=x/2$ and $y=-x/2$, see \cite{AGHKH05}. Functions in the domain of $h_x$ are regular outside the points $y=\pm x/2$, continuous in $y = \pm x/2$,  and satisfy the boundary conditions 
\[
u'((x/2)^+) -u'((x/2)^-) =\alpha u(x/2)
\qquad\text{and}\qquad 
u'((-x/2)^+)-u'((-x/2)^-)=\alpha u(-x/2).
\]
For $\alpha<0$, $h_x$ has non empty discrete spectrum and the lowest eigenvalue, denoted by $-\lambda_0(x)$, and the corresponding normalized eigenfunction, $ \psi^{BO}_{x}$, can be explicitly computed. 


To extract an effective contribution from the light particle component, we use the projection on the eigenfunction $\psi^{BO}_{x}$ (more precisely, the direct integral of a family of projections). This procedure produces an effective Hamiltonian for the heavy particles subsystem. We follow closely the  very versatile and general approach developed in \cite{Krejcirik-etal-m2018}. By doing so we reproduce and rewrite, in a slightly different way, some key estimates from that paper, giving simpler expressions for the bounds (see Propositions \ref{p:QwidehatQ} and \ref{p:diff}, and Lemma \ref{l:resolvent-difference} compared with \cite[Theorems 1.1 and 2.5, and Proposition 2.6]{Krejcirik-etal-m2018}).

To relate the Hamiltonians  $H^{\bof}_\ve$ and $h_x$, we observe that given any function $\phi\in H^1(\RE^2)$, 
for a.e. $x\in \RE$, its $x$-section $\phi_x(y):= \phi(x,y)$ belongs to $H^1(\RE)$. Moreover,  there holds 
\[
\BB^{\bof}_{\ve}( \phi)= \int_{\RE^2} \ve^2 |\partial_x \phi(x,y)|^2 \d\x  +\int_\RE  b_x(\phi_x)\d x \qquad \forall \phi \in H_{\bof}^1(\RE^2).    
\]
As a result, for any arbitrary mass ratio $\ve$ and $\alpha <0$ we infer  $\sigma(H^{\bof}_\ve)\subset [-\alpha^2,+\infty)$, see Proposition \ref{p:spectrumHveLB}.


To proceed further, we  notice that  $\psi^{BO}_x$  can be regarded as a function of two variables, denoted by  $\psi^{BO}$, with the obvious identification $\psi^{BO}(x,y)\equiv \psi^{BO}_x(y)$;  we introduce the orthogonal  projection 
\[
\mathcal{P}:L^{2}(\RE^{2})\to L^{2}(\RE^{2})\,,\qquad
\mathcal{P} \phi(x,y) := 
\psi^{BO}(x,y) f_\phi(x) \, ,
\]
 where   
 \[
    f_\phi(x) := \int_{\RE} {\psi^{BO}(x,y)} \phi(x,y)\, \d y  \,;
    \]
    additionally we set $\mathcal{P}^\perp  :=1 - \mathcal P$ and point out that $\mathcal P$ leaves invariant both $ L^{2}_{\bos}(\RE^{2})$ and  $L^{2}_{\fer}(\RE^{2})$.
    
It turns out  that  the quadratic form 
\[
 D(\widehat{\BB}^{\bof}_\ve): =   H_{\bof}^1(\RE^2) \qquad \widehat{\BB}^{\bof}_\ve(\phi)  : = 
 {\BB}^{\bof}_\ve (\PP\phi)+ {\BB}^{\bof}_\ve(\PP^\perp\phi)
\]
is well-defined, closed and bounded from below (see Remarks \ref{remL2} and \ref{remH1}), and so,  it defines a self-adjoint operator $\widehat H^{\bof}_\ve$.

 We provide estimates regarding the relations between the resolvent sets of  $\widehat H^{\bof}_\ve$ and $H^{\bof}_\ve$ and the difference between their resolvents (see Lemma \ref{l:resolvent-difference} for the detailed statements).  Note that for technical reasons, in Sections \ref{s:reduction}, \ref{s:5} and \ref{s:6} we prefer to work with positive definite quantities. To this end, we shift quadratic forms and operators by $\alpha^2$. Specifically, all operators denoted by  $H$ and $\Lscr$ -- regardless of superscripts or subscripts -- satisfy the relation  $\Lscr = H +\alpha^2$. An analogous convention holds for quadratic forms:  $\QQ = \BB+\alpha^2$ and $q = b+\alpha^2$. Translating the results back to the original setting is straightforward.

\par
Next we need to identify the effective Hamiltonian for the heavy particles subsystem. The operator $\widehat H^{\bof}_\ve$ can be written as a direct sum with the following spectrum:
\[\widehat H^{\bof}_\ve = \widehat H^{\bof}_{\ve,\PP}\oplus \widehat H^{\bof}_{\ve,\PP^\perp}\,,\qquad
\sigma(\widehat H^{\bof}_{\ve,\PP})  \subseteq[-\alpha^2,+\infty)\,,\qquad
\sigma(\widehat H^{\bof}_{\ve,\PP^\perp}) \subseteq[-\alpha^2/4,+\infty)\,.
\] 
For the study of the eigenvalues at the bottom of the spectrum (near $-\alpha^2$), the most relevant operator is $\widehat H^{\bof}_{\ve,\PP}$. We observe that, by means of a unitary map, the Hamiltonian  $\widehat H^{\bof}_{\ve,\PP}$ can be reduced to an effective  one dimensional operator on the heavy particles subsystem 
\[      D(H^{\eff\, b/f }_\ve) =H^{2}(\RE)\cap L^{2}_{\bof}(\RE)\qquad  
       H^{\eff\, b/f}_\ve  
       = -\ve^2 \frac{d^2}{dx^2}-\lambda_0 + \ve^2 \,R,
  \]
where $-\lambda_0(x)$ is the lowest eigenvalue of $h_x$ and $ \ve^2 R(x) := \ve^2 \int_\rr |\partial_x \psi^{BO}(x,y)|^2 dy$ is a perturbative potential term (see Remark \ref{r:UBO}). We remark that unlike the smooth potential case, $\ran(\PP| D( H^{\bof}_{\ve}))  \nsubseteq D( H^{\bof}_{\ve})$, because $\mathcal{P} \phi$ does not satisfy the boundary condition \eqref{explicitBC}. For this reason, contrarily to what is done in \cite{Krejcirik-etal-m2018},  it is not possible to identify $\widehat H^{\bof}_{\ve,\PP}$ with the compression $\PP H^{\bof}_{\ve}\PP$, which is not well defined. 

Finally, to conclude the proof of the second part of Theorem \ref{th:main}, we prove that for any  fixed  integer $n\geq0$ there exists $\ve>0$ sufficiently small such that $H^{\eff\, \bos/\fer}_{\ve}$ has at least $(n+1)$  simple isolated eigenvalues which satisfy 
\[
            \eigenvalueH^{\eff\, \bos/\fer}_{\ve, k} = -\alpha^2 + s^{\bos/\fer }_{k}\alpha^2 \ve^{2/3} + O (\ve)  \, , \qquad \text{for all }k=0,\dots, n,
\]
where $s^{\bof}_{k}$ are the same numbers as in Theorem \ref{th:main}. This result follows immediately from Theorem \ref{th:eigenvalues} shifting the spectrum by the constant $\alpha^2$. The proof of Theorem \ref{th:eigenvalues} is based on the seminal paper \cite{Sim}. We remark that with  respect to the case studied in \cite{Sim}, where the potential is smooth, in our analysis the potential term $-\lambda_0$ is only piecewise smooth, it is continuous, but not differentiable at  $x=0$. Additionally,  $-\lambda_0$ is linear around $x=0$, rather than quadratic, as in the smooth case. For this reason the eigenvalue expansion involves zeroes and extrema of the Airy function, a result already pointed out in \cite{AT1} and \cite{DucheneRaymond14}.  
\\

The paper  is organized as follows. In Section \ref{s:Hamiltonian}, we characterize the Hamiltonian $H^{\bof}_\ve$ and its essential spectrum. In Section \ref{s:light}, we study the Hamiltonian of the light particle, denoted by $h_x$, and we give a complete description of its spectrum. In Section \ref{s:reduction}, we carry out the dimensional reduction following the approach of \cite{Krejcirik-etal-m2018}. In Section \ref{s:5}, we analyze the effective Hamiltonian $\Lscr^{\eff,\bof}_\ve := H^{\eff,\bof}_\ve + \alpha^2$ and its eigenvalues below its essential spectrum. Finally, Section \ref{s:6} contains the proof of Theorem \ref{th:main}.

\subsection*{Notation}
\begin{itemize}
\item[-] $\x = (x,y) \in \RE^2$.
\item[-] $\k = (k,p) \in \RE^2$.
\item[-] We denote by $\Fou f$ or $\hat f $ the Fourier transform of $f$, defined as:
\[
\hat \psi (\k) := \frac{1}{2\pi} \int_{\RE^2} e^{- i\k \cdot\x} \psi(\x) \d\x ,
\]
or 
\[
\hat  \xi (p) := \frac{1}{\sqrt{2\pi}} \int_{\RE} e^{- ipy } \xi(y)  \d y.
\]
The inverse Fourier transform is denoted by   $\Fou^{-1} f$ or $\check f$. 
\item[-] The $L^2(\RE^n)$-norm and scalar product are denoted simply by $\|\cdot\|$ and  $\langle\cdot,\cdot\rangle$ respectively; norm and scalar product in different Hilbert spaces are denoted by an appropriate  subscript.
\item[-] The symbol $\bosfer$ denotes either $\bos$ (bosonic) or $\fer$ (fermionic), with the corresponding value $\bosferpm_\bosfer$ defined as $\bosferpm_\bos = +$ and $\bosferpm_\fer = -$.
\item[-] $H^\nu(\RE^n)$,  $\nu\in\RE $, denotes the usual Sobolev space of order $\nu$;
$H^\nu_{\bosfer}(\RE^2) := H^\nu(\RE^2) \cap L^2_{\bosfer}( \RE^2)$, $\nu>0$.
\item[-] $H^{\nu-}:=\cap_{a <\nu}H^{a}$.
\item[-] $\langle\cdot,\cdot\rangle_{\mp\nu,\pm\nu}$ denotes the (anti-linear with respect to the first variable) $H^{\pm\nu}(\RE^{n})$-$H^{\mp\nu}(\RE^{n})$ duality pairing extending the  scalar product in $L^{2}(\RE^{n})$.
\item[-]  $\SS(\RE^{n})$ denotes the space of Schwartz functions.
\item[-] $\Bou(X,Y)$ denotes the Banach space of bounded linear operators between the two Hilbert spaces $X$ and $Y$; we use the shorthand notation $\Bou(X) = \Bou(X,X)$. The operator norm is denoted by $\|\cdot\|$.
\item[-] $D(L)$ denotes the domain of the linear operator $L$; $\ker(L)$, $\ran(L)$ denote its kernel and range respectively.
\item[-] $\rho(L)$, $\sigma(L)$ denote the resolvent and the spectra of $L$. 
\item[-] $\sigma_{ac}(L)$, $\sigma_{sc}(L)$, $\sigma_{ess}(L)$, $\sigma_{p}(L)$, $\sigma_{d}(L)$ denote the absolutely continuous, singular continuous, essential, point and discrete spectra of $L$.
\item[-] $L|V$ denotes the restriction of the linear operator $L$ to the linear subspace $V\subset D(L)$.
\item[-] If $\mathcal Q$ is a sesquilinear form in a Hilbert space $X$, we use the same notation for the associated quadratic form $\mathcal Q(\psi) = \mathcal Q(\psi,\psi)$.  
\item[-] $C^{\infty}_{0}(\RE^n)$ denotes the set of smooth and compactly supported functions from $\RE^n$ to $\CO$. 
\item[-] $\lambda_{+}:=\max\{0,\lambda\}$.
\end{itemize}

\section{ $H^\bosfer_\ve$ as a self-adjoint extension}\label{s:Hamiltonian}
In this Section we start working in the usual Hilbert space $L^2(\RE^2)$. 
Our first aim is to construct the resolvent of a one-parameter  family of self-adjoint extensions of the restriction of the free Hamiltonian 
\[
D(H^{0}_{\ve}) := H^2(\RE^2)\;,\qquad
H^{0}_{\ve} :=-\ve^2 \partial^2_{x}-\partial^2_{y} , 
\]
to the subspace of functions vanishing on the  contact subset $\Pi= \Pi_1\cup \Pi_2$.  This family models interacting Hamiltonians describing zero-range interactions  between the particle $1$ and $3$, and between the particle $2$ and $3$, without taking into account either bosonic or fermionic symmetry. Successively, in Section \ref{ss:2.2},  we compress such self-adjoint Hamiltonians onto the subspace 
\[
L^2_{\bosfer}(\RE^2) := \{\psi\in L^2(\RE^2):\; \psi(x,y) =(+)_{\bosfer} \,\psi(-x,y) \}\,,
\]
where 
$\bosfer={\bos}$ in the bosonic case,  $\bosfer={\fer}$ in the fermionic case, and $(+)_{\bos}=+$, $(+)_{\fer}=-\,$. 
This procedure gives  Hamiltonians modeling the same zero-range interactions with the additional constraint that 
particles $1$ and $2$ are either bosons or fermions. Such Hamiltonians correspond to sesquilinear forms in $L^{2}_{\bosfer}(\RE^{2})$, with domain $H^{1}_{\bosfer}(\RE^{2})=H^{1}(\RE^{2})\cap L^{2}_{\bosfer}(\RE^{2})$, of the kind 
\begin{equation*}
\BB^\bosfer_{\ve}(\varphi, \psi):= \int_{\RE^2} \ve^2 {\partial_x \overline\varphi(x,y)}\partial_x \psi(x,y) + {\partial_y \overline\varphi(x,y)} \partial_y \psi(x,y) \d\x +2 \alpha
 \int_{\RE}  {\overline\varphi(s,s/2)} \, \psi(s,s/2)  \d s.
\end{equation*}

\subsection{Building a resolvent.}
For any  $\phi \in \SS(\RE^2)$ we define $\tau_1$ (resp. $\tau_2$) as the trace of $\phi$ on $\Pi_1$ (resp. $\Pi_2$); explicitly,
\[
(\tau_1\phi)(s) := \phi(s, s/2) \,, \qquad (\tau_2\phi)(s) := \phi(-s,   s/2)\,, \qquad s\in\RE .
\]
The maps $\tau_1$ and $\tau_2$ have unique extensions to bounded and surjective linear operators (which we denote by the same symbols)  
\[
\tau_1: H^\nu(\RE^2) \to H^{\nu-1/2}(\RE) \,, \qquad
\tau_2:H^\nu(\RE^2)\to H^{\nu-1/2}(\RE) 
\]
for any $\nu>1/2$. Note that  $C_0^\infty(\RE ) \subset\ran (\tau_j)$ and $C_0^\infty(\RE^2\backslash\Pi_{j} ) \subset\ker (\tau_j)$ are both dense in $L^2(\RE^2)$, $j=1,2$.

We define  the bounded linear operator 
\[
\Tau : H^\nu(\RE^2) \to H^{\nu-1/2}(\RE) \oplus H^{\nu-1/2}(\RE)\,,\qquad\Tau \phi := \tau_1\phi\oplus\tau_2\phi\,,\qquad \nu >1/2 \,.
\]
Note that $C_0^\infty(\RE\backslash\{0\})\times C_0^\infty(\RE\backslash\{0\}) \subset \ran(\Tau)$ is dense in  $L^2(\RE)\oplus L^2(\RE)$ and $C_0^\infty(\RE^2\backslash\Pi ) \subset \ker (\Tau)$ is dense in  $L^2(\RE^2)$.
\begin{remark}\label{r:Tau} The linear operator $
\Tau$ is  not surjective whenever $\nu>1$; indeed, 
\[
\ran(\Tau)\subseteq \big\{\xi_1\oplus\xi_2 \in H^{\nu-1/2}(\RE)\oplus H^{\nu-1/2}(\RE): \,\xi_1(0) = \xi_2(0)\big\}\subsetneqq H^{\nu-1/2}(\RE)\oplus H^{\nu-1/2}(\RE)\,.
\]
Therefore, our successive construction of a self-adjoint extension of the symmetric operator $H^{0}_{\ve}|\ker(\Tau)$ does not embed into the framework of standard boundary triples theory. In particular, this prevents the use of the spectral results given, e.g., in \cite[Theorem 3.3]{BGP}. We provide analogous results adapted to our framework in Lemma \ref{l:spectrum} below.
\end{remark}

Recalling that the resolvent set $\rho(H^{0}_{\ve})$ of the self-adjoint operator $H^{0}_{\ve}$, is $\CO\backslash [0,+\infty)$, for any $z \in\CO\backslash [0,+\infty)$ we define the resolvent operator 
\[
R^{0}_{\ve}(z): = (H^{0}_{\ve} - z)^{-1}\,.
\] Obviously 
\[
R^{0}_{\ve}(z) : H^{\nu}(\RE^2) \to H^{\nu+2}(\RE^2)\,,\qquad \nu\ge0\,,
\]
is a continuous bijection for all $z\in\CO\backslash [0,+\infty)$ and it extends to a continuous bijection (which we denote by the same symbol),
\[
R^{0}_{\ve}(z) : H^{\nu}(\RE^2) \to H^{\nu+2}(\RE^2)\,,\qquad \nu<0\,.
\]
For any $z\in\CO\backslash [0,+\infty)$ we define the  bounded operator 
\begin{equation*}
\breve \Gb_{\ve}(z) : H^\nu(\RE^2) \to H^{\nu+3/2}(\RE) \oplus  H^{\nu+3/2}(\RE) \,,\qquad \breve \Gb_{\ve}(z) := \Tau R^{0}_{\ve}(z)\,,\qquad \nu>-3/2\,.
\end{equation*}
One has 
\[
\breve \Gb_{\ve}(z) \psi=   \breve G_{1,\ve} (z)\psi\oplus\breve G_{2,\ve}(z)\psi\, ; \qquad \breve G_{j,\ve}(z) :  H^\nu(\RE^2) \to H^{\nu+3/2}(\RE)\,, \qquad  \breve G_{j,\ve}(z) = \tau_j R^{0}_{\ve}(z)\,,  \quad j =1,2.
\]
Note  that $\ran(\breve G_{j,\ve}(z)) = H^{\nu+3/2}(\RE)$; however, by Remark \ref{r:Tau}, $\ran(\breve \Gb_{\ve}(z))\subsetneqq H^{\nu+3/2}(\RE)\oplus H^{\nu+3/2}(\RE)$ whenever $\nu>-1$.

We define the bounded operator 
\begin{equation*}
 \Gb_{\ve}(z) : = \breve \Gb_{\ve}(\bar z)^* : H^{\nu}(\RE) \oplus  H^{\nu}(\RE)  \to  H^{\nu+3/2}(\RE^2)\,,\qquad \nu<0\,,
\end{equation*}
where the adjoint is defined in terms of the $H^{-\nu}(\RE^{d})$-$H^{\nu}(\RE^{d})$  duality (taken to be anti-linear with respect to the first variable) which extends the $L^{2}(\RE^{d})$ scalar product.
We  remark that $\Gb_{\ve}(z)$ is also represented as 
\begin{equation*}
\Gb_{\ve}(z) (\xi_{1}\oplus\xi_{2})= G_{1,\ve}(z)\xi_{1}+G_{2,\ve}(z)\xi_{2}; \qquad G_{j,\ve}(z) : H^\nu(\RE)\to  H^{\nu+3/2}(\RE^2),  \qquad  G_{j,\ve}(z) = \breve G_{j,\ve}(\bar z)^*\,.
\end{equation*}
In particular, for all  $z\in\CO\backslash [0,+\infty)$ and for all $\nu<3/2$ there holds 
\begin{equation*}
G_{j,\ve}(z) \in\Bou( L^2(\RE), H^{\nu}(\RE^2))\,,\qquad \Gb_{\ve}(z) \in\Bou( L^2(\RE)\oplus L^2(\RE), H^{\nu}(\RE^2))\,,
\end{equation*}
and so
\begin{equation}\label{ran}
\ran(G_{j,\ve}(z)|L^2(\RE))\subseteq H^{3/2-}(\RE^2)\,,\qquad \ran(\Gb_{\ve}(z)| L^2(\RE)\oplus L^2(\RE))\subseteq H^{3/2-}(\RE^2)\,,
\end{equation}
where $H^{3/2-}(\RE^2):=\cap_{\nu<3/2}H^{\nu}(\RE^2)$.

\begin{remark}\label{r:GcapDH}Since $\ker (\Tau)$ is dense in $L^2(\RE^2)$, there follows that 
\begin{equation}\label{empty_inter}
\ran(\Gb_\ve(z)|L^{2}(\RE) \oplus  L^{2}(\RE) ) \cap D(H^{0}_{\ve}) = \{0\},
\end{equation}
see \cite[Rem. 2.8, 2.9 and Theorem 2.1]{Pjfa01}. Indeed, suppose that \eqref{empty_inter} is false. Then there exist $Q \in (L^{2}(\RE)\backslash\{0\}) \oplus  (L^{2}(\RE)\backslash\{0\}) $ and $\phi\in L^2(\RE^2)\backslash\{0\}$,  such that $R^{0}_{\ve}(z)  \phi = \Gb_\ve(z) Q$. Hence, for all  $\psi\in L^2(\RE^2)$ one would have 
\[
\langle\phi, R^{0}_{\ve}(\bar z)  \psi \rangle_{L^2(\RE^2)} = \langle Q, \Tau R^{0}_{\ve}(\bar z) \psi \rangle_{ L^{2}(\RE) \oplus  L^{2}(\RE) }. 
\]
Since, $R^{0}_{\ve}(\bar z)$ maps $L^2(\RE^2)$  onto $H^2(\RE^2)$ and the set $\{f\in H^2(\RE^2): \, \Tau f = 0\}$ is dense in $L^2(\RE^2)$, there follows $\phi =0$, leading to a contradiction. 
\end{remark}

\begin{remark}\label{r:op-in-fou-transf}We note the following expressions for  the integral kernels of the relevant operators in Fourier transform:
\[
R^{0}_{\ve}(\x,\x';z)  =\frac{1}{(2\pi)^2}  \int_{\RE^2} \frac{ e^{i\k\cdot (\x -\x') }}{\ve^2 k^2 +p^2  - z}   \dk ;
\]
\[
\breve G_{1,\ve}( s,\x'; z)  =   \frac{1}{(2\pi)^2}  \int_{\RE^2} \frac{ e^{i (k + \frac{p}2) s} e^{- i\k\cdot \x' }}{\ve^2 k^2 +p^2  - z}   \dk ;
\qquad 
\breve G_{2,\ve}( s,\x'; z)  =   \frac{1}{(2\pi)^2}  \int_{\RE^2} \frac{ e^{i (-k + \frac{p}2) s} e^{- i\k \cdot \x' }}{\ve^2 k^2 +p^2  - z}   \dk ;
\]
\[
G_{1,\ve}( \x, s'; z)  =   \frac{1}{(2\pi)^2}  \int_{\RE^2} \frac{e^{i\k \cdot \x }  e^{-i (k + \frac{p}2) s'} }{\ve^2 k^2 +p^2  - z}   \dk ; 
\qquad 
G_{2, \ve}( \x, s'; z)  =   \frac{1}{(2\pi)^2}  \int_{\RE^2} \frac{e^{i\k \cdot \x }  e^{-i (-k + \frac{p}2) s'} }{\ve^2 k^2 +p^2  - z}   \dk . 
\]
\end{remark}

By the properties of the operators $\tau_1$, $\tau_2$ and $\Tau$, and by the mapping properties of the operators $\Gb_{\ve}(z)$ and $G_{j,\ve}(z)$ there follows that the following operators are well defined and bounded for any $z\in\CO\backslash [0,+\infty)$: 
\[
M_{\ell j,\ve}(z) :  H^{\nu}(\RE) \to  H^{\nu+1}(\RE)\;,\qquad  M_{\ell j,\ve}(z)  :=  \tau_{\ell} G_{j,\ve}(z) \qquad \ell,j = 1,2;\qquad \nu>-1\, .
\]
and 
\[
\Mb_{\ve}(z) :  H^{\nu}(\RE)\oplus H^{\nu}(\RE) \to  H^{\nu+1}(\RE)\oplus H^{\nu+1}(\RE)\;,\qquad  \Mb_{\ve}(z)  := \Tau \Gb_{\ve}(z)\,,\qquad \nu>-1\, .
\]
In particular, $M_{\ell j,\ve}(z) \in \Bou(L^2(\RE))$  and $\Mb_{\ve}(z) \in \Bou(L^2(\RE)\oplus L^2(\RE))$. 

$\Mb_{\ve}(z) $ can be represented as the block operator matrix 
\[
\Mb_{\ve}(z)   =	\begin{bmatrix}
			M_{11,\ve}(z) & M_{12,\ve}(z)\\ 
			M_{21,\ve}(z) &M_{22,\ve}(z)
			\end{bmatrix}.
\]
\begin{lemma}\label{rem}
For any $z,w\in \CO\backslash[0,+\infty)$ there holds
\begin{equation}\label{resid2}
\breve \Gb_{\ve} (z)  - \breve \Gb_\ve (w) = (z-w) \breve \Gb_{\ve} (z) R^{0}_{\ve} (w) = (z-w)\breve \Gb_{\ve} (w)R^{0}_{\ve} (z)\,,
\end{equation}
\begin{equation}\label{resid3}
 \Gb_{\ve} (z)  -  \Gb_\ve (w) = (z-w)  R^{0}_{\ve} (w)  \Gb_{\ve} (z) = (z-w)  R^{0}_{\ve} (z)\Gb_{\ve} (w)\,,
\end{equation}
\begin{equation}\label{Mb_prop2}
 \Mb_{\ve} (z)  -  \Mb_\ve (w) = (z-w)  \breve \Gb_{\ve} (w)  \Gb_{\ve} (z) = (z-w)  \breve \Gb_{\ve}  (z)\Gb_{\ve} (w).
\end{equation}
\end{lemma}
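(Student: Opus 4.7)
The plan is to deduce all three identities from the standard first resolvent identity for $R^0_\ve$, namely
\[
R^0_\ve(z) - R^0_\ve(w) = (z-w) R^0_\ve(z) R^0_\ve(w) = (z-w) R^0_\ve(w) R^0_\ve(z),
\]
which holds on $L^2(\RE^2)$ and, by the mapping properties $R^0_\ve(z): H^\nu(\RE^2)\to H^{\nu+2}(\RE^2)$ for every $\nu\in\RE$, extends continuously to arbitrary Sobolev scales. The three identities of Lemma \ref{rem} are obtained, in order, by (i) composing on the left with $\Tau$, (ii) taking adjoints, and (iii) composing the result of (ii) again with $\Tau$.

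For \eqref{resid2}, I would apply $\Tau$ on the left of the first resolvent identity. Using $\breve\Gb_\ve(z)=\Tau R^0_\ve(z)$ and the linearity/continuity of $\Tau: H^\nu(\RE^2)\to H^{\nu-1/2}(\RE)\oplus H^{\nu-1/2}(\RE)$ (for $\nu>1/2$), this immediately yields
\[
\breve\Gb_\ve(z) -\breve\Gb_\ve(w) = (z-w)\,\Tau R^0_\ve(z) R^0_\ve(w)=(z-w)\,\breve\Gb_\ve(z)R^0_\ve(w),
\]
and symmetrically with $z$ and $w$ swapped. The composition on the right hand side is well defined because $R^0_\ve(w)$ raises regularity by two, so the output lies in a Sobolev space on which $\breve\Gb_\ve(z)$ acts boundedly.

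For \eqref{resid3}, I would use that $\Gb_\ve(z)=\breve\Gb_\ve(\bar z)^*$ and $R^0_\ve(w)^*=R^0_\ve(\bar w)$ (the latter holding because $H^0_\ve$ is self-adjoint). Taking the adjoint of \eqref{resid2} evaluated at $\bar z,\bar w$ gives
\[
\Gb_\ve(z) -\Gb_\ve(w) = \overline{(\bar z-\bar w)}\,\big(\breve\Gb_\ve(\bar z) R^0_\ve(\bar w)\big)^* = (z-w)\,R^0_\ve(w)\Gb_\ve(z),
\]
and analogously for the other equality; the duality pairings appearing here are exactly the ones used in the definition of the adjoint, so the identity first holds as an equation between bounded operators on $L^2$ and then extends to the Sobolev scale stated in the definitions of $\Gb_\ve(z)$.

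For \eqref{Mb_prop2}, I would apply $\Tau$ on the left of \eqref{resid3}: since $\Mb_\ve(z)=\Tau\Gb_\ve(z)$, this yields
\[
\Mb_\ve(z)-\Mb_\ve(w) = (z-w)\,\Tau R^0_\ve(w)\Gb_\ve(z) = (z-w)\,\breve\Gb_\ve(w)\Gb_\ve(z),
\]
and similarly with the roles of $z,w$ exchanged. I do not foresee any serious obstacle; the only point requiring some care is checking, at each step, that the domains and codomains of the composed operators line up, i.e.\ that the intermediate Sobolev indices fall within the ranges declared for $R^0_\ve$, $\breve\Gb_\ve$ and $\Gb_\ve$, so that the identities hold as equalities of bounded operators and not merely formally.
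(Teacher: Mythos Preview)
Your proposal is correct and follows essentially the same approach as the paper: apply $\Tau$ to the first resolvent identity to obtain \eqref{resid2}, take adjoints (evaluated at $\bar z,\bar w$) to obtain \eqref{resid3}, and apply $\Tau$ once more to obtain \eqref{Mb_prop2}. Your extra remarks about checking that the Sobolev indices line up are sound but not needed for the level of detail the paper provides.
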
 
\begin{proof} The relations \eqref{resid2} follow by applying $\Tau$ to the  resolvent identity
\begin{equation}\label{resid1}
R^{0}_{\ve} (z)  - R^{0}_{\ve} (w) = (z-w) R^{0}_{\ve} (z)R^{0}_{\ve} (w) = (z-w)R^{0}_{\ve} (w)R^{0}_{\ve} (z)\,.
\end{equation}
The relations \eqref{resid3} follow by evaluating \eqref{resid2} in $\bar z$ and $\bar w$ and taking the adjoint.
Finally,  by applying $\Tau$ to \eqref{resid3}, one obtains \eqref{Mb_prop2}.
\end{proof}
From now on, the operators $\Gb_{\ve}  (z)$, $\breve \Gb_{\ve} (z)$ $\Mb_{\ve} (z)$ are to be intended as bounded operators acting from $L^{2}$-spaces to $L^{2}$-spaces and with the previously stated regularity properties whenever restricted to smaller subspaces.\par

We note the following expressions for  the integral kernels of the  operators $M_{\ell j,\ve}(z)$ in Fourier transform  (as usual we denote the integral kernels and the operators by the same symbol):
\[
M_{11,\ve}( s, s'; z)  =   \frac{1}{(2\pi)^2}  \int_{\RE^2} \frac{  e^{i (k + \frac{p}2) (s-s')} }{\ve^2 k^2 +p^2  - z}   \dk ; 
\qquad 
M_{12, \ve}( s, s'; z)  =   \frac{1}{(2\pi)^2}  \int_{\RE^2} \frac{e^{i (k + \frac{p}2) s}  e^{-i (-k + \frac{p}2) s'} }{\ve^2 k^2 +p^2  - z}   \dk ;
\]
\[
M_{21,\ve}( s, s'; z)  =   \frac{1}{(2\pi)^2}  \int_{\RE^2} \frac{e^{i (-k + \frac{p}2) s} e^{-i (k + \frac{p}2) s'} }{\ve^2 k^2 +p^2  - z}   \dk ; 
\qquad 
M_{22,\ve}( s, s'; z)  =   \frac{1}{(2\pi)^2}  \int_{\RE^2} \frac{ e^{i (-k + \frac{p}2) (s-s')} }{\ve^2 k^2 +p^2  - z}   \dk . 
\]
Indeed, by the changing variable $k \to -k$, it is easy to convince oneself that  $M_{11,\ve}( s, s'; z)  = M_{22,\ve}( s, s'; z)  $ and $M_{12, \ve}( s, s'; z)  =M_{21,\ve}( s, s'; z)$,  for this reason we introduce the notation 
\[
M_{d,\ve}( s, s'; z) := M_{11,\ve}( s, s'; z) = M_{22,\ve}( s, s'; z)  =   \frac{1}{(2\pi)^2}  \int_{\RE^2} \frac{  e^{i (k + \frac{p}2) (s-s')} }{\ve^2 k^2 +p^2  - z}   \dk ; 
\]
\[
M_{od,\ve} ( s, s'; z) := M_{12, \ve}( s, s'; z)  =M_{21,\ve}( s, s'; z)    =   \frac{1}{(2\pi)^2}  \int_{\RE^2} \frac{e^{i (k + \frac{p}2) s}  e^{-i (-k + \frac{p}2) s'} }{\ve^2 k^2 +p^2  - z}   \dk .
\]
The suffixes ``$d$'' and ``$od$'' stand for ``{\it diagonal}'' and ``{\it off-diagonal}'' respectively. With this notation one has 
\[
\Mb_{\ve}(z)   =	\begin{bmatrix}
			M_{d,\ve}(z) & M_{od,\ve}(z)\\ 
			M_{od,\ve}(z) &M_{d,\ve}(z)
			\end{bmatrix}.
\]
We point out the following identities, $z\in\CO\backslash[0,+\infty)$:  
\begin{equation}\label{Md_eq1}
\langle\xi, M_{d,\ve}(z) \xi\rangle =  \frac{1}{2\pi}  \int_{\RE^2} \frac{  |\hat \xi (k+p/2)|^2}{\ve^2 k^2 +p^2  -z}   \dk  =\langle M_{d,\ve}(\bar z)\xi,\xi\rangle \,,
\end{equation}
\begin{equation}\label{Mod_eq1}
\langle\eta, M_{od,\ve}(z) \xi\rangle =   \frac{1}{2\pi}  \int_{\RE^2} \frac{  \overline{\hat \eta (k+p/2)} \hat \xi (-k+p/2) }{\ve^2 k^2 +p^2  -z }   \dk 
=\langle M_{od,\ve}(\bar z)\eta, \xi\rangle . 
\end{equation}
Since for all $\Xi,\widetilde \Xi \in L^2(\RE)\oplus L^2(\RE)$ there holds 
\begin{equation}\label{3.13a}
\langle\widetilde \Xi, \Mb_{\ve}(z)\Xi\rangle_{L^2(\RE^2)\oplus L^2(\RE)} = \langle\tilde \xi_1, M_{d,\ve}(z) \xi_1\rangle+\langle\tilde \xi_2, M_{d,\ve}(z) \xi_2\rangle + 
\langle\tilde \xi_2,  M_{od,\ve}(z)\xi_1\rangle+ \langle\tilde \xi_1,  M_{od,\ve}(z)\xi_2\rangle, 
\end{equation}
we deduce that 
\begin{equation}\label{Mb_prop1}
\Mb_{\ve}(z) = \Mb_{\ve}(\overline z)^* \qquad z\in\CO\backslash [0,+\infty)
\end{equation}
(we remark that this property follows by construction since $\Mb_{\ve}(z)  =  \Tau (\Tau R^{0}_{\ve}(\overline z))^* =   \Tau  R^{0}_{\ve} (z) \Tau^*$).

Note that the bounded operator 
\[
\frac1{\alpha }+\Mb_{ \ve}(z) :  L^2(\RE)\oplus L^2(\RE) \to  L^2(\RE)\oplus L^2(\RE) , \qquad 
\qquad \alpha \in  \RE\backslash\{0\} ,\; z\in \CO\backslash[0,+\infty)
\]
enjoys the same properties \eqref{Mb_prop2} and \eqref{Mb_prop1} as $\Mb_{\ve}(z)$.

In the forthcoming Proposition \ref{p:invertibility} we will prove that for $\lambda$ large enough $\frac1{\alpha }+\Mb_{ \ve}(-\lambda)$ is invertible with bounded inverse. In the proof we will use a well known result  recalled in the following remark.
\begin{remark}\label{r:invertibility} Let $\Hil$ be a  Hilbert space and $T:D(T)\subset 
\Hil \to \Hil$ be a self-adjoint operator. If there exists a positive constant $c$ such that 
\begin{equation}\label{coercivity0}
\left|\langle u, T u \rangle_{\Hil }\right| \geq c \|u\|^2_{\Hil }\qquad \forall u \in D(T), 
\end{equation}
then, $T$ is injective, surjective, and has (self-adjoint) inverse bounded by $1/c$. To see that this indeed the case, note that, by the Cauchy-Schwarz inequality, Eq. \eqref{coercivity0} implies 
\begin{equation}\label{coercivity2}
\|Tu\|_{\Hil } \geq c \|u\|_{\Hil} \qquad \forall u \in D(T) .
\end{equation}
So $T$  is injective, which in turn implies $\overline{\ran(T)} =\Hil$. Hence, for any $v\in\Hil $, there exists a sequence $\{u_n\} \in D(T) $ such that $\|T u_n  - v\|_{ \Hil } \to 0$. By the lower bound \eqref{coercivity2}, there follows that $\{u_n\}$ is a Cauchy sequence in $\Hil $ and so  it converges to $u_v\in \Hil $. Since $T$ is closed,  $u_v \in D(T)$ and $Tu_v = v$. Hence, $T$ is also surjective, and has inverse bounded by $1/c$.
\end{remark}

\begin{proposition}\label{p:invertibility}
Let $\alpha   \in \RE\backslash\{0\}$ and $\ve>0$, then there exists $\lambda_{\alpha,\ve}>0$  such that for all $\lambda > \lambda_{\alpha,\ve}$ the  operator $\frac1{\alpha }+\Mb_{ \ve}(-\lambda)$ is invertible in $L^2(\RE)$ with bounded inverse.  
\end{proposition}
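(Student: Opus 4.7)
The strategy is to show that $\Mb_{\ve}(-\lambda)$ is small in operator norm for $\lambda$ large, so that $\frac{1}{\alpha}+\Mb_{\ve}(-\lambda) = \frac{1}{\alpha}\bigl(\Ib+\alpha\,\Mb_{\ve}(-\lambda)\bigr)$ becomes a small perturbation of the invertible multiple of the identity, and one concludes by a Neumann series argument. Equivalently, for $\lambda$ large enough $\frac{1}{\alpha}+\Mb_{\ve}(-\lambda)$ satisfies the coercivity bound \eqref{coercivity0} with $c=\frac{1}{2|\alpha|}$, so Remark \ref{r:invertibility} applies.

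The first step is the explicit estimate of $M_{d,\ve}(-\lambda)$ in Fourier variables. Starting from \eqref{Md_eq1}, I would perform the change of variable $q=k+p/2$ (with $dk\,dp=dq\,dp$) and then evaluate the resulting Gaussian-type $p$-integral by completing the square. A direct computation yields the closed form
\[
\langle\xi,M_{d,\ve}(-\lambda)\xi\rangle = \frac{1}{2}\int_{\RE}\frac{|\hat\xi(q)|^{2}}{\sqrt{\ve^{2}q^{2}+\lambda(1+\ve^{2}/4)}}\,dq ,
\]
which shows that $M_{d,\ve}(-\lambda)$ is a positive Fourier multiplier whose operator norm is bounded by
\[
\|M_{d,\ve}(-\lambda)\|_{\Bou(L^{2}(\RE))}\le \frac{1}{2\sqrt{\lambda(1+\ve^{2}/4)}}.
\]

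The second step is to bound $M_{od,\ve}(-\lambda)$. Using Parseval, this operator is unitarily equivalent (via $\Fou$) to an integral operator with kernel
\[
K_{\ve,\lambda}(u,v)=\frac{1}{2\pi}\cdot\frac{1}{\frac{\ve^{2}(u-v)^{2}}{4}+(u+v)^{2}+\lambda},
\]
obtained from \eqref{Mod_eq1} through the change of variables $u=k+p/2$, $v=-k+p/2$ (whose Jacobian is $1$). Completing the square in $v$ inside the denominator, one gets
\[
\tfrac{\ve^{2}}{4}(u-v)^{2}+(u+v)^{2}=(1+\ve^{2}/4)\bigl(v-c(u)\bigr)^{2}+\frac{\ve^{2}u^{2}}{1+\ve^{2}/4},
\]
for a suitable $c(u)$, and therefore
\[
\int_{\RE}K_{\ve,\lambda}(u,v)\,dv=\frac{1}{2\sqrt{\ve^{2}u^{2}+\lambda(1+\ve^{2}/4)}}\le \frac{1}{2\sqrt{\lambda(1+\ve^{2}/4)}}.
\]
By symmetry the same bound holds for $\int K_{\ve,\lambda}(u,v)\,du$, so the Schur test yields
\[
\|M_{od,\ve}(-\lambda)\|_{\Bou(L^{2}(\RE))}\le \frac{1}{2\sqrt{\lambda(1+\ve^{2}/4)}}.
\]

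The third step is to assemble the two estimates. Decomposing $\Mb_{\ve}(-\lambda)$ into its diagonal and off-diagonal block parts and using the triangle inequality (together with the fact that $\bigl\|\bigl(\begin{smallmatrix}0&A\\A&0\end{smallmatrix}\bigr)\bigr\|=\|A\|$), one obtains
\[
\|\Mb_{\ve}(-\lambda)\|_{\Bou(L^{2}(\RE)\oplus L^{2}(\RE))}\le \frac{1}{\sqrt{\lambda(1+\ve^{2}/4)}}.
\]
Choosing $\lambda_{\alpha,\ve}:=\frac{4\alpha^{2}}{1+\ve^{2}/4}$ makes the right-hand side strictly smaller than $\frac{1}{2|\alpha|}$ for $\lambda>\lambda_{\alpha,\ve}$, so the Neumann series for $\bigl(\Ib+\alpha\Mb_{\ve}(-\lambda)\bigr)^{-1}$ converges in $\Bou(L^{2}(\RE)\oplus L^{2}(\RE))$ and $\frac{1}{\alpha}+\Mb_{\ve}(-\lambda)$ is boundedly invertible, with inverse of norm at most $2|\alpha|$. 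The main subtlety is the off-diagonal estimate: the kernel $K_{\ve,\lambda}$ does not decouple, and one must carry out the change of variables carefully to bring it into a form amenable to a Schur test with explicit $\lambda$-dependence.
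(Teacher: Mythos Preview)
Your proof is correct and follows essentially the same strategy as the paper: bound the diagonal and off-diagonal blocks $M_{d,\ve}(-\lambda)$ and $M_{od,\ve}(-\lambda)$ separately to show $\|\Mb_{\ve}(-\lambda)\|=O(\lambda^{-1/2})$, then conclude by a Neumann series (equivalently, by the coercivity criterion in Remark~\ref{r:invertibility}). The one technical difference is in the treatment of $M_{od,\ve}(-\lambda)$: the paper bounds it by its Hilbert--Schmidt norm, obtaining $\|M_{od,\ve}(-\lambda)\|\le (2\min(\ve^{2},4)\lambda)^{-1/2}$, whereas you use the Schur test and get $\|M_{od,\ve}(-\lambda)\|\le \bigl(2\sqrt{(1+\ve^{2}/4)\lambda}\,\bigr)^{-1}$. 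Your bound is sharper and, notably, does not degenerate as $\ve\to 0$, so the resulting threshold $\lambda_{\alpha,\ve}=\tfrac{4\alpha^{2}}{1+\ve^{2}/4}$ is explicit and uniformly bounded in $\ve$, while the paper's threshold $(C_{\ve}\alpha)^{2}$ blows up like $\ve^{-2}$ for small $\ve$. For the purposes of the proposition this makes no difference, but the Schur-test route is a genuine improvement in the constants.
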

\begin{proof} 
By Remark \ref{r:invertibility}, it is enough to prove that there exists $\lambda_{\alpha,\ve}$ such that  for all $\lambda>\lambda_{\alpha,\ve}$ there holds 
\begin{equation*}
\left|\left\langle\Xi, \left(\frac1{\alpha }+\Mb_{ \ve}(-\lambda)\right)\Xi\right\rangle\right| \geq c_\lambda \|\Xi\|^2 \qquad \forall  \Xi\in L^2(\RE^2)\oplus L^2(\RE),
\end{equation*}
for some positive constant $c_\lambda$.
 
We point out the following identities (see Eqs. \eqref{Md_eq1} and \eqref{Mod_eq1}): 
\begin{equation*}
\langle\xi, M_{d,\ve}(-\lambda) \xi\rangle = \int_\RE |\hat \xi(\nu)|^2 \int_\RE \frac1\pi \frac{1}{\ve^2 k^2+4(\nu-k)^2+\lambda}\d k \,\d\nu  =\int_{\RE} \frac{|\hat \xi (\nu)|^2}{\sqrt{4\ve^2 \nu^2+(4+\ve^2)\lambda}}\d\nu;
\end{equation*}
\begin{equation*}
\langle\eta, M_{od,\ve}(-\lambda) \xi\rangle = 
\frac{2}{\pi} \int_{\RE}\int_{\RE} \frac{ \overline{\hat \eta (\nu)} \hat \xi (\nu')  }{(4+\ve^2)( \nu^2+{\nu'}^2)+2(4-\ve^2)\nu\nu' + 4\lambda}\d\nu\, \d\nu' .
\end{equation*}
Hence, in Fourier transform, $M_{d,\ve}(-\lambda)$ is the multiplication operator for 
\begin{equation}\label{Md-Fou}
\hat M_{d,\ve}(\nu;-\lambda)  =  \frac{1}{\sqrt{4\ve^2 \nu^2+(4+\ve^2)\lambda}}
\end{equation}
and $ M_{od,\ve}(-\lambda)$ is the operator with integral kernel 
\begin{equation*}
 \hat M_{od,\ve}(\nu, \nu';-\lambda) = 
\frac{2}{\pi} \frac{ 1}{(4+\ve^2)( \nu^2+{\nu'}^2)+2(4-\ve^2)\nu\nu' + 4\lambda} .
\end{equation*}
Both $M_{d,\ve}(-\lambda)$ and $ M_{od,\ve}(-\lambda)$ are bounded. Indeed,
\[
\|M_{d,\ve}(-\lambda)\| \leq  \frac{1}{\sqrt{(4+\ve^2)\lambda}};
\]
and  $M_{od,\ve}(-\lambda)$ is a Hilbert-Schmidt operator, hence it is also compact and bounded. To see that this is indeed the case notice that $(4+\ve^2)( \nu^2+{\nu'}^2)+2(4-\ve^2)\nu\nu'  \geq 2 \min(\ve^2,4)( \nu^2+{\nu'}^2) $, hence    
    \[
    \int_{\RE^2} |\hat M_{od,\ve}(\nu,\nu';-\lambda)|^2 \, \d\nu \d\nu' \leq \frac{4}{\pi^2} \left( \int_\RE \frac{1}{2\min(\ve^2,4) \nu^2 +4\lambda} \d\nu \right)^2 = \frac{1}{2\min(\ve^2,4) \lambda}.
    \]
So that 
\[
\|M_{od,\ve}(-\lambda)\| \leq \|M_{od,\ve}(-\lambda)\|_{HS}
\leq \sqrt{\frac{1}{2\min(\ve^2,4) \lambda}}.
\]
By Eq. \eqref{3.13a} there follows that   there exists $C_\ve>0$ such that    $\big|(\Xi, \Mb_{\ve}(-\lambda)\Xi)_{L^2(\RE^2)\oplus L^2(\RE)} \big| \leq C_\ve \|\Xi\|^2/\sqrt\lambda$. The latter bound gives 
\[
\left|\left\langle\Xi, \left(\frac1\alpha   + \Mb_{\ve}(-\lambda)\right)\Xi\right\rangle \right|\geq \left(\frac1{|\alpha |}  - \frac{C_\ve}{\sqrt{\lambda}}\right) \|\Xi\|^2 \geq c_\lambda \|\Xi\|^2 
\]
for $\lambda>\lambda_{\alpha,\ve}  =(C_\ve\alpha )^2 $, which concludes the proof. 
\end{proof}
Proposition \ref{p:invertibility} leads to the following (for notational simplicity, here and below we avoid to explicitly indicate the $\alpha$-dependence; notice that the case $\alpha =0$ gives the free Hamiltonian)

\begin{theorem}  \label{th:resolvent1}Let $\alpha  \in\RE$, $\ve>0$ and $z\in \CO\backslash[0,+\infty)$. Then the linear operator in $L^{2}(\RE^{2})$ defined by 
\[
D(\widetilde H_{\alve}) := \big\{ \phi \in H^{3/2-}(\RE^2): \phi+\alpha \, \Gb_{\ve}(z)\Tau\phi \in H^{2}(\RE^2) \big\}, 
\]
  \begin{equation}\label{act}
 (\widetilde H_{\alve} -z) \phi   := (H^{0}_{\ve} -z ) (\phi+\alpha \, \Gb_{\ve}(z)\Tau\phi )
  \end{equation}
is a $z$-independent, bounded-from-below, self-adjoint extension of the symmetric operator $H^{0}_{\ve}|\ker(\Tau)$. \par 
Furthermore, the operator $1+\alpha \,  \Mb_{\ve}(z): L^2(\RE) \oplus L^2(\RE)\to L^2(\RE)\oplus L^2(\RE)$ has a bounded inverse for all $z\in   
\rho(\widetilde H_{\alve})\cap \CO\backslash[0,+\infty)$ and the resolvent of   $\widetilde H_{\alve}$ is given by 
\[
\widetilde R_{\alve}(z) = R^{0}_{\ve}(z) - \alpha \,\Gb_{\ve}(z)\big(1+\alpha \,\Mb_{\ve}(z)\big)^{-1} \breve \Gb_{\ve}(z) \qquad  z\in \rho(\widetilde H_{\alve})\cap \CO\backslash[0,+\infty)\,.
\] 
\end{theorem}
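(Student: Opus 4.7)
The plan is to cast the statement as an instance of the general Krein-type construction of self-adjoint extensions developed in \cite{Pjfa01}, applied to the reference operator $H^{0}_{\ve}$, the trace-type map $\Tau$, the boundary operator $\Gb_{\ve}(z)$, and the scalar coupling $\alpha$ acting on $L^2(\RE)\oplus L^2(\RE)$. The non-surjectivity of $\Tau$ recorded in Remark \ref{r:Tau} is not an obstruction in this framework, as \cite{Pjfa01} does not require a surjective trace; the abstract intersection condition \eqref{empty_inter} has already been verified in Remark \ref{r:GcapDH}. First I would check that both $D(\widetilde H_{\alve})$ and the action in \eqref{act} are independent of $z\in\CO\backslash[0,+\infty)$: for any two such points $z,w$, identity \eqref{resid3} of Lemma \ref{rem} gives
\[
\alpha\,(\Gb_{\ve}(w)-\Gb_{\ve}(z))\Tau\phi = \alpha(w-z)\,R^{0}_{\ve}(w)\,\Gb_{\ve}(z)\Tau\phi \in H^2(\RE^2),
\]
since $\Gb_{\ve}(z)\Tau\phi\in L^2(\RE^2)$ by \eqref{ran} and $R^{0}_{\ve}(w):L^2\to H^2$; hence $\phi+\alpha\Gb_{\ve}(z)\Tau\phi\in H^2$ if and only if $\phi+\alpha\Gb_{\ve}(w)\Tau\phi\in H^2$, and a short computation using the same identity shows that the two candidate definitions of $\widetilde H_{\alve}\phi$ produced by choosing $z$ or $w$ coincide.

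Next I would invoke the abstract theory: the densely defined symmetric operator $S:=H^{0}_{\ve}|\ker\Tau$ (densely defined because $C^\infty_0(\RE^2\backslash\Pi)\subset\ker\Tau$) admits $\widetilde H_{\alve}$ as a self-adjoint extension whose resolvent has the stated Krein form on the set where $1+\alpha\Mb_{\ve}(z)$ is invertible. To make the resolvent formula usable I would verify directly, on both sides, that
\[
\widetilde R_{\alve}(z):=R^{0}_{\ve}(z)-\alpha\,\Gb_{\ve}(z)\bigl(1+\alpha\Mb_{\ve}(z)\bigr)^{-1}\breve\Gb_{\ve}(z)
\]
is a two-sided inverse of $\widetilde H_{\alve}-z$. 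The key algebraic step uses $\Tau\Gb_{\ve}(z)=\Mb_{\ve}(z)$ and the identity $\alpha\Mb_{\ve}(z)(1+\alpha\Mb_{\ve}(z))^{-1}=I-(1+\alpha\Mb_{\ve}(z))^{-1}$ to obtain
\[
\Tau\widetilde R_{\alve}(z)\psi = \bigl(1+\alpha\Mb_{\ve}(z)\bigr)^{-1}\breve\Gb_{\ve}(z)\psi\qquad\forall\,\psi\in L^2(\RE^2),
\]
from which $\widetilde R_{\alve}(z)\psi+\alpha\Gb_{\ve}(z)\Tau\widetilde R_{\alve}(z)\psi=R^{0}_{\ve}(z)\psi\in H^2(\RE^2)$ and hence $(\widetilde H_{\alve}-z)\widetilde R_{\alve}(z)\psi=\psi$; the reverse identity is obtained similarly by using that, for $\phi\in D(\widetilde H_{\alve})$, $\Tau(\phi+\alpha\Gb_{\ve}(z)\Tau\phi)=(1+\alpha\Mb_{\ve}(z))\Tau\phi$. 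Lower boundedness is then immediate from Proposition \ref{p:invertibility}: for $\lambda>\lambda_{\alpha,\ve}$ the operator $\alpha^{-1}+\Mb_{\ve}(-\lambda)$, equivalently $1+\alpha\Mb_{\ve}(-\lambda)$, is boundedly invertible, so $-\lambda\in\rho(\widetilde H_{\alve})$ and therefore $\widetilde H_{\alve}\ge-\lambda_{\alpha,\ve}$.

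The hard part, and what distinguishes this setting from a textbook application of extension theory, is the careful bookkeeping of Sobolev regularity: neither $\phi$ nor $\Gb_{\ve}(z)\Tau\phi$ lies individually in $H^2(\RE^2)$ for a generic $\phi\in D(\widetilde H_{\alve})$; only their specific linear combination does, by the very definition of the domain. I expect the main technical obstacle to be ensuring that all operator identities invoked above are applied on the scale of Sobolev spaces recorded in \eqref{ran}, in particular that the resolvent identities \eqref{resid2}--\eqref{Mb_prop2} interact correctly with the inverse $(1+\alpha\Mb_{\ve}(z))^{-1}$ whose existence is a hypothesis on $z$. Once this is in place, the $z$-independence of the first step combined with the self-adjointness and lower bound produced by the machinery of \cite{Pjfa01} closes the proof.
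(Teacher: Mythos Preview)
Your proposal is correct and rests on the same Krein-type framework from \cite{Pjfa01} as the paper, but the organization differs in a useful way. The paper first builds $\widetilde R_{\alve}(-\lambda)$ for $\lambda>\lambda_{\alpha,\ve}$, shows it is an injective self-adjoint pseudo-resolvent (via the resolvent identity and Remark \ref{r:GcapDH}), extracts the operator from it, and only then proves that the resulting domain coincides with the one in the statement. You instead start from the stated domain, check $z$-independence directly via \eqref{resid3}, and verify by the algebraic identity $\Tau\widetilde R_{\alve}(z)\psi=(1+\alpha\Mb_{\ve}(z))^{-1}\breve\Gb_{\ve}(z)\psi$ that $\widetilde R_{\alve}(z)$ is a two-sided inverse of $\widetilde H_{\alve}-z$. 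Your route is more direct and avoids reproving the pseudo-resolvent machinery; the paper's route is more self-contained in that self-adjointness is read off from $\widetilde R_{\alve}(-\lambda)^{*}=\widetilde R_{\alve}(-\lambda)$ rather than deferred to the abstract result.

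There is, however, one point you leave open. You obtain invertibility of $1+\alpha\Mb_{\ve}(z)$ only at $z=-\lambda$, $\lambda>\lambda_{\alpha,\ve}$, via Proposition \ref{p:invertibility}, and elsewhere treat it as ``a hypothesis on $z$''. But the theorem asserts bounded invertibility for \emph{every} $z\in\rho(\widetilde H_{\alve})\cap\CO\backslash[0,+\infty)$, and your two-sided-inverse computation presupposes rather than proves this. The paper closes this by invoking \cite[Th.\ 2.19]{CFPrma18}, which extends the Krein formula (and with it the invertibility of $1+\alpha\Mb_{\ve}(z)$) from the initial half-line to the full resolvent set. Within your framework one can argue injectivity directly: if $\xi\in\ker(1+\alpha\Mb_{\ve}(z))$ then $\phi:=-\alpha\Gb_{\ve}(z)\xi$ satisfies $\Tau\phi=\xi$ and $\phi+\alpha\Gb_{\ve}(z)\Tau\phi=0$, so $\phi\in\ker(\widetilde H_{\alve}-z)=\{0\}$, and injectivity of $\Gb_{\ve}(z)$ forces $\xi=0$. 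But upgrading this to a \emph{bounded} inverse on all of $\rho(\widetilde H_{\alve})\cap\CO\backslash[0,+\infty)$ still needs an extra step (analyticity in $z$, or the cited reference), which you should make explicit.
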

\begin{proof} The case $\alpha =0$ is trivial. Let $\alpha \not=0$. By \cite[Theorem 2.1]{Pjfa01},  the bounded linear operator 
\[
\widetilde R_{\alve}(-\lambda) :L^2(\RE^2) \to L^2(\RE^2)  \qquad \lambda > \lambda_{\alpha,\ve}
\]
\begin{equation}\label{krein}
\widetilde R_{\alve}(-\lambda) := R^{0}_{\ve}(-\lambda) - \Gb_{\ve}(-\lambda)\left(\frac1\alpha   + \Mb_{ \ve}(-\lambda)\right)^{-1} \breve \Gb_{\ve}(-\lambda) 
\end{equation} 
is the resolvent of the self-adjoint extension $\widetilde H_{\alve}\ge -\lambda_{\alpha,\ve}$ of $H^{0}_{\ve}|\ker(\Tau)$  defined by 
\[
\widetilde D := \left\{ \phi \in L^{2}(\RE^2):\; \phi = \phi_{-\lambda} - \Gb_{\ve}(-\lambda)\left( \frac1\alpha   + \Mb_{\ve}(-\lambda)\right)^{-1}  \Tau  \phi_{-\lambda  }, \;  \phi_{-\lambda} \in D( H^{0}_{\ve})\right\};
\]
\begin{equation}\label{widetildeHthetave}
\widetilde H_{\alve}:\widetilde D\subseteq L^{2}(\RE^2)\to L^{2}(\RE^2)\,,\qquad (\widetilde H_{\alve} + \lambda   )\phi  := (H^{0}_{\ve} +\lambda)\phi_{-\lambda}. 
\end{equation}

The definition of $\widetilde H_{\alve}$ is independent of $\lambda$; for any fixed $\lambda$, the decomposition of $\phi$ in $D(\widetilde H_{\alve}) $ is unique. For the reader's convenience, we sketch the proof of \cite[Theorem 2.1]{Pjfa01}, referring to \cite{Pjfa01} for the details. By the properties \eqref{resid1}, \eqref{resid2}, \eqref{resid3}, and \eqref{Mb_prop2} (see \cite[Pag. 115]{Pjfa01}) there follows that $\widetilde R_{\alve}(-\lambda)$ satisfies the relation 
\begin{equation}\label{resid}
\widetilde R_{\alve}(-\lambda) - \widetilde R_{\alve}(-\mu) = (\mu - \lambda ) \widetilde R_{\alve}(-\lambda) \widetilde R_{\alve}(-\mu) \qquad \lambda,\mu > \lambda_{\alpha,\ve}. 
\end{equation}
Hence, it is a pseudo-resolvent. $\widetilde R_{\alve}(-\lambda)$ is the resolvent of a (closed) operator if and only if it is injective (see \cite[Theorem 4.10]{Stone}).  $\widetilde R_{\alve}(-\lambda)$ is indeed injective, since $\widetilde R_{\alve}(-\lambda) \psi =0$ would imply 
\[
R^{0}_{\ve}(-\lambda)\psi =  \Gb_{\ve}(-\lambda)Q 
\]
with $Q = \big(\frac1\alpha   + \Mb_{\ve}(-\lambda)\big)^{-1} \breve \Gb_{\ve}(-\lambda)\psi$, but this implies $\psi =0$ by Remark \ref{r:GcapDH}. By the self-adjointness of the operator  $\widetilde R_{\alve}(-\lambda)$ there follows that $\widetilde D := \ran\big(\widetilde R_{\alve}(-\lambda)\big)$ is dense,  and independent of $\lambda$ because of the resolvent identity \eqref{resid}; moreover, the closed  operator $\widetilde H_{\alve} = \widetilde R_{\alve}(-\lambda)^{-1}-\lambda$ is self-adjoint, because $\widetilde R_{\alve}(-\lambda)^* = \widetilde R_{\alve}(-\lambda)$. By Remark \ref{r:GcapDH} the decomposition of $\phi$ in $\widetilde D $ is unique. If $\phi \in D(H^{0}_{\ve}) \cap \ker (\Tau)$ one has $\phi = \phi_{-\lambda}$ and $\widetilde H_{\alve} \phi =  H_{0 , \ve}\phi$ by Eq. \eqref{widetildeHthetave}. 
\par
Then, by \cite[Theorem 2.19]{CFPrma18}, the resolvent formula \eqref{krein} extends to all $ z\in  \rho(\widetilde H_{\alve})\cap\CO\backslash[0,+\infty)$ and 
\[
\widetilde D = \big\{ \phi \in L^2(\RE^2):\; \phi = \phi_{z} - \Gb_{\ve}(z)\left( \frac1\alpha   + \Mb_{\ve}(z)\right)^{-1}  \Tau  \phi_{z}, \;  \phi_{z} \in D(H^{0}_{\ve})\big\}\,,
\]
\[
(\widetilde H_{\alve} -z  )\phi  = (H^{0}_{\ve} -z )\phi_{z}. 
\]
The definition of $\widetilde H_{\alve}$ is independent of $z$  and, for any fixed $z$, the decomposition of $\phi$ in $\widetilde D $ is unique. Notice that  $\widetilde D\subseteq H^{3/2-}(\RE^2)$ by the mapping properties of $\Gb_{\ve}(z)$ (see \eqref{ran}).\par
 Let us now show that $\widetilde D=D(\widetilde H_{\alve})$. Since, by 
\eqref{resid3}, $\ran(\Gb_{\ve}(z)-\Gb_{\ve}(w))\subseteq D(H^{0}_{\ve})=H^{2}(\RE^{2})$, it is enough to prove $\widetilde D=D(\widetilde H_{\alve})$ whenever the $z$ appearing in the definition of $D(\widetilde H_{\alve})$ belongs to $ \rho(\widetilde H_{\alve})\cap\CO\backslash[0,+\infty)$. Given $\phi\in \widetilde D$, one has $\Tau\phi=\Tau\phi_{z}-\alpha \,\Mb_{\ve}(z)\left( 1+\alpha  \,\Mb_{\ve}(z)\right)^{-1}  \Tau  \phi_{z}$, which entails $ \Tau\phi=\left( 1+\alpha \,\Mb_{\ve}(z)\right)^{-1}  \Tau  \phi_{z}$. Hence, $\phi_{z}=\phi+\alpha \, \Gb_{\ve}(z)\Tau\phi$; this gives $\widetilde D\subseteq D(\widetilde H_{\alve})$. 
Viceversa, given $\phi\in D(\widetilde H_{\alve})$ and defining $\phi_{z}\in H^{2}(\RE^{2})$ by $\phi_{z}:=\phi+\alpha  \,\Gb_{\ve}(z)\Tau\phi$, one has $\Tau\phi+\alpha \,\Mb_{\ve}(z)\Tau\phi=\Tau\phi_{z}$, which gives $  \Tau\phi=\left( 1+\alpha \, \Mb_{\ve}(z)\right)^{-1}  \Tau  \phi_{z}$. Hence, $\phi=\phi_{z}-\alpha \Gb_{\ve}(z)\left( 1+\alpha \, \Mb_{\ve}(z)\right)^{-1}  \Tau  \phi_{z}$; this entails $D(\widetilde H_{\alve})\subseteq \widetilde D$.
Therefore, $\widetilde D=D(\widetilde H_{\alve})$. Furthermore, the previous calculations also give
\be\label{eq}
(\widetilde H_{\alve} -z  )\phi=(H^{0}_{\ve} -z ) (\phi+\alpha \, \Gb_{\ve}(z)\Tau\phi )\,,\qquad z\in  \rho(\widetilde H_{\alve})\cap\CO\backslash[0,+\infty)\,.
\ee  
To conclude, let us show that \eqref{eq} holds true even whenever $z\in \CO\backslash[0,+\infty)$. Given any $w\in  \rho(\widetilde H_{\alve})\cap\CO\backslash[0,+\infty)$, one has, by \eqref{eq} and \eqref{resid3},
\begin{align*}
&(\widetilde H_{\alve} -z  )\phi=
(\widetilde H_{\alve} -w  )\phi+(w-z)\phi=
(H^{0}_{\ve} -w ) (\phi+\alpha \, \Gb_{\ve}(w)\Tau\phi )+(w-z)\phi\\
=&(H^{0}_{\ve} -w ) (\phi+\alpha \, \Gb_{\ve}(z)\Tau\phi )+
\alpha  (H^{0}_{\ve} -w ) (\Gb_{\ve}(w)-\Gb_{\ve}(z))\Tau\phi +(w-z)\phi\\
=&(H^{0}_{\ve} -w ) (\phi+\alpha \, \Gb_{\ve}(z)\Tau\phi )+(w-z)
\alpha \, \Gb_{\ve}(z)\Tau\phi +(w-z)\phi\\
=&(H^{0}_{\ve} -z ) (\phi+\alpha \, \Gb_{\ve}(z)\Tau\phi )\,.
\end{align*}
\end{proof}

\begin{remark}\label{temp} Introducing the adjoint 
$\Tau^{*}:H^{-\nu+1/2}(\RE)\oplus H^{-\nu+1/2}(\RE)\to H^{-\nu}(\RE^{2})$, $\nu>1/2$, which provides, whenever $\xi_{1}\oplus\xi_{2}\in L^{2}(\RE)\oplus L^{2}(\RE)$ the tempered distribution $\Tau^{*}\xi_{1}\oplus\xi_{2}\in \SS'(\RE^{2})$ acting on a test function $f\in\SS(\RE^{2})$ as
\[
(\Tau^{*}\xi_{1}\oplus\xi_{2})f=\int_{\RE}\xi_{1}(s)f(s,s/2)\,\d s+\int_{\RE}\xi_{2}(s)f(-s,s/2)\,\d s\,,
\]
one has $\Gb_{\ve}(z)=R^{0}_{ \ve}(z)\Tau^{*}$. The latter entails the distributional identity 
\be\label{d-id-1}
\big( -\ve^2 \partial^2_{x}-\partial^2_{y}-z\big)\Gb_{\ve}(z)\xi_{1}\oplus\xi_{2}=\Tau^{*}\xi_{1}\oplus\xi_{2}
\ee
and so, by \eqref{act}, for any $\phi \in D(\widetilde H_{\alve})$ one gets
\be\label{act2}
\widetilde H_{\alve}\phi=\big( -\ve^2 \partial^2_{x}-\partial^2_{y}\big)\phi+\alpha  \Tau^{*}\Tau\phi\,.
\ee
Notice that, unless $\phi\in\ker(\Tau)$, neither of the two tempered distributions on the right hand side of the latter equation is in $L^{2}(\RE^{2})$ but their sum is. Moreover, since $\supp(\Tau^{*}\xi_{1}\oplus\xi_{2})=\Pi$, 
 one has
 \[
 \widetilde H_{\alve}\phi(\x)=\big( -\ve^2 \partial^2_{x}-\partial^2_{y}\big)\phi(\x)\quad\text{for a.e. $\x$ in $\RE^{2}\backslash\Pi$.}
\]
\end{remark}

\begin{proposition} \label{sesqui} Let 
 \begin{equation*}
\widetilde\BB_{\alve}: H^1(\RE^2)\times H^1(\RE^2)\subseteq L^{2}(\RE^{2})\times L^{2}(\RE^{2})\to\CO
\end{equation*}
be the sesquilinear form
\begin{equation*}
\widetilde\BB_{\alve}(\varphi, \phi):= \ve^{2}\langle\partial_x \varphi,\partial_x \phi\rangle 
+ \langle\partial_y \varphi,\partial_y \phi\rangle+\alpha  \big(\langle\tau_{1}\varphi  ,  \tau_{1}\phi\rangle+\langle\tau_{2}\varphi  ,  \tau_{2}\phi\rangle\big)\,.
\end{equation*}
Then
\[
\widetilde\BB_{\alve}(\varphi , \phi) =  \langle\varphi  , \widetilde H_{\alve}\phi\rangle \qquad \forall \, \varphi \in H^{1}(\RE^{2}) ,\; \forall \phi\in D(\widetilde H_{\alve}).
\]
\end{proposition}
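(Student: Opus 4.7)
My approach is to invoke Remark~\ref{temp}, which represents $\widetilde H_\alve \phi$ as a sum of two tempered distributions,
\[
\widetilde H_\alve \phi = \big(-\ve^2 \partial_x^2 - \partial_y^2\big)\phi + \alpha\, \Tau^* \Tau \phi,
\]
whose sum lies in $L^2(\RE^2)$ although the individual summands do not. Since $D(\widetilde H_\alve) \subseteq H^{3/2-}(\RE^2)$, one has in particular $\phi \in H^1(\RE^2)$ with traces $\tau_j \phi \in H^{1-}(\RE) \subset L^2(\RE)$, so the sesquilinear form $\widetilde\BB_\alve(\varphi, \phi)$ is well defined for every $\varphi \in H^1(\RE^2)$.

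The main step is to test the distributional identity above against $\varphi \in H^1(\RE^2) \subset H^{1/2+\epsilon}(\RE^2)$ and split
\[
\langle \varphi, \widetilde H_\alve \phi\rangle = \big\langle \varphi, -\ve^2 \partial_x^2 \phi - \partial_y^2 \phi\big\rangle + \alpha\, \big\langle \varphi, \Tau^* \Tau \phi\big\rangle,
\]
where the first pairing is the $H^1$--$H^{-1}$ duality and the second the $H^{1/2+\epsilon}$--$H^{-1/2-\epsilon}$ duality. This splitting is legitimate because $\partial_j^2 \phi \in H^{-1}(\RE^2)$ for $\phi \in H^1(\RE^2)$, and $\Tau^* \Tau \phi \in H^{-1/2-\epsilon}(\RE^2)$ by the adjoint mapping properties of $\Tau^*$ applied to $\Tau \phi \in L^2(\RE) \oplus L^2(\RE)$. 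The first pairing reduces by Plancherel (equivalently, by integration by parts and density from smooth $\phi$) to $\ve^2 \langle \partial_x \varphi, \partial_x \phi\rangle + \langle \partial_y \varphi, \partial_y \phi\rangle$, while the explicit expression for $\Tau^*$ given in Remark~\ref{temp}, evaluated on $\xi_1 \oplus \xi_2 := \tau_1 \phi \oplus \tau_2 \phi \in L^2(\RE) \oplus L^2(\RE)$, yields
\[
\langle \varphi, \Tau^* \Tau \phi\rangle = \langle \tau_1 \varphi, \tau_1 \phi\rangle + \langle \tau_2 \varphi, \tau_2 \phi\rangle.
\]
Adding the two contributions reproduces $\widetilde\BB_\alve(\varphi, \phi)$.

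The only delicate point is the initial splitting of the $L^2$ pairing into two duality pairings whose operands are individually distributions outside $L^2$; this is handled by the Sobolev bookkeeping just indicated, the hypothesis $\varphi \in H^1$ being exactly what is required. An alternative, strictly function-theoretic route avoids distributions entirely by using the resolvent decomposition $\phi = \phi_{-\lambda} - \alpha\, \Gb_\ve(-\lambda)\Tau\phi$ with $\phi_{-\lambda} \in H^2(\RE^2)$ from Theorem~\ref{th:resolvent1}: writing $\widetilde H_\alve \phi = (H^0_\ve + \lambda)\phi_{-\lambda} - \lambda\phi$, one integrates by parts in $\langle \varphi, (H^0_\ve + \lambda)\phi_{-\lambda}\rangle$ (routine since $\phi_{-\lambda} \in H^2$ and $\varphi \in H^1$) and then verifies the weak-form identity $\ve^2 \langle \partial_x \varphi, \partial_x \Gb_\ve(-\lambda)\xi\rangle + \langle \partial_y \varphi, \partial_y \Gb_\ve(-\lambda)\xi\rangle + \lambda \langle \varphi, \Gb_\ve(-\lambda)\xi\rangle = \langle \Tau \varphi, \xi\rangle$ directly by Plancherel using the Fourier kernels from Remark~\ref{r:op-in-fou-transf}.
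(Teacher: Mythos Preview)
Your proof is correct and takes essentially the same approach as the paper: both rely on the distributional identity \eqref{act2} from Remark~\ref{temp} (equivalently \eqref{d-id-1}) together with the $H^{1}$--$H^{-1}$ duality to unwind the $\Tau^{*}\Tau$ term into the trace inner products. Your primary argument condenses the paper's route by invoking \eqref{act2} directly and splitting the pairing, while the ``alternative'' you sketch via the decomposition $\phi=\phi_{-\lambda}-\alpha\,\Gb_{\ve}(-\lambda)\Tau\phi$ with $\phi_{-\lambda}\in H^{2}$ is exactly the computation the paper carries out.
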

\begin{proof} For the sesquilinear form $\BB^{0}_{\ve}$ corresponding to $H^{0}_\ve$, there holds
\[
\BB^{0}_{\ve}(\varphi , \phi) =  \langle\varphi  , H^{0}_\ve\phi\rangle  \qquad \forall \, \varphi \in H^{1}(\RE^{2}) ,\; \forall \phi\in H^{2}(\RE^{2})
\]
and
\[
\BB^{0}_{\ve}(\varphi , \phi) 
=\langle(-\ve^{2}\partial^{2}_{x}-\partial^{2}_{y})\varphi  , \phi\rangle_{-1,+1}
=  \langle\varphi , (-\ve^{2}\partial^{2}_{x}-\partial^{2}_{y})\phi\rangle_{+1,-1} \qquad \forall \, \psi ,\, \phi\in H^{1}(\RE^{2})\,,
\]
where $\langle\cdot,\cdot\rangle_{\mp\nu,\pm\nu}$ denotes the (anti-linear with respect to the first variable) $H^{\pm\nu}(\RE^{2})$-$H^{\mp\nu}(\RE^{2})$ duality extending the  scalar product in $L^{2}(\RE^{2})$.\par
Then, defining $\phi_{z}:=\phi+\alpha   \Gb_{\ve}(z)\Tau\phi$,  by \eqref{d-id-1} and since both $\phi$ and  $\Gb_{\ve}(z)\Tau\phi$ belong to $H^{3/2-}(\RE^{2})\subset H^{1}(\RE^{2}) $, one gets
\begin{align*}
\langle\varphi  , \widetilde H_{\alve}\phi\rangle=&
\langle\varphi  , (\widetilde H_{\alve}-z)\phi\rangle  +z\langle\varphi  , \phi\rangle  =(\varphi  , (H^{0}_{\ve}-z)\phi_{z}\rangle  +z\langle\varphi  , \phi\rangle  \\=&(\BB^{0}_{\ve}-z)(\varphi  , \phi_{z})+z\langle\varphi  , \phi\rangle  
=\BB^{0}_{\ve}(\varphi  , \phi)+\alpha  (\BB^{0}_{\ve}-z)(\varphi  , \Gb_{\ve}(z)\Tau\phi)\\
=&\BB^{0}_{\ve}(\varphi  , \phi)+\alpha  \langle\varphi  ,  (-\ve^{2}\partial^{2}_{x}-\partial^{2}_{y}-z) \Gb_{\ve}(z)\Tau\phi\rangle_{+1,-1}\\
=&\BB^{0}_{\ve}(\varphi  , \phi)+\alpha  \langle\varphi  ,  \Tau^{*}\Tau\phi\rangle_{+1,-1}=
\BB^{0}_{\ve}(\varphi  , \phi)+\alpha  \big(\langle\tau_{1}\varphi  ,  \tau_{1}\phi\rangle +\langle\tau_{2}\varphi  ,  \tau_{2}\phi\rangle \big)\\
=&\widetilde\BB_{\alve}(\varphi  , \phi)\,.
\end{align*}
\end{proof}
\begin{remark}\label{r:bc} Since $G_{j,\ve}(z)$ corresponds to the single-layer operator for the elliptic operator $\ve^2 \partial^2_{x}+\partial^2_{y}+z$ relative to the line $\Pi_{j}$, by the jump and regularity properties of the single-layer potentials one gets
an alternative characterization of the self-adjointness domain of $\widetilde H_{\alve}$ in terms of boundary conditions: 
\[
D(\widetilde H_{\alve}) = 
\big\{\phi\in H^{3/2-}(\RE^2) :  \phi=\phi_{1}+\phi_{2}\,,\ \phi_{j}\in H^{2}(\RE^{2}\backslash\Pi_{j})\,,\  [\tau'_{j,\ve}]\phi_{j}=\alpha\tau_{j}\phi\,,\ j=1,2\big\}, 
\]
where $[\tau'_{j,\ve}]\phi_{j}$
denotes the jump across $\Pi_{j}$ of the normal derivative relative to $ \ve^2 \partial^2_{x}+\partial^2_{y}$. The  boundary condition on $\Pi_1$ is written  explicitly in Eq. \eqref{explicitBC}.
\end{remark}

\subsection{The compression of \texorpdfstring{$\widetilde H_{\ve,\alpha }$\label{ss:2.2}}{Hve} onto 
$L^{2}_{\bosfer}(\RE^{2})$} \label{ss:resolvent} In this section we introduce the bosonic and the fermionic symmetries. Recall that the symbol $\bosfer$ denotes either $\bos$ (bosonic) or $\fer$ (fermionic), with the corresponding value $\bosferpm_\bosfer$ defined as $\bosferpm_\bos = +$ and $\bosferpm_\fer = -$.\par
At first, notice that 
\be\label{inv}
R^{0}_{\ve}(z) : L^2_{\bosfer}(\RE^2) \to H^2_{\bosfer}(\RE^2) \qquad \forall z\in\CO\backslash [0,+\infty)\,.
\ee
We define the operator  
\begin{equation}\label{SS1}
S: H^{\nu}(\RE^2) \to  H^{\nu}(\RE^2) \,,\quad \nu\ge 0\,,\qquad
(S\psi)(x,y) := \psi(-x,y),
\end{equation}
and the orthogonal projector 
\begin{equation}\label{SS2}
\frac{1\bosferpm_\bosfer S}2: L^{2}(\RE^{2})\to L^{2}(\RE^{2})\,,\qquad \ran\left(\frac{1\bosferpm_\bosfer S}2\right)=L^{2}_{\bosfer}(\RE^{2})\,.
\end{equation}
$L^{2}_{\bosfer}(\RE^{2})$ enjoys the property  
\[
S\psi = \bosferpm_\bosfer\psi \qquad \forall \psi \in L^2_{\bosfer}(\RE^2). 
\]
One has 
\[
\tau_{1}S=\tau_{2}\,,\qquad \tau_{2}S=\tau_{1}\,,
\]
\[ 
 \breve G_{2,\ve}(z) =  \breve G_{1,\ve}(z) S \,,\qquad  G_{2,\ve}(z) = SG_{1,\ve}(z)\,.
\]
Noticing that
\[
\tau_{1} \psi=\bosferpm_\bosfer\tau_{2}\psi  \qquad \forall  \psi\in  H^2_{\bosfer}(\RE^2)\,,
\]
from now on we pose 
\[\tau\equiv\tau_{1}\]
and introduce the bounded operators
\[
\breve G_{\ve}(z):L^{2}_{\bosfer}(\RE^{2})\to L^{2}(\RE)\,,\qquad
\breve G_{\ve}(z):=\tau R_{\ve}^{0}(z)\equiv\breve G_{1,\ve}(z)|L^{2}_{\bosfer}(\RE^{2})\,,
\]
\[
G_{\ve}(z):L^{2}(\RE)\to L^{2}_{\bosfer}(\RE^{2})\,,\qquad G_{\ve}(z):=\breve G_{\ve}(\bar z)^{*}\equiv
\frac{1 \bosferpm_\bosfer S}2\,G_{1,\ve}(z)
\,,
\]
\[
M_{\ve}(z):L^{2}(\RE)\to L^{2}(\RE)\,,\qquad M_{\ve}(z):=\tau G_{\ve}(z)\,.
\]
\begin{remark} To avoid a too heavy indexing,  we avoided to use the symbol $\bosfer$ to distinguish the bosonic and fermionic versions of the operators $\breve G_{\ve}(z)$, $G_{\ve}(z)$ and $M_{\ve}(z)$. We trust that the reader will not be confused by this abuse of notation.
\end{remark} 
With such definitions one has
\begin{align*}
 M_{\ve}(z)=&\tau_{1}\,\frac{1\bosferpm_\bosfer S}2\,G_{1,\ve}(z)=
 \frac12\,\big(\tau_{1}G_{1,\ve}(z) \bosferpm_\bosfer \tau_{1}S G_{1,\ve}(z)\big)=
 \frac12\,\big(\tau_{1} G_{1,\ve}(z)\bosferpm_\bosfer \tau_{1} G_{2,\ve}(z)\big)\\
 =&
\frac12\,\big(M_{d,\ve}(z)\bosferpm_\bosfer M_{od,\ve}(z)\big)\,.
\end{align*}
Then, by such relations, one gets
\[
\breve \Gb_{\ve}(z)\psi=\breve G_{\ve}(z)\psi\oplus(\bosferpm_\bosfer\breve G_{\ve}(z)\psi)\,,\qquad \forall\psi\in L^{2}_{\bosfer}(\RE^{2})\,,
\]
\[
\Gb_{\ve}(z)\xi\oplus(\bosferpm_\bosfer\xi)=G_{1,\ve}(z)\xi\bosferpm_\bosfer G_{2,\ve}(z)\xi=(1\bosferpm_\bosfer S)G_{1,\ve}(z)\xi=2 \,G_{\ve}(z)\xi\,,
\]
\[
\Mb_{\ve}(z)\xi\oplus(\bosferpm_\bosfer \xi)=2\big(M_{\ve}(z)\xi\oplus (\bosferpm_\bosfer M_{\ve}(z)\xi)\big)
\,,
\]
\[
(1+\alpha \,\Mb_{\ve}(z))^{-1}\xi\oplus(\bosferpm_\bosfer \xi)=(1+2\alpha \,M_{\ve}(z))^{-1}\xi\oplus 
\big(\bosferpm_\bosfer(1+2\alpha \,M_{\ve}(z))^{-1}\xi\big)\,.
\]
Therefore, for all $\psi\in L^{2}_{\bosfer}(\RE^{2})$ and for all  $z\in   
\rho(\widetilde H_{\alve})\cap \CO\backslash[0,+\infty)$ one obtains
\begin{align}\label{t-res}
\widetilde R_{\alve}(z) \psi =& R^{0}_{\ve}(z)\psi   -\alpha \, \Gb_{\ve}(z)(1+\alpha \,\Mb_{\ve}(z))^{-1}\breve\Gb_{\ve}(z)\psi\nonumber\\
=&R^{0}_{\ve}(z)\psi   -2\alpha \,G_{\ve}(z)(1+2\alpha \,M_{\ve}(z))^{-1}\breve G_{\ve}(z)\psi\,.
\end{align}
Hence, by \eqref{inv},  $R^{\bosfer}_{\alve}(z):=\widetilde R_{\alve}(z)|L_{\bosfer}^{2}(\RE^2)$
is a pseudo resolvent in $L_{\bosfer}^{2}(\RE^2)$; furthermore, by the analogous properties for $\widetilde R_{\alve}(z)$, one has that  $R^{\bosfer}_{\alve}(z)$ is injective and $(R^{\bosfer}_{\alve}(z))^{*}=R^{\bosfer}_{\alve}(\bar z)$. Therefore, $R^{\bosfer}_{\alve}(z)$ is the resolvent of a self-adjoint operator $ \widehat H^{\bosfer}_{\alve}$ in $L^{2}_{\bosfer}(\RE^{2})$ which, by construction, is given by the compression of $\widetilde H_{\alve}$ onto $L^{2}_{\bosfer}(\RE^{2})$, i.e., 
\[
\widehat H^{\bosfer}_{\alve}=\frac{1\bosferpm_\bosfer S}2\,\widetilde H_{\alve}\frac{1\bosferpm_\bosfer S}2\equiv
\widetilde H_{\alve}|D(\widetilde H_{\alve})\cap L^{2}_{\bosfer}(\RE^{2})\,.
\]  
By \cite[Theorem 2.19]{CFPrma18}, the resolvent formula \eqref{t-res} extends to all $ z\in \CO\backslash[0,+\infty) \cap \rho(\widehat H^{\bosfer}_{\alve})$. \par
\begin{theorem}\label{th:resolvent-main}
Let $\alpha \in\RE$, $\ve>0$ and $z\in \CO\backslash[0,+\infty)$. Then the linear operator in $L^{2}_{\bosfer}(\RE^{2})$ defined by 
\[
D(H^\bosfer_\ve) := \big\{ \psi \in H_{\bosfer}^{3/2-}(\RE^2): \psi+2\alpha G_{\ve}(z)\tau\psi \in H_{\bosfer}^{2}(\RE^2) \big\}, 
\]
  \begin{equation*}
 (H^\bosfer_\ve -z) \psi   := (H^{0}_{\ve} -z ) (\psi+2\alpha G_{\ve}(z)\tau\psi )
  \end{equation*}
is a $z$-independent, bounded-from-below, self-adjoint extension of the symmetric operator $H^{0}_{\ve}|\ker(\tau|H^{2}_{\bosfer}(\RE^{2}))$. \par 
Furthermore, the operator $1  + 2\alpha M_{\ve}(z): L^2(\RE)\to L^2(\RE)$ has a bounded inverse for all $z\in   
\rho(H^\bosfer_\ve)\cap \CO\backslash[0,+\infty)$ and the resolvent of   $H^\bosfer_\ve$ is given by 
\be\label{krein-b}
R^{\bosfer}_{\ve}(z) = R^{0}_{\ve}(z) - 2\alpha G_{\ve}(z)\big(1  +2\alpha M_{\ve}(z)\big)^{-1} \breve G_{\ve}(z) \qquad  z\in \rho(H^\bosfer_\ve)\cap \CO\backslash[0,+\infty)\,.
\ee
\end{theorem}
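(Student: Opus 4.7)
The plan is to mirror the proof of Theorem~\ref{th:resolvent1} on the bosonic/fermionic subspace, using the reductions derived at the end of Section~\ref{ss:2.2}. Section~\ref{ss:2.2} already exhibits $\widehat H^\bosfer_{\alve} := \widetilde H_{\alve}|D(\widetilde H_{\alve}) \cap L^{2}_{\bosfer}(\RE^{2})$ as a self-adjoint, bounded-from-below operator on $L^{2}_{\bosfer}(\RE^{2})$ whose resolvent is given by \eqref{t-res}, which is exactly \eqref{krein-b}. Thus the content of the theorem is to verify that the operator $H^\bosfer_\ve$ defined through the domain and action formulas coincides with this compression, and that the blocked Birman--Schwinger-type operator $1+2\alpha M_{\ve}(z)$ inherits the invertibility of its two-component counterpart.

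First I would establish invertibility of $1+2\alpha M_{\ve}(z)$ on $L^{2}(\RE)$ for $z\in\rho(H^\bosfer_\ve)\cap\CO\backslash[0,+\infty)$. This is immediate from the block-diagonalization already noted in Section~\ref{ss:2.2},
$$
(1+\alpha\Mb_{\ve}(z))\bigl(\xi\oplus(\bosferpm_\bosfer\xi)\bigr)
=(1+2\alpha M_{\ve}(z))\xi\oplus\bigl(\bosferpm_\bosfer(1+2\alpha M_{\ve}(z))\xi\bigr),
$$
combined with the invertibility of $1+\alpha\Mb_{\ve}(z)$ on the symmetric subspace, which follows from Theorem~\ref{th:resolvent1} since $\rho(\widetilde H_{\alve})\supseteq\rho(\widehat H^\bosfer_{\alve})=\rho(H^\bosfer_\ve)$.

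Second I would translate the domain and action characterization. For $\psi\in H^{3/2-}_{\bosfer}(\RE^{2})$ the bosonic/fermionic symmetry gives $\tau_{2}\psi=\bosferpm_\bosfer\tau\psi$, so $\Tau\psi=\tau\psi\oplus(\bosferpm_\bosfer\tau\psi)$ and hence $\Gb_{\ve}(z)\Tau\psi=2G_{\ve}(z)\tau\psi$. Therefore the condition $\phi+\alpha\Gb_{\ve}(z)\Tau\phi\in H^{2}(\RE^{2})$ in Theorem~\ref{th:resolvent1} becomes $\psi+2\alpha G_{\ve}(z)\tau\psi\in H^{2}(\RE^{2})$; invariance of $L^{2}_{\bosfer}(\RE^{2})$ under $R^{0}_{\ve}(z)$ (see \eqref{inv}) together with $\ran G_{\ve}(z)\subset L^{2}_{\bosfer}(\RE^{2})$ (by the definition \eqref{bg-restr*}) then force the corrected function to lie in $H^{2}_{\bosfer}(\RE^{2})$. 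The same substitution applied to \eqref{act} yields $(H^\bosfer_\ve-z)\psi=(H^{0}_{\ve}-z)(\psi+2\alpha G_{\ve}(z)\tau\psi)$. The equality $D(H^\bosfer_\ve)=D(\widehat H^\bosfer_{\alve})=\ran(R^\bosfer_{\alve}(z))$ is then proved by repeating verbatim the two-inclusion argument at the end of the proof of Theorem~\ref{th:resolvent1}, with $(1+2\alpha M_{\ve}(z),\,G_{\ve}(z),\,\tau)$ in place of $(1+\alpha\Mb_{\ve}(z),\,\Gb_{\ve}(z),\,\Tau)$. Finally, if $\psi\in H^{2}_{\bosfer}(\RE^{2})$ with $\tau\psi=0$, the correction term vanishes and $H^\bosfer_\ve\psi=H^{0}_{\ve}\psi$, identifying $H^\bosfer_\ve$ as an extension of $H^{0}_{\ve}|\ker(\tau|H^{2}_{\bosfer}(\RE^{2}))$.

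No serious obstacle is expected: the whole argument amounts to bookkeeping propagation of the symmetry $S$ through the various operators, and the only nontrivial analytic input (invertibility of the Birman--Schwinger-type operator for $\lambda$ large, extension of the resolvent formula to all $z\in\rho\cap\CO\backslash[0,+\infty)$, $z$-independence of the domain via the pseudo-resolvent identity \eqref{resid}) is already available from Proposition~\ref{p:invertibility} and Theorem~\ref{th:resolvent1}, and restricts painlessly to the symmetric subspace since every operator involved intertwines with the projector $\frac{1\bosferpm_\bosfer S}{2}$.
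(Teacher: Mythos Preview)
Your approach is essentially the paper's own: the theorem is stated immediately after the reductions in Section~\ref{ss:2.2}, and the paper's argument is precisely to compress $\widetilde H_{\alve}$ onto $L^{2}_{\bosfer}(\RE^{2})$ via the block-diagonalization identities you quote, then invoke \cite[Th.~2.19]{CFPrma18} to extend the resolvent formula to all of $\rho(\widehat H^{\bosfer}_{\alve})\cap\CO\backslash[0,+\infty)$.

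One slip, however: your inclusion $\rho(\widetilde H_{\alve})\supseteq\rho(\widehat H^\bosfer_{\alve})$ is backwards. Since $L^{2}_{\bosfer}(\RE^{2})$ is a reducing subspace for $\widetilde H_{\alve}$, one has $\sigma(\widehat H^\bosfer_{\alve})\subseteq\sigma(\widetilde H_{\alve})$ and hence $\rho(\widetilde H_{\alve})\subseteq\rho(\widehat H^\bosfer_{\alve})$. So you cannot deduce invertibility of $1+2\alpha M_{\ve}(z)$ for \emph{all} $z\in\rho(H^\bosfer_\ve)$ from invertibility of $1+\alpha\Mb_{\ve}(z)$: the latter may fail at a point of $\sigma(\widetilde H_{\alve})$ coming from the opposite symmetry sector. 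The paper handles this exactly as you indicate in your last paragraph---by applying \cite[Th.~2.19]{CFPrma18} directly to the compressed pseudo-resolvent $R^{\bosfer}_{\alve}(z)$---so the fix is already in your proposal; just drop the incorrect inclusion and route the invertibility claim through that extension argument instead.
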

By Remark \ref{temp}, one gets
\begin{remark}\label{r:temp-b} Introducing the adjoint 
$\tau^{*}:H^{-\nu+1/2}(\RE)\to H^{-\nu}(\RE^{2})$, $\nu>1/2$, which provides, whenever $\xi\in L^{2}(\RE)$, the tempered distribution $\tau^{*}\xi\in \SS'(\RE^{2})$ acting on a test function $f\in\SS(\RE^{2})$ as
\[
(\tau^{*}\xi)f=\int_{\RE}\xi(s)\,f(s,s/2)\,\d s\,,
\]
one has $G_{\ve}(z)=R^{0}_{ \ve}(z)\tau^{*}$. The latter entails the distributional identity 
\begin{equation*}
\big(-\ve^2 \partial^2_{x}-\partial^2_{y}-z\big)G_{\ve}(z)\xi=\tau^{*}\xi
\end{equation*}
and so, by \eqref{act2} and by $\Tau^{*}\xi\oplus\xi=2\tau^{*}\xi$, one gets
\[
 H^\bosfer_\ve\psi=\big( -\ve^2 \partial^2_{x}-\partial^2_{y}\big)\psi+2\alpha\,\tau^{*}\tau\psi\,.
\]
\end{remark}
By Proposition \ref{sesqui} there follows
\begin{proposition}\label{p:quadratic} Let
\begin{equation}\label{BBve1-b}
\BB^\bosfer_{\ve}: H^1_{\bosfer}(\RE^2)\times H^1_{\bosfer}(\RE^2)\subseteq L^{2}_{\bosfer}(\RE^{2})\times L^{2}_{\bosfer}(\RE^{2})\to\CO\
\end{equation}
be the sesquilinear form
\begin{equation}\label{BBve2-b}
\BB^\bosfer_{\ve}(\varphi, \psi):= \ve^{2}\langle\partial_x \varphi,\partial_x \psi\rangle  
+ \langle\partial_y \varphi,\partial_y \psi\rangle +2\alpha\langle\tau\varphi  ,  \tau\psi\rangle \,.
\end{equation}
Then
\[
\BB^\bosfer_{\ve}( \varphi , \psi) =  \langle \varphi , H^\bosfer_\ve\psi\rangle   \qquad \forall \,  \varphi \in H_{\bosfer}^{1}(\RE^{2}) ,\; \forall \psi\in D(H^\bosfer_\ve).
\]
\end{proposition}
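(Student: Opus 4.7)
The plan is to deduce this directly from Proposition \ref{sesqui}, which establishes the corresponding identity for the non-symmetric operator $\widetilde H_{\alve}$ and its form $\widetilde\BB_{\alve}$ on all of $H^1(\RE^2)$. The content of Proposition \ref{p:quadratic} is that after restricting to the bosonic/fermionic sector, the two boundary terms of $\widetilde\BB_{\alve}$ collapse into one, multiplied by $2$, and the resulting form still represents the self-adjoint operator obtained as the compression of $\widetilde H_{\alve}$ onto $L^2_{\bosfer}(\RE^2)$.

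First, I would recall from Section \ref{ss:2.2} that $H^{\bosfer}_\ve = \widetilde H_{\alve}|D(\widetilde H_{\alve})\cap L^2_{\bosfer}(\RE^2)$, so in particular $D(H^{\bosfer}_\ve)\subseteq D(\widetilde H_{\alve})$ and $H^{\bosfer}_\ve\phi = \widetilde H_{\alve}\phi$ for every $\phi\in D(H^{\bosfer}_\ve)$. Given $\psi \in H^1_{\bosfer}(\RE^2)\subseteq H^1(\RE^2)$ and $\phi\in D(H^{\bosfer}_\ve)$, Proposition \ref{sesqui} therefore yields
\[
\widetilde\BB_{\alve}(\psi,\phi) = \langle\psi,\widetilde H_{\alve}\phi\rangle = \langle\psi,H^{\bosfer}_\ve\phi\rangle.
\]

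Second, I would reduce the right-hand side of $\widetilde\BB_{\alve}(\psi,\phi)$ to $\BB^{\bosfer}_\ve(\psi,\phi)$ using the $S$-symmetry. Since $\psi,\phi\in L^2_{\bosfer}(\RE^2)\cap H^1(\RE^2)$, they satisfy $S\psi = \bosferpm_\bosfer\psi$ and $S\phi=\bosferpm_\bosfer\phi$, where $S$ is the flip $x\mapsto -x$ defined in \eqref{SS1}. Combined with the identities $\tau_1 S = \tau_2$ and $\tau_2 S = \tau_1$ recorded in Section \ref{ss:2.2}, this gives $\tau_2\psi = \bosferpm_\bosfer\tau_1\psi$ and $\tau_2\phi = \bosferpm_\bosfer\tau_1\phi$, so
\[
\langle\tau_1\psi,\tau_1\phi\rangle + \langle\tau_2\psi,\tau_2\phi\rangle = 2\langle\tau\psi,\tau\phi\rangle,
\]
where $\tau\equiv\tau_1$. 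The kinetic parts of $\widetilde\BB_{\alve}$ and $\BB^{\bosfer}_\ve$ coincide by definition, hence $\widetilde\BB_{\alve}(\psi,\phi) = \BB^{\bosfer}_\ve(\psi,\phi)$, and combining with the previous display proves the claim.

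There is no real obstacle: the statement is a straightforward consequence of the already-proved non-symmetric version together with the way $S$-antisymmetry/symmetry of traces on the two coincidence lines $\Pi_1$ and $\Pi_2$ interact. The only point requiring mild attention is that traces of $H^1_{\bosfer}$ functions inherit the $\bosferpm_\bosfer$-symmetry under $x\mapsto -x$, which is immediate from the continuity and surjectivity of $\tau_j\colon H^1(\RE^2)\to H^{1/2}(\RE)$ and the identity $\tau_2=\tau_1 S$.
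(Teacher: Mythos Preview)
Your proposal is correct and matches the paper's approach exactly: the paper simply states ``By Proposition \ref{sesqui} there follows'' without further detail, and you have spelled out precisely the two ingredients implicit in that sentence---that $H^{\bosfer}_\ve$ is the restriction of $\widetilde H_{\alve}$ to the symmetric sector, and that the $S$-symmetry collapses the two trace terms into one via $\tau_2=\tau_1 S$.
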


\subsection{The Spectrum of \texorpdfstring{$H^\bosfer_\ve$}{Hve}\label{ss:spectrum}}
The next proposition allows to build functions in $D(H^\bosfer_\ve)$ given $\xi \in L^2(\RE)$. It will be helpful in the characterization of the spectrum of $H^\bosfer_\ve$, see Lemma \ref{l:spectrum} below. We premise the following result, whose proof is the same as that of Lemma \ref{rem}, since 
 $\breve G_{\ve}(z)=\tau R^{0}_{\ve}(z)$, $G_{\ve}(z)=(\tau R^{0}_{\ve}(\bar z))^{*}$ and $M_{\ve}(z)=\tau G_{\ve}(z)$.
\begin{lemma} For any $z,w\in \CO\backslash[0,+\infty)$, there holds
\begin{equation*}
\breve G_{\ve} (z)  - \breve G_\ve (w) = (z-w) \breve G_{\ve} (z) R^{0}_{\ve} (w) = (z-w)\breve G_{\ve} (w)R^{0}_{\ve} (z)\,,
\end{equation*}
\begin{equation}\label{resid3-b}
 G_{\ve} (z)  -  G_\ve (w) = (z-w)  R^{0}_{\ve} (w)  G_{\ve} (z) = (z-w)  R^{0}_{\ve} (z)G_{\ve} (w)\,,
\end{equation}
\begin{equation}\label{Mb_prop2-b}
 M_{\ve} (z)  -  M_\ve (w) = (z-w)  \breve G_{\ve} (w)  G_{\ve} (z) = (z-w)  \breve G_{\ve}  (z)G_{\ve} (w).
\end{equation}  
\end{lemma}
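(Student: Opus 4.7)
The plan is to mimic the proof of Lemma \ref{rem} verbatim, exploiting the identifications $\breve G_{\ve}(z)=\tau R^{0}_{\ve}(z)$, $G_{\ve}(z)=\breve G_{\ve}(\bar z)^{*}$ and $M_{\ve}(z)=\tau G_{\ve}(z)$.

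First I would start from the standard resolvent identity
\begin{equation*}
R^{0}_{\ve}(z)-R^{0}_{\ve}(w)=(z-w)R^{0}_{\ve}(z)R^{0}_{\ve}(w)=(z-w)R^{0}_{\ve}(w)R^{0}_{\ve}(z), \qquad z,w\in\CO\backslash[0,+\infty),
\end{equation*}
which holds because $H^{0}_{\ve}$ is self-adjoint on $H^{2}(\RE^{2})$. Applying the bounded trace operator $\tau:H^{2}(\RE^{2})\to H^{3/2}(\RE)$ on the left of each term and using $\breve G_{\ve}(z)=\tau R^{0}_{\ve}(z)$ (restricted to $L^{2}_{\bosfer}(\RE^{2})$), one immediately obtains the first identity for $\breve G_{\ve}$.

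Next, to derive \eqref{resid3-b} for $G_{\ve}$, I would replace $z$ with $\bar z$ and $w$ with $\bar w$ in the identity just obtained for $\breve G_{\ve}$ and take the adjoint. Since $R^{0}_{\ve}(\bar z)^{*}=R^{0}_{\ve}(z)$ and by definition $G_{\ve}(z)=\breve G_{\ve}(\bar z)^{*}$, one gets
\begin{equation*}
G_{\ve}(z)-G_{\ve}(w) = (\bar z-\bar w)^{*} \big(\breve G_{\ve}(\bar w)R^{0}_{\ve}(\bar z)\big)^{*} = (z-w)\,R^{0}_{\ve}(z)G_{\ve}(w),
\end{equation*}
and analogously for the other expression; nothing more than careful bookkeeping of complex conjugates and the order of factors is needed here.

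Finally, for \eqref{Mb_prop2-b}, I would apply $\tau$ on the left of \eqref{resid3-b}. Using $M_{\ve}(z)=\tau G_{\ve}(z)$ and $\breve G_{\ve}(w)=\tau R^{0}_{\ve}(w)$ gives
\begin{equation*}
M_{\ve}(z)-M_{\ve}(w) = (z-w)\,\tau R^{0}_{\ve}(w)\,G_{\ve}(z) = (z-w)\,\breve G_{\ve}(w)\,G_{\ve}(z),
\end{equation*}
and symmetrically with $z$ and $w$ exchanged. There is no real obstacle: the only subtlety is that all operators must be composed on function spaces where the mapping properties match, but this is automatic since every factor on the right-hand sides is bounded on $L^{2}$ by the mapping properties recalled before the statement. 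Hence the proof reduces to the same three-line argument as in Lemma \ref{rem}, simply with $\Tau$, $\breve\Gb_{\ve}$, $\Gb_{\ve}$, $\Mb_{\ve}$ replaced by $\tau$, $\breve G_{\ve}$, $G_{\ve}$, $M_{\ve}$.
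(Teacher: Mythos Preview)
Your proposal is correct and follows exactly the paper's own argument: the paper simply states that the proof is the same as that of Lemma~\ref{rem}, using the identifications $\breve G_{\ve}(z)=\tau R^{0}_{\ve}(z)$, $G_{\ve}(z)=(\tau R^{0}_{\ve}(\bar z))^{*}$ and $M_{\ve}(z)=\tau G_{\ve}(z)$, which is precisely what you do.
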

\begin{proposition}\label{p:decomposition} (i) For any $w,\,z\in  \CO\backslash[0,+\infty)$ and $\psi\in D(H^\bosfer_\ve)$, defining 
\[
\psi_{z}:=\psi+2\alpha G_{\ve}(z)\tau\psi\,,
\]
one has 
\[
(H^\bosfer_\ve -w ) \psi = (H^{0}_{\ve} -w)\psi_{z}-(z-w) 2\alpha G_{\ve}(z)\tau\psi\,.
\]
(ii) For any $w \in \CO\backslash[0,+\infty)$, $z\in  \rho(H^\bosfer_\ve)\cap \CO\backslash[0,+\infty)$ and $\xi \in  L^2(\RE)$, defining 
\[
\phi_{w,\xi}:=R^{\bosfer}_{\ve}(z)G_{\ve}(w)\xi\,,
\]
one has
\[
(H^\bosfer_\ve - w)\phi_{w,\xi}=  G_\ve (z) \left( 1  + 2\alpha M_{\ve}(z)\right)^{-1} \left( 1  +2\alpha M_{\ve}(w)\right) \xi\,.
\]
\end{proposition}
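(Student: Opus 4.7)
For part (i), the plan is to start from the defining identity for $H^\bosfer_\ve$ given in Theorem \ref{th:resolvent-main}, namely
$$
(H^\bosfer_\ve -z)\psi =(H^{0}_{\ve} -z )\psi_{z}\,,
$$
and then pivot from the spectral parameter $z$ to $w$ on both sides. Writing
$$
(H^\bosfer_\ve -w)\psi =(H^\bosfer_\ve -z)\psi+(z-w)\psi\qquad\text{and}\qquad
(H^{0}_{\ve} -z)\psi_{z}=(H^{0}_{\ve} -w)\psi_{z}+(w-z)\psi_{z}\,,
$$
and subtracting, one gets $(H^\bosfer_\ve -w)\psi=(H^{0}_{\ve} -w)\psi_{z}-(z-w)(\psi_{z}-\psi)$. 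Since by construction $\psi_{z}-\psi=2\alpha G_{\ve}(z)\tau\psi$, the desired formula follows immediately. No substantial obstacle is expected here; this is just algebraic bookkeeping exploiting the $z$-independence of the domain.

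For part (ii), I would start from $(H^\bosfer_\ve-z)\phi_{w,\xi}=G_{\ve}(w)\xi$ (since $\phi_{w,\xi}=R^{\bosfer}_{\ve}(z)G_{\ve}(w)\xi$) and rewrite
$$
(H^\bosfer_\ve-w)\phi_{w,\xi}=G_{\ve}(w)\xi+(z-w)\phi_{w,\xi}\,.
$$
Using the Krein formula \eqref{krein-b} to expand $\phi_{w,\xi}$, the key step is to resolve the two compositions $(z-w)R^{0}_{\ve}(z)G_{\ve}(w)$ and $(z-w)\breve G_{\ve}(z)G_{\ve}(w)$ by means of the resolvent-type identities \eqref{resid3-b} and \eqref{Mb_prop2-b}, which yield respectively $G_{\ve}(z)-G_{\ve}(w)$ and $M_{\ve}(z)-M_{\ve}(w)$. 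Substituting these, the $G_{\ve}(w)\xi$ contribution cancels and one is left with
$$
(H^\bosfer_\ve-w)\phi_{w,\xi}=G_{\ve}(z)\bigl[1-2\alpha(1+2\alpha M_{\ve}(z))^{-1}(M_{\ve}(z)-M_{\ve}(w))\bigr]\xi\,.
$$

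The final algebraic manipulation, which I expect to be the only point requiring some care, is to factor out $(1+2\alpha M_{\ve}(z))^{-1}$ from the bracket and observe that
$$
(1+2\alpha M_{\ve}(z))-2\alpha(M_{\ve}(z)-M_{\ve}(w))=1+2\alpha M_{\ve}(w)\,,
$$
which yields exactly the claimed identity. The main (mild) obstacle is keeping track of the non-commuting factors and making sure that the resolvent-type identities \eqref{resid3-b} and \eqref{Mb_prop2-b} are applied in the correct order; apart from that, the argument is a direct consequence of the Krein-type resolvent formula derived in Theorem \ref{th:resolvent-main}.
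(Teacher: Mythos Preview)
Your proposal is correct and follows essentially the same approach as the paper's proof: part (i) is the same algebraic pivot from $z$ to $w$ using the defining identity $(H^\bosfer_\ve-z)\psi=(H^0_\ve-z)\psi_z$, and part (ii) is the same expansion via the Krein formula \eqref{krein-b} combined with the identities \eqref{resid3-b} and \eqref{Mb_prop2-b}, followed by the same factorization of $(1+2\alpha M_\ve(z))^{-1}$.
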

\begin{proof} (i) By the action  of $H^\bosfer_\ve$ on its domain, one has
\begin{align*}
(H^\bosfer_\ve -w ) \psi =& (H^{0}_\ve - z) (\psi+2\alpha G_{\ve}(z)\tau\psi)
 + (z-w) \psi\\
 =& (H^{0}_\ve -z) \psi_{z}
 + (z-w) (\psi_{z}-2\alpha G_{\ve}(z)\tau\psi)\\
 =&(H^{0}_{\ve} -w)\psi_{z}-(z-w) 2\alpha G_{\ve}(z)\tau\psi\,.
\end{align*}
(ii) By \eqref{krein-b}, \eqref{Mb_prop2-b} and \eqref{resid3-b}, one gets
\begin{align*}
&(H^\bosfer_\ve - w)\phi_{w,\xi}=(H^\bosfer_\ve - z)\phi_{w,\xi}+(z-w)\phi_{w,\xi}
=G_{\ve}(w)\xi+(z-w)\phi_{w,\xi}\\
=&G_{\ve}(w)\xi+(z-w)R^{0}_{\ve}(z)G_{\ve}(w)\xi - (z-w)2\alpha G_{\ve}(z)(1  +2\alpha M_{\ve}(z))^{-1} \breve G_{\ve}(z)G_{\ve}(w)\xi\\
=&G_{\ve}(z)\xi-  G_{\ve}(z)(1  +2\alpha M_{\ve}(z))^{-1} (1+2\alpha M_{\ve}(z)-1-2\alpha M_{\ve}(w))\xi\\
=&G_\ve (z) \left( 1  + 2\alpha M_{\ve}(z)\right)^{-1} \left( 1  +2\alpha M_{\ve}(w)\right) \xi\,.
\end{align*}
\end{proof}

\begin{lemma}\label{l:spectrum} Let $\lambda > 0$. Then,
\begin{enumerate}[(i)]
    \item $-\lambda \in \sigma_{p}(H^\bosfer_\ve) $ if and only if $0  \in \sigma_{p}(1  +2\alpha M_{\ve}(-\lambda))$;
    \item $-\lambda \in \sigma_{ess}(H^\bosfer_\ve) $ if and only if $0  \in \sigma_{ess}(1  +2\alpha M_\ve(-\lambda))$.
\end{enumerate}
\end{lemma}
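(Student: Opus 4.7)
Both parts rest on the explicit characterization $\psi\in D(H^\bosfer_\ve)\ \Leftrightarrow\ \psi+2\alpha G_\ve(-\lambda)\tau\psi\in H^2_\bosfer(\RE^2)$ from Theorem~\ref{th:resolvent-main}, on the identities $\tau G_\ve(z)=M_\ve(z)$ and $\breve G_\ve(z)=\tau R^0_\ve(z)$, and on the fact that $H^0_\ve+\lambda:H^2\to L^2$ is an isomorphism for $\lambda>0$. Part (i) is a bijection between eigenspaces via $\xi\leftrightarrow-2\alpha G_\ve(-\lambda)\xi$. For part (ii) I would use Weyl's singular-sequence criterion for self-adjoint operators in both directions, transporting sequences through $G_\ve(-\lambda)$ and $\tau$; the case $\alpha=0$ is trivial, so I assume $\alpha\neq 0$.

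\textbf{Part (i).} For $(\Leftarrow)$, given $\xi\in L^2(\RE)\setminus\{0\}$ with $(1+2\alpha M_\ve(-\lambda))\xi=0$, set $\psi:=-2\alpha G_\ve(-\lambda)\xi$. Then $\tau\psi=-2\alpha M_\ve(-\lambda)\xi=\xi$, so $\psi+2\alpha G_\ve(-\lambda)\tau\psi=0\in H^2_\bosfer(\RE^2)$; hence $\psi\in D(H^\bosfer_\ve)$ and $(H^\bosfer_\ve+\lambda)\psi=(H^0_\ve+\lambda)(0)=0$, with $\psi\neq 0$ since $\tau\psi=\xi\neq 0$. For $(\Rightarrow)$, given a nonzero eigenfunction $\psi$, Theorem~\ref{th:resolvent-main} yields $\psi+2\alpha G_\ve(-\lambda)\tau\psi\in H^2$ with $(H^0_\ve+\lambda)(\psi+2\alpha G_\ve(-\lambda)\tau\psi)=0$; injectivity of $H^0_\ve+\lambda$ on $H^2$ gives $\psi=-2\alpha G_\ve(-\lambda)\tau\psi$, and applying $\tau$ yields $(1+2\alpha M_\ve(-\lambda))\tau\psi=0$, with $\tau\psi\neq 0$ (else $\psi=0$).

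\textbf{Part (ii), direction $(\Rightarrow)$.} Take an orthonormal Weyl sequence $\psi_n\in D(H^\bosfer_\ve)$ at $-\lambda$ and set $\xi_n:=\tau\psi_n$. From $\psi_n+2\alpha G_\ve(-\lambda)\xi_n=R^0_\ve(-\lambda)(H^\bosfer_\ve+\lambda)\psi_n\to 0$ in $H^2$, applying $\tau$ gives $(1+2\alpha M_\ve(-\lambda))\xi_n=\breve G_\ve(-\lambda)(H^\bosfer_\ve+\lambda)\psi_n\to 0$ in $L^2(\RE)$. A uniform upper bound for $\|\xi_n\|$ follows from $\|\psi_n\|_{H^{3/2-}}\leq C$ (using $R^\bosfer_\ve(w):L^2\to D(H^\bosfer_\ve)\subset H^{3/2-}$ bounded for some $w\in\rho(H^\bosfer_\ve)\cap\CO\backslash[0,+\infty)$) and $\tau$ bounded from $H^{3/2-}$ to $L^2$; a lower bound follows from $1=\|\psi_n\|\leq 2|\alpha|\,\|G_\ve(-\lambda)\|\,\|\xi_n\|+o(1)$. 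If $\xi_{n_k}\to\xi_\infty$ strongly, then $\psi_{n_k}\to-2\alpha G_\ve(-\lambda)\xi_\infty$ strongly, but $\psi_n\to 0$ weakly, so $G_\ve(-\lambda)\xi_\infty=0$. The injectivity of $G_\ve(-\lambda)$, equivalent to the density in $L^2(\RE)$ of $\tau H^2_\bosfer(\RE^2)$ (which I would check using the Sobolev trace theorem together with symmetrization), forces $\xi_\infty=0$, producing the contradiction $\psi_{n_k}\to 0$ strongly. Hence $\xi_n/\|\xi_n\|$ is a Weyl sequence for $1+2\alpha M_\ve(-\lambda)$ at $0$.

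\textbf{Part (ii), direction $(\Leftarrow)$, and the main obstacle.} Let $\{\xi_n\}$ be an orthonormal Weyl sequence for $1+2\alpha M_\ve(-\lambda)$ at $0$. Fix $z\in\rho(H^\bosfer_\ve)\cap\CO\backslash[0,+\infty)$ with $z\neq-\lambda$ and set $\phi_n:=R^\bosfer_\ve(z)G_\ve(-\lambda)\xi_n\in D(H^\bosfer_\ve)$. Proposition~\ref{p:decomposition}(ii) gives $(H^\bosfer_\ve+\lambda)\phi_n=G_\ve(z)(1+2\alpha M_\ve(z))^{-1}(1+2\alpha M_\ve(-\lambda))\xi_n\to 0$ in $L^2$, and $\phi_n\to 0$ weakly. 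The delicate point is to show $\|\phi_n\|_{L^2}\not\to 0$. My plan is to compute $\tau\phi_n$ directly: from $\phi_n+2\alpha G_\ve(z)\tau\phi_n=R^0_\ve(z)G_\ve(-\lambda)\xi_n$, applying $\tau$ and using \eqref{Mb_prop2-b} yields
\[
(1+2\alpha M_\ve(z))\tau\phi_n=\frac{1}{-\lambda-z}\,(M_\ve(-\lambda)-M_\ve(z))\xi_n\,,
\]
and substituting $M_\ve(-\lambda)\xi_n=-\xi_n/(2\alpha)+o(1)_{L^2}$ produces $\tau\phi_n=-\xi_n/\bigl(2\alpha(-\lambda-z)\bigr)+o(1)_{L^2}$; hence $\|\tau\phi_n\|_{L^2}$ is bounded below by a positive constant. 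Since $\tau:H^s_\bosfer(\RE^2)\to L^2(\RE)$ is bounded for any $s>1/2$, this gives $\|\phi_n\|_{H^s}\geq c>0$; combined with the uniform bound $\|\phi_n\|_{H^{3/2-}}\leq C$ (from boundedness of $R^\bosfer_\ve(z)G_\ve(-\lambda):L^2(\RE)\to H^{3/2-}_\bosfer(\RE^2)$) and Sobolev interpolation between $L^2$ and $H^{3/2-}$, one obtains $\|\phi_n\|_{L^2}\geq c'>0$. After normalization $\{\phi_n\}$ is the desired Weyl sequence for $H^\bosfer_\ve$ at $-\lambda$.
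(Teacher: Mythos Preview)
Your proof is correct and follows the same overall strategy as the paper (transport Weyl sequences back and forth via $G_\ve(-\lambda)$ and $\tau$, using Proposition~\ref{p:decomposition}), but several of the sub-arguments differ in interesting ways.

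In part~(i)$(\Leftarrow)$ you build the eigenfunction directly as $\psi=-2\alpha G_\ve(-\lambda)\xi$, which is cleaner than the paper's $\phi_{-\lambda,\xi}=R^\bosfer_\ve(z)G_\ve(-\lambda)\xi$; the two are in fact equal. In part~(ii)$(\Rightarrow)$ the paper shows $\tau\psi_n\rightharpoonup 0$ by first proving $\psi_{n,z}\rightharpoonup 0$ in $H^2$ via a duality argument, and obtains the lower bound on $\|\tau\psi_n\|$ by contradiction (else one would get a singular Weyl sequence for $H^0_\ve$ at $-\lambda<0$). Your route instead bounds $\|\xi_n\|$ below from $\psi_n+2\alpha G_\ve(-\lambda)\xi_n\to 0$ and rules out strongly convergent subsequences via the injectivity of $G_\ve(-\lambda)$; this uses the equivalent ``no convergent subsequence'' criterion for $\sigma_{ess}$ and is perfectly valid. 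Note that the injectivity of $G_\ve(-\lambda)$ follows at once from that of $\Gb_\ve(-\lambda)$ (density of $\ran(\Tau)$, stated just before Remark~\ref{r:Tau}), since $G_\ve(-\lambda)\xi=0$ implies $\Gb_\ve(-\lambda)(\xi\oplus(\bosferpm_\bosfer\xi))=0$. In part~(ii)$(\Leftarrow)$ the paper shows $\|\phi_n\|\not\to 0$ by contradiction: if $\phi_n\to 0$ then $G_\ve(-\lambda)\xi_n\to 0$, and \eqref{Mb_prop2-b} forces $(1+2\alpha M_\ve(z))\xi_n\to 0$, contradicting the bounded invertibility of $1+2\alpha M_\ve(z)$. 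Your explicit computation of $\tau\phi_n$ followed by Sobolev interpolation between $L^2$ and $H^{3/2-}$ is a nice alternative that avoids the contradiction and gives a quantitative lower bound; the paper's argument is shorter but less constructive.
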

\begin{proof}   
At first, let us notice that $-\lambda \in \rho (H^{0}_{\ve})$. 

(i) Let $\psi\in D(H^\bosfer_\ve)$ be an eigenvector of $H^\bosfer_\ve$ with eigenvalue $-\lambda$. 
By Proposition \ref{p:decomposition}(i),
\[0 =(H^\bosfer_\ve +\lambda ) \psi = (H^{0}_{\ve} +\lambda) (\psi+2\alpha G_{\ve}(z)\tau\psi)
 - (z+\lambda) 2\alpha G_{\ve}(z)\tau\psi\,,\qquad\forall z\in\CO\backslash[0,+\infty)\,.
\]
This gives
\[
 \psi+2\alpha G_{\ve}(-\lambda)\tau\psi =2\alpha(G_{\ve}(-\lambda)-G_{\ve}(z))\tau\psi - 2\alpha(z+\lambda)R^{0}_{\ve}(-\lambda)G_{\ve}(z)\tau\psi\,. 
\]
Applying $\tau$ to both sides and taking into account \eqref{Mb_prop2-b}, one gets
\[
(1+2\alpha M_{\ve}(-\lambda))\tau\psi=2\alpha  (M_{\ve}(-\lambda)-M_{\ve}(z))\tau\psi-2\alpha(z+\lambda)\breve G_{\ve}(-\lambda)G_{\ve}(z)\tau\psi=0\,.
\]

On the other hand, assume that there exists $\xi \in  L^2(\RE)$ such that $(1  + 2\alpha M_{\ve}(-\lambda))\xi =0$. Then, by Proposition \ref{p:decomposition}(ii), $\phi_{-\lambda,\xi}=R^{\bosfer}_{\ve}(z)G_{\ve}(-\lambda)\xi$, $z\in  \rho(H^\bosfer_\ve)\cap \CO\backslash[0,+\infty)$, is an eigenfunction with eigenvalue $-\lambda$.

(ii) Let $\{\psi_n\}$ be a singular Weyl  sequence for $( H^\bosfer_\ve\,,-\lambda)$, i.e., $\psi_n\in D( H^\bosfer_\ve)$,  $\|\psi_n\|= 1$, $\psi_n \rightharpoonup 0$, and $\| (H^\bosfer_\ve +\lambda )\psi_n \| \to 0$. Arguing as above, one gets, for any $z\in\CO\backslash[0,+\infty)$,
\begin{align*}
&\breve G_{\ve}(-\lambda)(H^\bosfer_\ve +\lambda ) \psi_{n}=
\tau R^{0}_{\ve}(-\lambda)\left((H^{0}_{\ve} +\lambda) (\psi_{n}+2\alpha G_{\ve}(z)\tau\psi_{n})+(z+\lambda) 2\alpha G_{\ve}(z)\tau\psi_{n}\right)
\\=&
\tau  (\psi_n+2\alpha G_{\ve}(z)\tau\psi_{n})
 + (z+\lambda) 2\alpha \breve G_{\ve}(-\lambda)G_{\ve}(z)\tau\psi_{n}
\\
=&(1+2\alpha M_{\ve}(-\lambda))\tau\psi_{n}-2\alpha(M_{\ve}(-\lambda)-M_{\ve}(z))\tau\psi_{n}+2\alpha(z+\lambda)\breve G_{\ve}(-\lambda)G_{\ve}(z)\tau\psi_{n}
\\
=&(1+2\alpha M_{\ve}(-\lambda))\tau\psi_{n} 
\end{align*}
and so $(1+2\alpha M_{\ve}(-\lambda))\tau\psi_{n}\to 0$. Let us now show that $\tau\psi_{n}\rightharpoonup 0$.\par
Defining $\psi_{n,z}:=\psi_{n}+2\alpha G_{\ve}(z)\tau\psi_{n}\in H_{\bosfer}^{2}(\RE^{2})$,  one has, for
any $f\in H^{-2}(\RE^{2} )$ and for any $z\in\CO\backslash[0,+\infty)$,
\begin{align*}
\langle(H^\bosfer_\ve+\lambda)\psi_{n},R^{0}_{\ve}(\bar z)f\rangle 
=&\langle(H^{0}_{\ve}-z)\psi_{n,z},R^{0}_{\ve}(\bar z)f\rangle +(\lambda-z)\langle\psi_{n}, R^{0}_{\ve}(\bar z)f\rangle \\
=&\langle\psi_{n,z},f\rangle_{+2,-2}+(\lambda-z)\langle\psi_{n}, R^{0}_{\ve}(\bar z)f\rangle \,.
\end{align*}
This gives $\psi_{n,z}\rightharpoonup 0$ in $H^{2}(\RE^{2})$ and so, by $\tau\in\Bou(H^{2}(\RE^{2}),L^{2}(\RE))$ and by $\tau\psi_{n}=(1+2\alpha M_{\ve}(z))^{-1}\tau\psi_{n,z}$, one gets $\tau\psi_{n}\rightharpoonup 0$ in $L^{2}(\RE)$. To get a singular Weyl sequence for 
$(1  + 2\alpha M_{\ve}(-\lambda),0)$ we need to normalize $\tau\psi_{n}$, so we need to show that $\|\tau\psi_{n}\|$ does not converge to zero. Let us assume that this is not the case. Then, by 
Proposition \ref{p:decomposition}(i), there follows  $\|(H^{0}_\ve+\lambda) \psi_{n,z}\|\to 0$, and, by $\|\psi_{n}\|=1$, one would have $\|\psi_{n,z}\| \to  1$, but this is impossible because, by taking as singular Weyl sequence $\psi_{n,z}/ \|\psi_{n,z}\|$, it would imply $-\lambda \in \sigma_{ess}(H^{0}_{\ve})$. 
\par
To conclude, we consider a singular Weyl sequence $\{\xi_n\}$, $\|\xi_n\|= 1$, $\xi_n \rightharpoonup 0$, and $\|(1  + 2\alpha M_{\ve}(-\lambda))\xi_n\|\to 0$. By Proposition \ref{p:decomposition}(ii), the sequence $\{\phi_{n}\}$, $\phi_{n}:=R^{\bosfer}_{\ve}(z)G_{\ve}(-\lambda)\xi_{n}$, $z\in \rho(H^\bosfer_\ve) \cap \CO\backslash[0,+\infty)$ is such that $\phi_n \rightharpoonup 0$ and $\|(H^\bosfer_\ve+\lambda)\phi_{n}\|\to 0$. We are left to prove that $\|\phi_{n}\|$ does not converge to zero. Let us assume that this is not the case. By $R^{0}_{\ve}(z)=R^{\bosfer}_{\ve}(z)+2\alpha G_{\ve}(z)\tau R^{0}_{\ve}(z)$, there follows $\phi_{n}+2\alpha G_{\ve}(z)\tau\phi_{n}=R^{0}_{\ve}(z)G_{\ve}(-\lambda)\xi_{n}$. Thus,
\[
(H^\bosfer_\ve+\lambda)\phi_{n}=(H^{0}_{\ve}-z)(\phi_{n}+2\alpha G_{\ve}(z)\tau\phi_{n})+(z+\lambda)\phi_{n}=G_{\ve}(-\lambda)\xi_{n}+(z+\lambda)\phi_{n}\,,
\]
and so $G_{\ve}(-\lambda)\xi_{n}\to 0$. This contradicts the bounded invertibility of $1+2\alpha M_{\ve}(z)$, which is equivalent to the existence of  $c_{z}>0$ such that $\|(1+2\alpha M_{\ve}(z))\xi_{n}\|\ge c_{z}\|\xi_{n}\|=c_{z}$. Indeed, by \eqref{Mb_prop2-b}, one obtains 
\begin{align*}
0<&\|(1+2\alpha M_{\ve}(z))\xi_{n}\|\le \|(1+2\alpha M_{\ve}(-\lambda))\xi_{n}\|+
2|\alpha|\,\|(M_{\ve}(z)-M_{\ve}(-\lambda))\xi_{n}\|\\
=&\|(1+2\alpha M_{\ve}(-\lambda))\xi_{n}\|+2|\alpha|\,|z+\lambda |
\,\|\breve G_{\ve}(z)G_{\ve}(-\lambda)\xi_{n}\|\,.
\end{align*}
\end{proof}

\begin{lemma}\label{l:esssp}
 For all $\ve>0$ and  $\alpha\in\RE$,    $[0, +\infty) \subseteq \sigma_{ess}( H^\bosfer_\ve)$.
\end{lemma}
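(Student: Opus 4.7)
The plan is to apply Weyl's singular sequence criterion: for each $E\ge0$ I construct $\{\widetilde\psi_n\}\subset D(H^\bosfer_\ve)$ with $\|\widetilde\psi_n\|=1$, $\widetilde\psi_n\rightharpoonup 0$ in $L^{2}(\RE^{2})$, and $\|(H^\bosfer_\ve-E)\widetilde\psi_n\|\to 0$. The idea is to take a standard Weyl sequence for the free operator $H^{0}_{\ve}$ at energy $E$, translate its support far from the coincidence set $\Pi=\Pi_{1}\cup\Pi_{2}$ and from its reflection under $S$, and symmetrize. The resulting function then lies in $H^{2}_{\bosfer}(\RE^{2})\cap\ker(\tau)$, which, by Theorem \ref{th:resolvent-main}, is contained in $D(H^\bosfer_\ve)$, and on which $H^\bosfer_\ve$ acts as $-\ve^{2}\partial_{x}^{2}-\partial_{y}^{2}$ (see also Remark \ref{r:temp-b}).

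Explicitly, pick $\chi\in C_{0}^{\infty}(\RE^{2})$ with $\supp\chi\subseteq[-1,1]^{2}$ and $\|\chi\|=1$, set $p_{0}:=\sqrt{E}$, and define
\[
\psi_{n}(x,y):=\frac{1}{n}\,\chi\!\left(\frac{x-2n}{n},\frac{y-3n}{n}\right)e^{ip_{0}y},\qquad \widetilde\psi_{n}:=\frac{1}{\sqrt{2}}\bigl(\psi_{n}+\bosferpm_\bosfer S\psi_{n}\bigr).
\]
Then $\supp\psi_{n}\subseteq[n,3n]\times[2n,4n]$, so on this set $y-x/2\ge n/2$ and $y+x/2\ge 5n/2$, giving $\supp\psi_{n}\cap\Pi=\emptyset$; the same holds for $S\psi_{n}$, whose support lies in $[-3n,-n]\times[2n,4n]$. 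These two supports being disjoint yields $\widetilde\psi_{n}\in L^{2}_{\bosfer}(\RE^{2})\cap H^{2}(\RE^{2})$, $\tau\widetilde\psi_{n}=0$, and $\|\widetilde\psi_{n}\|=\|\psi_{n}\|=\|\chi\|=1$. Testing against $C_{0}^{\infty}(\RE^{2})$ and noting that the supports escape every compact set yields $\widetilde\psi_{n}\rightharpoonup 0$.

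A direct differentiation, using $p_{0}^{2}=E$, gives
\[
(H^{0}_{\ve}-E)\psi_{n}=-\frac{e^{ip_{0}y}}{n}\!\left[\frac{\ve^{2}}{n^{2}}(\partial_{u}^{2}\chi)_{n}+\frac{1}{n^{2}}(\partial_{v}^{2}\chi)_{n}+\frac{2ip_{0}}{n}(\partial_{v}\chi)_{n}\right],
\]
where $(\partial^{\alpha}\chi)_{n}$ denotes the derivative evaluated at $((x-2n)/n,(y-3n)/n)$. A change of variables bounds each $L^{2}$-norm by $O(n^{-1})$, hence $\|(H^{0}_{\ve}-E)\psi_{n}\|\to 0$; since $S$ is unitary and commutes with $H^{0}_{\ve}$, the same bound passes to $\widetilde\psi_{n}$, completing the verification via Weyl's criterion. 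The only subtle point is geometric: the cutoff must be placed so that bosonic/fermionic symmetrization neither reintroduces mass near $\Pi$ nor causes interference between $\psi_{n}$ and $S\psi_{n}$; situating the support well inside the first quadrant at macroscopic distance from both lines $y=\pm x/2$ resolves both issues simultaneously.
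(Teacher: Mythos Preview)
Your proof is correct and follows essentially the same approach as the paper: both construct a Weyl sequence for the free operator $H^{0}_{\ve}$ at energy $E\ge 0$ by a compactly supported cutoff translated far from $\Pi$, then symmetrize under $S$ to obtain a singular sequence in $D(H^\bosfer_\ve)\cap\ker(\tau)$ on which $H^\bosfer_\ve$ acts as $H^{0}_{\ve}$. The only cosmetic differences are the placement of the support (the paper centers at $(3n,0)$ rather than $(2n,3n)$) and the choice of the oscillatory factor (the paper uses a plane wave in both variables).
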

\begin{proof}
 The idea is to construct a singular Weyl  sequence supported away from the coincidence lines, which also serves as a singular sequence for the free Hamiltonian. Let $\chi$ be a 
    $C_{0}^\infty(\RE^2)$, spherically symmetric function, such that $\chi(\x) =0$ for all $|\x|\geq 1$ and $\|\chi\| = 1$. 

    For $n\ge 1$,  define a sequence of functions:
    \begin{equation*}
        \chi_{n}(\x) := \frac{\sqrt{2}}{n} \,\chi\left(\frac{\x-\x_{n}}{n}\right), \qquad \x_{n}  =(3n,0) \, .
    \end{equation*}
Note that for any $n \geq 1$, the support of $\chi_{n}(\x)$  lies in an  open disk, which we denote   by $\mathcal D_{n}$, centered at $\x_{n}$ with radius $R=n$. In particular, $\chi_{n}(\x)$ vanishes on the coincidence lines $y = \pm x/2$. Next, for $n \geq 1$, we  define 
\[
        \psi_{n}(\x) := e^{-i\k\cdot \x} \chi_n (\x) \qquad \k = (\ve^{-1}\sqrt{\lambda/2},\sqrt{\lambda/2}) \, ,
\]
and 
 \begin{equation*}
        \psi^{\bosfer}_{n} :=  \frac{1\bosferpm_\bosfer S}{2} \,      \psi_{n}  ,
    \end{equation*}
with the orthogonal projection $\frac{1\bosferpm_\bosfer S}{2}$ defined in Eqs. \eqref{SS1} - \eqref{SS2}. 

Note that for any $n \geq 1$, the support of $\psi_{n}$  lies in the  open disk  $D_n$, while the support of $S\psi_{n}$  lies in an  open disk  centered at $\x_{n}=(-3n,0) $ with radius $R=n$. The supports of $\psi_n$ and $S\psi_n$ are disjoint and both functions vanish on the coincidence lines $y = \pm x/2$. By construction,  $\| \psi^{\bosfer}_{n}\| = 1$, furthermore,  $\psi_n^{\bosfer} \in D(H^\bosfer_\ve)\cap D( H^0_\ve )$ and $H^\bosfer_\ve \psi_n^{\bosfer} =  H^0_\ve \psi_n^{\bosfer}$, and we have (we use the notation $\nabla_{\!\ve} \phi= (\ve\partial_x\phi, \partial_y\phi $): 
    \begin{equation*}
        \begin{split}
            \lVert (H^\bosfer_\ve - \lambda) \psi^{\bosfer}_{n} \rVert
            &=\lVert ( H^0_\ve - \lambda) \psi^{\bosfer}_{n} \rVert \\  
            &\leq  \lVert ( H^0_\ve - \lambda) \psi_{n} \rVert= \left\lVert 2 (\nabla_{\!\ve} \chi_n ) \cdot (\nabla_{\!\ve} e^{-i\k\cdot \x }) + (H^0_\ve \chi_n) e^{-i\k\cdot\x} \right\rVert\\
            &\leq 2 \sqrt \lambda\, \lVert \nabla_{\!\ve} \chi_n \rVert + \lVert H^0_\ve\chi_n  \rVert 
            \\&
            \leq c_{\ve}\,\left(\frac{\sqrt\lambda}{n} + \frac{1}{n^2} \right)\, .
        \end{split}
    \end{equation*}
    In the last inequality, we used the fact that $\chi\in C_{0}^\infty(\RE^2) $, so its gradient and Laplacian are bounded. Therefore,
    \begin{equation*}
        \lim_{n \to \infty} \lVert (H^\bosfer_\ve - \lambda) \psi^{\bosfer}_{n} \rVert = 0 \, .
    \end{equation*}
Finally, for any $\phi\in L^{2}_{\bosfer}(\RE^{2})$ one has, by the Cauchy-Schwarz inequality and by the dominated convergence theorem, 
\[
\left|\langle\phi,\psi^{\bosfer}_{n}\rangle \right|^{2}= \left|\langle\phi,\psi_{n}\rangle \right|^{2} \le   \frac{2} {n^{2}}\,\text{Area}(D_{n}) \int_{\RE^{2}}\left|\phi(\x)\chi\left(\frac{\x-\x_{n}}{n}\right)\right|^{2}
\d\x\to 0 \,.
\]
Therefore $\psi_{n} \rightharpoonup 0$ and this concludes the proof.
\end{proof}
We conclude this section with a result that is a  part of Theorem \ref{th:main}.
\begin{theorem}\label{th:spectrumHve}For all $\ve>0$
\[
\sigma_{ess}(H^{\bosfer}_\ve) = \begin{cases}
\sigma(H^{\bosfer}_\ve)=[0,+\infty)&\alpha\geq   0\\
\left[-\frac{\alpha^2}{4+\ve^{2}}\,,+\infty\right)&\alpha<0\,.
\end{cases}
\]
\end{theorem}
\begin{proof} 
By Lemma \ref{l:esssp}, $[0,+\infty) \subseteq \sigma_{ess}(H^\bosfer_\ve)$ for any $\alpha\in \RE$. \par If $\alpha \geq 0$ the quadratic form associated to $H^\bosfer_\ve$ is positive definite, hence $\sigma(H^\bosfer_\ve)\subseteq [0,+\infty) \subseteq \sigma_{ess}(H^\bosfer_\ve)\subseteq\sigma(H^\bosfer_\ve)$ which concludes the proof in the case $\alpha\geq 0$.

    Let $ \alpha<0$ and $\lambda>0$. Since $M_{od,\ve}(-\lambda)$ is compact (see the proof of Proposition \ref{p:invertibility}), by the Weyl criterion (see, e.g., \cite[Lemma 3 in Section XIII.4]{ReSi}), there follows  
    \[
    \sigma_{ess}(1+ 2\alpha M_\ve(-\lambda)) = 
    \sigma_{ess}(1+ \alpha\,(M_{d,\ve}(-\lambda)\bosferpm_\bosfer M_{od,\ve}(-\lambda)))
    =\sigma_{ess}(1+ \alpha M_{d,\ve}(-\lambda)). 
    \]
    By \eqref{Md-Fou},   $1+\alpha M_{d,\ve}(-\lambda)$ is unitarily equivalent to the multiplication operator $\widehat M_{\lambda,\ve}$ corresponding to the function 
    \[f_{\lambda,\ve}(s)=1+\frac{\alpha}{\sqrt{4\ve^2s^2+(4+\ve^{2})\lambda}}\,.
    \] 
    Hence,
    \[
          \sigma_{ess}\big(1+2\alpha M_\ve(-\lambda)\big) =\sigma_{ess}(\widehat M_{\lambda,\ve})=\sigma(\widehat M_{\lambda,\ve})=\text{range}(f_{\lambda,\ve})=
   \left[1-\frac{|\alpha|}{\sqrt{(4+\ve^{2})\lambda}}\,,1\right]\,.
   \]     
        By Lemma \ref{l:spectrum} there follows that, given $\lambda>0$, $-\lambda \in \sigma_{ess}(H^\bosfer_\ve)$ if and only if $0\in \left[1-\frac{|\alpha|}{\sqrt{(4+\ve^{2})\lambda}}\,,1\right]$. This gives $\sigma_{ess}(H^\bosfer_\ve)=\left[-\frac{\alpha^2}{4+\ve^{2}}\,,0\right)\cup[0,+\infty)$. 
\end{proof}

\section{The Hamiltonian for the light particle\label{s:light}}
Following \cite[Section II.2.1]{AGHKH05} (see also \cite[Section 8.5]{Teta}), we introduce the self-adjoint Hamiltonian $h_{x}$ modeling a delta-interaction of a 1D quantum particle with two centers placed at points $-x/2$ and $x/2$. The operator $h_{x}$ depends on a real parameter 
$\alpha$ associated to  the strength of the interaction; for notational simplicity, we prefer not to explicitly indicate such a dependency. One has
\begin{equation}\label{hx-dom}
\begin{aligned}
D(h_x):= \big\{ & u \in H^2(\RE\backslash\{\pm x/2\})\cap H^1(\RE): [u'\,](\pm x/2)=\alpha u(\pm x/2)\big\}\,,
\end{aligned}
\end{equation}
\begin{equation}\label{hx}
h_x u(y)  = -u''(y) \qquad \text{for a.e. $y \in \RE\backslash \{\pm x/2\}$}\,,
\end{equation}
where $[u'](y)$ denotes the jump of the derivative $u'$ across $y$.\par 
Furthermore, the sesquilinear form associated to $h_{x}$ is given by  
\begin{equation}\label{sesbx1}
b_{x}: H^1(\RE) \times H^1(\RE)\subset L^{2}(\RE)\times
 L^{2}(\RE)\to\CO
 \end{equation}
\begin{equation}\label{sesbx2}
b_x (u,v) := \langle u', v'\rangle + \alpha \bar u(x/2)v(x/2)+\alpha \bar u(-x/2)v(-x/2)\,.
\end{equation}

\subsection{The spectrum of  \texorpdfstring{$h_x$}{hx}.} For any $z\in \CO\backslash [0,+\infty)$ we denote by $r^{0}(z)$ the resolvent of the self-adjoint Laplacian in $L^{2}(\RE)$ with domain $H^2(\RE)$; its integral kernel is 
\[
 r^{0}(y-y';z)  = i \,  \frac{e^{i\sqrt{z} \,| y - y'| }}{2\sqrt{z}} \qquad z\in \CO\backslash [0,+\infty)\,, \quad\Im \sqrt z>0\,. 
\]
Moreover, for any $z\in \CO\backslash [0,+\infty)$ we set: 
\[\brevegg_x(z):   L^2(\RE) \to \CO^{2}\,,\qquad
\brevegg_x(z) u :=\big((r^{0}(z)u) (x/2),  (r^{0}(z)u) (-x/2)\big)\,;
\]
\[
\bfg_x(z): \CO^{2}\to L^2(\RE)\,, \qquad \quad \bfg_x(z):= \brevegg_x(\bar z)^*\,,
\]
\[
(\bfg_x(z)(\zeta_{1},\zeta_{2}))(y)=r^{0}(y-x/2;z)\zeta_{1}+r^{0}(y+x/2;z)\zeta_{2}\,,
\]
and 
\begin{equation*}
\mm_x(z) :\CO^2 \to \CO^2 \qquad \mm_x(z) : = \frac{i}{2 \sqrt{z}}\,\begin{bmatrix}
 1&   e^{i\sqrt{z}\,|x|} \,\\ 
  e^{i\sqrt{z}\,|x|} & 1 
\end{bmatrix}.
\end{equation*}
For all  $z\in \CO\backslash [0,+\infty)$ and for all $\alpha\in\RE$ such that $\det(1+\alpha\,\mm_{x}(z))\not=0$,  the resolvent of $h_x$ is given by 
\begin{equation*}
    (h_x - z)^{-1}= r^{0}(z)-\alpha
\bfg_x(z)\big(1+\alpha\mm_x(z)\big)^{-1}\brevegg_x(z).
\end{equation*}

The values  $z=-\lambda < 0$ , such that  $\det(1+\alpha\mm_{x}(z)) =0$ equivalently, such that 
\begin{equation}\label{eigeneq2}
     \big(\alpha+2\sqrt{\lambda}\,\big)^{2} =\alpha^{2}  e^{-2\sqrt{\lambda}\,|x|}\,,
\end{equation}
correspond to the eigenvalues of $h_x$.
Obviously, \eqref{eigeneq2} has no solution for $\alpha\geq0$. For $\alpha <0$, Eq. 
\eqref{eigeneq2}  admits one or two solutions according to the values of $\alpha$ and $x$. We summarize the spectral properties of $h_{x}$ in the following
\begin{lemma}\label{l:spectrumhx}
    \item (i) $\sigma_{ac}(h_x) =\sigma_{ess}(h_x) = [0,+\infty)$; 
    \item (ii) $\sigma_{sc}(h_x) = \varnothing$;
    \item (iii)
     \[\sigma_{p}(h_x)=\sigma_{d}(h_{x}) =\begin{cases}\varnothing&\alpha \geq 0\\
    \{ -\lambda_0(x)\}& \alpha<0,\ |x|\leq 2/|\alpha|\\
    \{ -\lambda_0(x), -\lambda_1(x)\}&\alpha<0,\ |x|> 2/|\alpha|\,.\end{cases}\]
\end{lemma}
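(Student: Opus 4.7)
The plan is to exploit the explicit resolvent formula
\[
(h_x-z)^{-1}=r^{0}(z)-\alpha\,\bfg_x(z)\bigl(1+\alpha\,\mm_x(z)\bigr)^{-1}\brevegg_x(z)\,,
\]
which exhibits $h_x$ as a rank-$2$ perturbation of the self-adjoint free Laplacian $-d^{2}/dy^{2}$ on $L^{2}(\RE)$ in resolvent sense. Parts (i) and (ii) will then follow from standard trace-class perturbation theorems, while part (iii) will be reduced to analysing the scalar transcendental equation \eqref{eigeneq2}.

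For (i), since $\bfg_x(z)$ has range in the two-dimensional subspace spanned by $r^{0}(\cdot-x/2;z)$ and $r^{0}(\cdot+x/2;z)$, the resolvent difference $(h_x-z)^{-1}-r^{0}(z)$ has rank at most $2$ and is in particular trace class. Weyl's theorem on the invariance of essential spectra under relatively compact resolvent perturbations then yields $\sigma_{ess}(h_x)=\sigma_{ess}(-d^{2}/dy^{2})=[0,+\infty)$, while the Kato--Rosenblum theorem gives existence and completeness of the wave operators for the pair $(h_x,-d^{2}/dy^{2})$, whence $\sigma_{ac}(h_x)=[0,+\infty)$. Part (ii) follows from the same trace-class resolvent difference via the Birman--Kuroda theorem; equivalently one may invoke the direct analysis in \cite[Ch.~II.2]{AGHKH05}.

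For (iii), I would compute, with $\sqrt{-\lambda}=i\sqrt{\lambda}$ for $\lambda>0$,
\[
\det\bigl(1+\alpha\,\mm_x(-\lambda)\bigr)=\frac{1}{4\lambda}\Bigl[(2\sqrt{\lambda}+\alpha)^{2}-\alpha^{2}\,e^{-2\sqrt{\lambda}|x|}\Bigr],
\]
so the eigenvalue condition is precisely \eqref{eigeneq2}, which splits via square roots into the two branches
\[
2\sqrt{\lambda}+\alpha=\pm\,|\alpha|\,e^{-\sqrt{\lambda}|x|}\,.
\]
For $\alpha\ge 0$ and $\lambda>0$ the left hand side strictly exceeds $|\alpha|\ge |\alpha|e^{-\sqrt{\lambda}|x|}$, while the minus branch would require a non-positive left hand side; hence neither branch admits a solution. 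For $\alpha<0$, writing $s=\sqrt{\lambda}>0$ and $\beta=|\alpha|$, the two branches become the symmetric equation $2s=\beta(1+e^{-s|x|})$ and the antisymmetric one $2s=\beta(1-e^{-s|x|})$. The symmetric equation always admits a unique positive root $s_{0}>\beta/2$ by strict monotonicity (its right hand side decreases from $2\beta$ to $\beta$, while $2s$ grows from $0$). The antisymmetric right hand side is concave, vanishes at $s=0$ with slope $\beta|x|$, and is bounded above by $\beta$; comparing with the line $2s$ gives a unique positive root $s_{1}<\beta/2$ iff $\beta|x|>2$, and none otherwise.

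Each resulting $-\lambda_{j}$ lies strictly below $\inf\sigma_{ess}(h_x)=0$, hence is an isolated eigenvalue of finite multiplicity, so it belongs to $\sigma_{d}(h_x)$; moreover, since the $\pm$ branches of the factorised determinant cannot vanish at the same $\lambda$, each such root is simple and the corresponding eigenspace is one-dimensional, spanned respectively by an even ($-$ branch) and an odd ($+$ branch) eigenfunction on $\RE$. The main obstacle I anticipate is the sharp counting for the antisymmetric branch: one must combine the concavity of $s\mapsto\beta(1-e^{-s|x|})$ with the first-order Taylor comparison at $s=0$ to pin down the threshold $|x|=2/\beta$ and to exclude any spurious further solution for large $|x|$.
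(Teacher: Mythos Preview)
The paper does not give its own proof of this lemma: it simply collects the spectral properties of $h_x$ from \cite[Section II.2.1]{AGHKH05}, after recording that the negative eigenvalues are the roots of~\eqref{eigeneq2}. Your write-up is therefore more detailed than anything in the paper, and your treatment of (i) and of the root-counting in (iii) is correct and matches the standard analysis.

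There is, however, a genuine gap in your argument for (ii). The Birman--Kuroda theorem (and likewise Kato--Rosenblum) tells you that a trace-class resolvent difference yields existence and completeness of the wave operators, hence unitary equivalence of the \emph{absolutely continuous} parts. It says nothing whatsoever about singular continuous spectrum: finite-rank --- even rank-one --- perturbations of operators with purely absolutely continuous spectrum can create singular continuous spectrum (this is precisely the subject of the Aronszajn--Donoghue and Simon--Wolff theory). So ``$\sigma_{sc}(h_x)=\varnothing$ follows from Birman--Kuroda'' is false as stated. Your fallback to the explicit analysis in \cite[Ch.~II.2]{AGHKH05} is the correct route; there one uses the explicit resolvent to establish a limiting absorption principle on $(0,\infty)$, which does kill the singular continuous spectrum. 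You should make that the primary argument, not an afterthought.

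A smaller point in the same spirit: to conclude $\sigma_{p}(h_x)=\sigma_{d}(h_x)$ you must also exclude eigenvalues embedded in $[0,+\infty)$, not only locate the negative ones. Again this does not follow from the rank-two resolvent difference alone; it comes from the same direct analysis (any $L^2$ solution of $-u''=\lambda u$ on the three intervals with $\lambda\ge 0$, matched by the boundary conditions, is forced to vanish). Finally, your parenthetical assignment ``even ($-$ branch), odd ($+$ branch)'' at the end has the signs swapped relative to your own earlier convention $2\sqrt{\lambda}+\alpha=\pm|\alpha|e^{-\sqrt{\lambda}|x|}$; the even eigenfunction comes from the $+$ sign there.
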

To better describe the behavior of the eigenvalues with respect to $x$, we introduce the  Lambert $W$-function,  defined as the solution of the equation 
\[W(x)e^{W(x)} = x\,.
\]
As consequence of such a definition, one gets the following
\begin{lemma}\label{l:lambda0} Let $\alpha<0$. The functions $\lambda_{0}$ and $\lambda_{1}$ are represented as 
\[
 \lambda_0:\RE\to (0,+\infty)\,,\qquad  \lambda_0(x)  = \left(\frac{ W\left(\frac{|\alpha||x|}{2} \,e^{-\frac{|\alpha||x|}{2}} \right)}{|x|} + 
   \frac{|\alpha|}{2}\right)^{\!\!2}\,,\]
   \[
 \lambda_1:\RE\backslash[-2/|\alpha|,2/|\alpha|]\to (0,+\infty)\,,\qquad   \lambda_1(x)  = \left(\frac{ W\left(-\frac{|\alpha||x|}{2}\, e^{-\frac{|\alpha| |x|}{2}} \right)}{|x|} + \frac{|\alpha|}{2}\right)^{\!\!2};
\]
\begin{enumerate}[(i)]
\item both $\lambda_{0}$ and $\lambda_{1}$ are even functions; 
\item $\alpha^2/4<\lambda_0(x)\leq\alpha^2$, $0<\lambda_1(x)<\alpha^2/4$;
\item $\lambda_0(0)=\alpha^2$, $ \lim_{x\to (\pm2/|\alpha|)_\pm}\lambda_1(x)= 0$, $\lim_{x\to\pm\infty} \lambda_0(x) =\lim_{x\to\pm\infty } \lambda_1(x) =\alpha^2/4$;
\item $\lambda_0$ is strictly decreasing on $(0,+\infty)$, $\lambda_1$ is strictly increasing on $(2/|\alpha|,+\infty)$;
\item $\lambda_0 \in   C^\infty(\RE\backslash\{0\})$, $\lambda_1 \in   C^\infty(\RE\backslash[-2/|\alpha|,2/|\alpha|\,])$.
\end{enumerate}
\end{lemma}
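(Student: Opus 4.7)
The starting point is the characteristic equation \eqref{eigeneq2}. Since $\alpha<0$, my plan is to extract the square root and reduce to the two branches
\begin{equation*}
2\sqrt{\lambda}=|\alpha|\bigl(1+e^{-\sqrt{\lambda}|x|}\bigr)\quad\text{(A)},\qquad
2\sqrt{\lambda}=|\alpha|\bigl(1-e^{-\sqrt{\lambda}|x|}\bigr)\quad\text{(B)}.
\end{equation*}
Multiplying by $|x|$ and letting $v:=(\sqrt{\lambda}-|\alpha|/2)\,|x|$, each branch reduces to the Lambert form $v\,e^{v}=C^{\pm}$ with $C^{\pm}:=\pm\tfrac{|\alpha||x|}{2}e^{-|\alpha||x|/2}$. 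Inverting via the principal branch $W$ then produces the announced explicit expressions, with (A) giving $\lambda_{0}$ and (B) giving $\lambda_{1}$.

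Next I would check that the principal branch of $W$ is indeed the right one. For $\lambda_{0}$: $C^{+}\geq 0$, so $W(C^{+})\geq 0$ is well-defined for every $x\in\RE$, yielding $\sqrt{\lambda_{0}}\in[|\alpha|/2,|\alpha|]$. For $\lambda_{1}$: the function $v\mapsto ve^{v}$ attains its minimum $-e^{-1}$ at $v=-1$, and since $u\mapsto ue^{-u}$ on $[0,+\infty)$ has maximum $e^{-1}$ at $u=1$, one has $C^{-}=-e^{-1}$ precisely when $|x|=2/|\alpha|$. A direct graphical analysis shows that $v=-|\alpha||x|/2$ is always an explicit root of $ve^{v}=C^{-}$, corresponding to the spurious value $\sqrt{\lambda}=0$; the other, principal-branch root $W(C^{-})\in(-1,0)$ gives a strictly positive $\sqrt{\lambda_{1}}$ exactly when $|x|>2/|\alpha|$. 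This is consistent with the counting in Lemma \ref{l:spectrumhx}(iii), and simultaneously delivers the bounds in (ii) from the range of $W$ on the relevant intervals.

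With the formulas established, items (i), (iii), (v) are straightforward. Item (i) is immediate because $C^{\pm}$ depends only on $|x|$. For (iii) I would use the values $W(0)=0$, $W(-e^{-1})=-1$ and the expansion $W(u)=u+O(u^{2})$ near $0$: these give $\sqrt{\lambda_{0}(0)}=|\alpha|/2+\lim_{x\to 0}\frac{|\alpha||x|/2}{|x|}=|\alpha|$; $\lim_{|x|\to(2/|\alpha|)^{+}}\lambda_{1}=(|\alpha|/2-|\alpha|/2)^{2}=0$; and $\lim_{|x|\to\infty}\lambda_{0,1}=(|\alpha|/2)^{2}=\alpha^{2}/4$. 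Item (v) follows since $W$ is real-analytic on $(-e^{-1},+\infty)$, covering $C^{+}$ for $x\neq 0$ (the non-smoothness of $|x|$ at the origin is the only obstruction) and $C^{-}$ for $|x|>2/|\alpha|$.

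The one step that requires genuine work is the strict monotonicity in (iv), and my plan is to avoid differentiating $W$ altogether by using implicit differentiation of
\begin{equation*}
F_{\pm}(\mu,s):=2\mu-|\alpha|\bigl(1\pm e^{-\mu s}\bigr),\qquad s=|x|.
\end{equation*}
For (A), $\partial_{\mu}F_{+}=2+|\alpha|s\,e^{-\mu s}>0$ identically, whence $\sqrt{\lambda_{0}}\,'(s)=-\partial_{s}F_{+}/\partial_{\mu}F_{+}<0$ for all $s>0$. For (B), the function $\nu\mapsto F_{-}(\nu,s)$ is strictly convex ($\partial_{\nu}^{2}F_{-}=|\alpha|s^{2}e^{-\nu s}>0$), vanishes at $\nu=0$, and, for $s>2/|\alpha|$, has $\partial_{\nu}F_{-}(0,s)=2-|\alpha|s<0$; strict convexity then forces a unique second zero $\nu=\sqrt{\lambda_{1}}>0$ at which $F_{-}$ crosses from negative to positive, so $\partial_{\nu}F_{-}(\sqrt{\lambda_{1}},s)>0$, yielding $\sqrt{\lambda_{1}}\,'(s)=|\alpha|\sqrt{\lambda_{1}}\,e^{-\sqrt{\lambda_{1}}s}/\partial_{\nu}F_{-}>0$. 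The delicate point is precisely controlling the sign of $\partial_{\nu}F_{-}$ at the physical root: this is where the existence of the spurious root $\nu=0$ plus strict convexity of $F_{-}$ is essential.
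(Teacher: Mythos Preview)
Your proposal is correct. The paper does not give a proof of this lemma at all: it simply introduces the Lambert $W$-function via $W(x)e^{W(x)}=x$ and states that the lemma follows ``as a consequence of such a definition.'' Your derivation---reducing \eqref{eigeneq2} to the two branches (A) and (B), substituting $v=(\sqrt\lambda-|\alpha|/2)|x|$ to reach $ve^{v}=C^{\pm}$, and then checking that the principal branch of $W$ is the relevant one---is exactly the computation the paper leaves to the reader, and it is carried out cleanly. The branch analysis (the spurious root $v=-|\alpha||x|/2$ giving $\sqrt\lambda=0$, and the passage to the physical root only when $|x|>2/|\alpha|$) is correct and matches Lemma~\ref{l:spectrumhx}(iii).

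Your treatment of (iv) via implicit differentiation of $F_{\pm}$ is a nice touch: it avoids differentiating $W$ and makes the sign of $\partial_{\mu}F_{-}$ at the physical root transparent through the convexity argument. One very small point worth making explicit: strict convexity of $F_{-}(\cdot,s)$ together with $F_{-}(0,s)=0$ forces $F_{-}(\mu,s)<0$ for $\mu\in(0,\sqrt{\lambda_1})$, so $\sqrt{\lambda_1}$ cannot be the minimizer of $F_{-}$, hence $\partial_{\mu}F_{-}(\sqrt{\lambda_1},s)>0$ strictly; you say this, but it is the one step a reader might stumble on. Otherwise the argument is complete.
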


\begin{remark}Even though $\lambda_1(x) = 0$ is a solution of \eqref{eigeneq2}, zero is not an eigenvalue of the Hamiltonian $h_x$. To prove that this is indeed the case, assume on the contrary that there is eigenfunction $u_0$ such that $h_x u_{0}=0$. This means that on $\RE \backslash \{\pm \frac{x}{2}\}$, we have $-u^{\prime \prime}_0 = 0$. Any function in $D(h_x)$ that satisfies  such an equation must be equal to zero  on $(-\infty, \frac{x}{2}]$ and $[\frac{x}{2}, +\infty)$. Also, it has to be linear on $[-\frac{x}{2},\frac{x}{2}]$, i.e., $u_0(x)  = A x + B $ for some  constants $A$ and $B$. By imposing the continuity and the boundary conditions required in $D(h_x)$ there follows that it must be $A = B = 0$.
\end{remark}

By Lemma \ref{l:lambda0}, there follows that $-\lambda_{0}$ reaches its minimum for $x=0$, 
\[
\min_{x\in\RE}(-\lambda_{0}(x))= -\lambda_{0}(0)= -\alpha^2\,,
\]
  and $-\lambda_1$ reaches its infimum for $|x| \to +\infty $, 
\[
\lim_{|x| \to +\infty }-\lambda_1(x) = - \alpha^2/4\,. 
\]

We compare the eigenvalue $-\lambda_{0}(x)$ and $-\lambda_1(x)$ for the values $\alpha=-1,-2$ in Figure \ref{1D_plot_comparison}.
\begin {figure}[htbp]
\centering
\includegraphics[width=12cm]{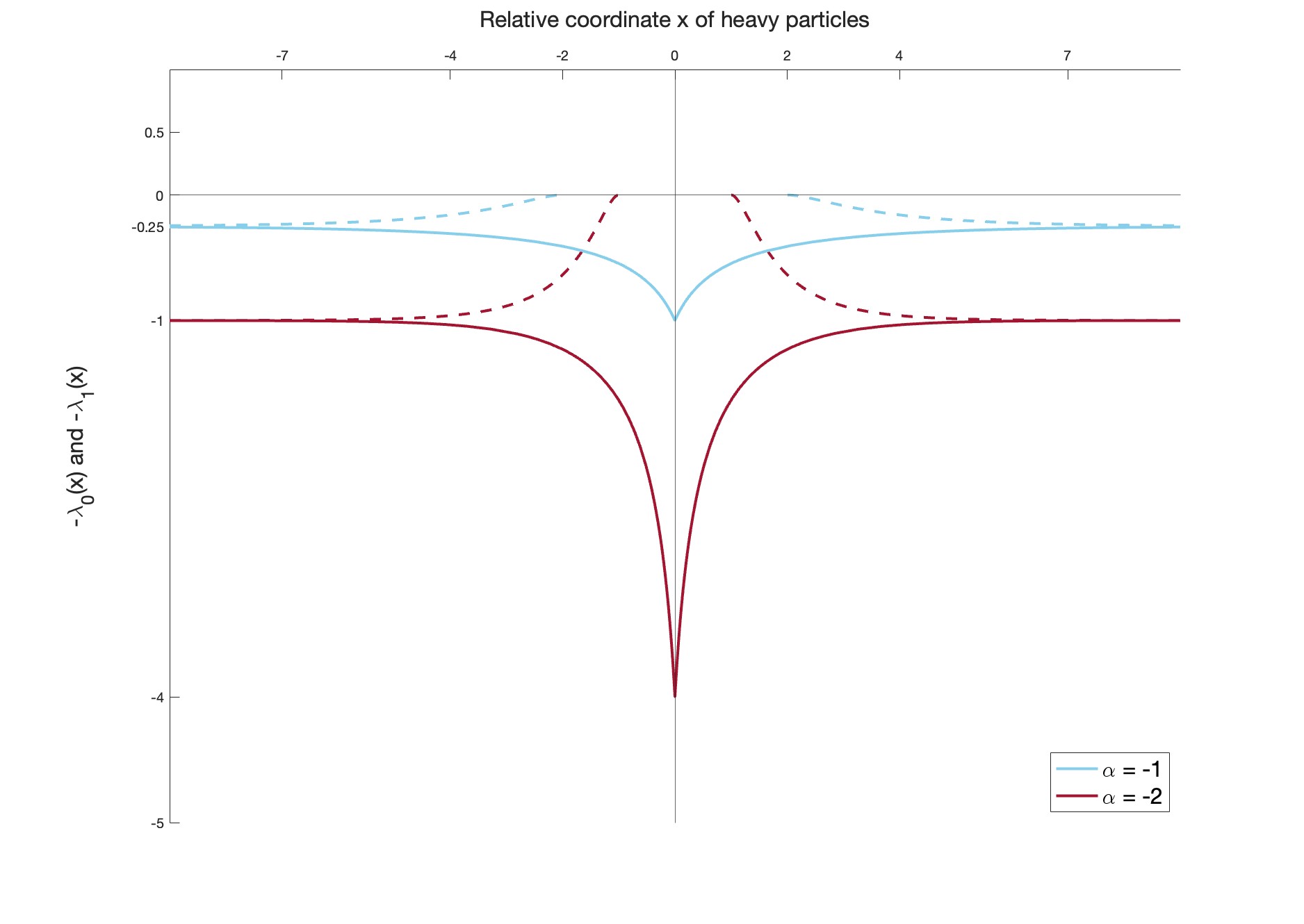}
\caption{\footnotesize {Plot of the eigenvalues of the light-particle Hamiltonian as functions of $x$: $-\lambda_{0}$ is represented by the solid line;  $-\lambda_1$ by the dashed line.}}
\label{1D_plot_comparison}
\end{figure}

By Lemmata \ref{l:spectrumhx} and  \ref{l:lambda0} there follows the following theorem
\begin{theorem}\label{th:lowerboundbH} 
For all $\alpha\geq 0$ and $x\in\RE$, the quadratic form $b_x$ and the associated self-adjoint operator $h_x$ are non-negative. For all $\alpha<0$ and $x\in\RE$, $-\alpha^2$ is a lower bound for the quadratic form $b_x$ and for the associated self-adjoint operator $h_x$. 
\end{theorem}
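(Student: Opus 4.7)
\textbf{Proof plan for Theorem \ref{th:lowerboundbH}.}

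The plan is to deduce both statements (for the operator and for the form) from the spectral information already collected in Lemmata \ref{l:spectrumhx} and \ref{l:lambda0}, via the standard variational characterization of the bottom of the spectrum of a self-adjoint operator associated with a closed, semibounded sesquilinear form. Specifically, since $b_x$ is closed and $h_x$ is the self-adjoint operator representing it, one has
\[
\inf_{0\neq u\in H^{1}(\RE)}\frac{b_{x}(u,u)}{\|u\|^{2}}=\inf\sigma(h_{x})\,,
\]
so it suffices to show $\inf\sigma(h_{x})\geq 0$ when $\alpha\geq 0$ and $\inf\sigma(h_{x})\geq -\alpha^{2}$ when $\alpha<0$.

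For the case $\alpha\ge 0$, I would combine the three items of Lemma \ref{l:spectrumhx}: $\sigma_{sc}(h_{x})=\varnothing$, $\sigma_{p}(h_{x})=\varnothing$, and $\sigma_{ac}(h_{x})=[0,+\infty)$, yielding $\sigma(h_{x})=[0,+\infty)$ and hence the non-negativity of both $h_{x}$ and $b_{x}$.

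For the case $\alpha<0$, Lemma \ref{l:spectrumhx}(iii) gives that the discrete spectrum consists of $-\lambda_{0}(x)$ (and possibly $-\lambda_{1}(x)$ when $|x|>2/|\alpha|$), while the essential spectrum is $[0,+\infty)$. Lemma \ref{l:lambda0}(ii) asserts $\alpha^{2}/4<\lambda_{0}(x)\leq\alpha^{2}$ and $0<\lambda_{1}(x)<\alpha^{2}/4$, which in particular implies $-\lambda_{0}(x)\leq -\lambda_{1}(x)$, so $-\lambda_{0}(x)$ is the ground state, and $-\lambda_{0}(x)\geq -\alpha^{2}$. Therefore $\inf\sigma(h_{x})=-\lambda_{0}(x)\geq-\alpha^{2}$, which gives the claimed lower bound for $h_{x}$ and, by the variational identity above, for $b_{x}$.

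There is no real obstacle: the hard analytic content (the explicit diagonalization of $h_{x}$ via the Krein-type resolvent formula, and the fine estimates on $\lambda_{0}$, $\lambda_{1}$ coming from the Lambert $W$-function) has already been carried out in the two preceding lemmata, so the proof of Theorem \ref{th:lowerboundbH} is essentially a one-line consequence once one invokes the standard correspondence between closed semibounded forms and their associated self-adjoint operators.
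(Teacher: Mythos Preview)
Your proposal is correct and is exactly the approach taken in the paper: the theorem is stated as an immediate consequence of Lemmata~\ref{l:spectrumhx} and~\ref{l:lambda0}, with no further argument given. Your explicit unpacking via the variational characterization of $\inf\sigma(h_{x})$ simply spells out what the paper leaves implicit.
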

For later use, we introduce the  normalized eigenfunction corresponding to the lowest eigenvalue of $h_x$ (see, e.g.,  \cite{AGHKH05} and \cite{AT1}):  
\begin{equation}\label{psiBOx}
 \psi^{BO}_{x}(y) :=    N(x) 
\left(e^{-\sqrt{\lambda_{0}(x)}\,|x/2-y|} + e^{-\sqrt{\lambda_{0}(x)}\,|x/2+y|}\right) ,
\end{equation}
where $N(x)$ is the normalization constant  given by 
 \begin{equation}\label{Normalization_constant}
        N(x)  := \left(\frac{\sqrt{\lambda_{0}(x)}}{2\left(1+ e^{-\sqrt{\lambda_{0}(x)}\,|x|}\big(1+ \sqrt{\lambda_{0}(x)}\,|x|\big)\right)} \right)^{1/2} .
    \end{equation}
    Further,
    \begin{equation}\label{Px}
   {P}_x\phi:= \langle\psi^{BO}_x, \phi\rangle\,\psi^{BO}_x
    \end{equation}
    denotes the corresponding spectral projection. 
We point out that $N(x)$ and $\psi^{BO}_x$ are even in the parameter $x$.

\section{Reduction to an effective Hamiltonian for the heavy particles subsystem\label{s:reduction}}

We provide a dimensional reduction of $H^\bosfer_\ve$ following the abstract scheme in \cite{Krejcirik-etal-m2018}. At first, we need to introduce the convenient notion of section of a function: given any $\phi\in L^{2}(\RE^{2})$,  its $x$-section is defined by $\phi_x(y):= \phi(x,y)$. One has that, for a.e. $x\in \RE$, $\phi_x$ is measurable and square integrable;  furthermore,  the function $x\mapsto \|\phi_x\|_{L^2(\RE)}$ is square integrable and
\[
\|\phi\|^{2}_{L^2(\RE^2)} =\int_{\RE} \|\phi_{x}\|_{L^2(\RE)}^2 \d x\,.
\]
Note that, by abuse of notation, we used the same symbol to denote an element in $L^{2}(\RE)$ and any of its representatives.\par
Similarly but less obviously (see, e.g., \cite[Section 5.6]{KJF}),  for a.e. $x\in \RE$, the $x$-section $\phi_x$ of  any $\phi\in H^1(\RE^2)$ is an absolutely continuous function having a square integrable derivative, the function  $x\mapsto \|\phi_x\|_{H^1(\RE)}$ is square integrable and 
\[
\|\phi\|^{2}_{H^1(\RE^2)} =\int_{\RE^2} |\partial_x \phi(x,y)|^2 \d\x + \int_\RE \|\phi_{x}\|_{H^1(\RE)}^2 \d x\,.
\]
Then, we consider the closed, lower bounded  sesquilinear   form $b_x$ defined in \eqref{sesbx1}-\eqref{sesbx2} and the associated  lower bounded self-adjoint operator $h_x$ in  $L^2(\RE)$ defined in \eqref{hx-dom}-\eqref{hx}. 

Recalling the definition of $\BB^\bosfer_{\ve}$ in \eqref{BBve1-b}-\eqref{BBve2-b}, by the discussion above, $b_x(\varphi_x,\psi_x)$ is well defined  for any pair of functions  $(\varphi,\psi) \in H_{\bosfer}^1(\RE^2)\times H_{\bosfer}^1(\RE^2)$
and 
\begin{equation}\label{directintegralB}
\BB^\bosfer_{\ve}(\varphi, \psi)= \int_{\RE^2} \ve^2 {\partial_x \overline\varphi(x,y)}\partial_x \psi(x,y) \d\x  +\int_\RE  b_x(\varphi_x,\psi_x)\d x .    
\end{equation}
This identification, together with Theorem \ref{th:lowerboundbH} gives the following result which completes Theorem \ref{th:spectrumHve}.
\begin{proposition}\label{p:spectrumHveLB}
For any $\ve>0$ and $\alpha<0$,  $\sigma(H^{\bof}_\ve)\subset [-\alpha^2,+\infty)$.
\end{proposition}

To proceed, we define the orthogonal projection $\mathcal{P}$ in $L^2(\RE^2)$ associated to the projection ${P}_x$ defined in Eq. \eqref{Px}:  
\begin{equation*}
 \mathcal{P} :L^2(\RE^2)\to  L^2(\RE^2) \qquad   \left(\mathcal{P} \phi\right)_x  := P_x \phi_x .
\end{equation*} 
Additionally we set
\[ 
\mathcal{P}^\perp  :=1 - \mathcal P. 
\]
For later convenience, sometimes we regard $\psi^{BO}_x$  as a function of two variables, denoted by  $\psi^{BO}$, with the obvious identification $\psi^{BO}(x,y)\equiv \psi^{BO}_x(y)$. With this notation, one has 
\begin{equation*}
\mathcal{P}:L^{2}(\RE^{2})\to L^{2}(\RE^{2})\,,\qquad
\mathcal{P} \phi(x,y) := 
\psi^{BO}(x,y) f_\phi(x) \, ,
\end{equation*}
 where   
 \[
    f_\phi(x) := \int_{\RE} {\psi^{BO}(x,y)} \phi(x,y)\, \d y =\langle\psi_{x}^{BO},\phi_{x}\rangle  \,.
    \]
Let us point out the inequality
    \begin{equation*}
\|  f_\phi\|^{2}_{L^2(\RE)} \leq \int_{\RE}\left(\int_{\RE} {|\psi^{BO}(x,y)}|\, |\phi(x,y)|\, \d y\right)^{\!2}\!\d x\le \int_{\RE}\|\psi^{BO}_{x}\|^{2}_{L^2(\RE)} \|\phi_{x}\|^{2}_{L^2(\RE)}\,\d x=\|\phi\|^{2}_{L^2(\RE^2)}. 
    \end{equation*}
\begin{remark}\label{remL2} Notice that, since $\psi^{BO}$ is an even function with respect to the $x$ variable, $f_{\phi}$ is an even function whenever $\phi\in L^{2}_{\bos}(\RE^{2})$ and is an odd function whenever $\phi\in L^{2}_{\fer}(\RE^{2})$. Hence, $\PP$ is an orthogonal projector in $L^{2}_{\bosfer}(\RE^{2})$ as well.
\end{remark}     
\begin{lemma}\label{l:PH}  i) $\mathcal P$ is a bounded operator in $H^1(\RE^2)$; 
ii) the commutator $[\partial_{x},\PP]:H^{1}(\RE^{2})\subset L^2(\RE^2)\to L^2(\RE^2)$ extends to a bounded operator in $L^{2}(\RE^{2})$ and 
\begin{equation*}
\|\,[\partial_{x}, \PP]\,\| \leq \delta\, ,
\end{equation*}
with 
\begin{equation*}
\delta := 2\left( \sup_{x\in\RE} \int |\partial_x \psi^{BO}(x,y)|^2 \d y \right)^{1/2}
\end{equation*}
and the bound $\delta \leq 4 |\alpha|$.
\end{lemma}
\begin{proof}
In the course of the proof, for the sake of brevity, we will often omit the dependence on $x$ in $\NN(x)$ and $\lambda_0(x)$. From the formula \eqref{psiBOx}, one gets
\[
\partial_y \psi^{BO}(x,y) =     {\NN}\sqrt{\lambda_0 } 
\left(\sgn(x/2-y)e^{-\sqrt{\lambda_0 }\,|x/2-y|} - \sgn(x/2+y)e^{-\sqrt{\lambda_0 }\,|x/2+y|}\right) ,
\]
and so
\[\begin{aligned}
\int_\RE \left|\partial_y\psi^{BO}(x,y)\right|^2 \d y  
= & 2 \NN^2 \sqrt{\lambda_0 }\left(1+e^{-\sqrt{\lambda_0 }\,|x|} -\sqrt{\lambda_0 }\,|x| e^{-\sqrt{\lambda_0 }\,|x|} \right) \\ =&  \lambda_0 \,\frac{1+e^{-\sqrt{\lambda_0 }\,|x|} -\sqrt{\lambda_0 }\,|x| e^{-\sqrt{\lambda_0 }\,|x|} }{1+e^{-\sqrt{\lambda_0 }\,|x|} +\sqrt{\lambda_0 }\,|x| e^{-\sqrt{\lambda_0 }\,|x|} } \\
\leq &\lambda_0  \leq\alpha^2\,.
\end{aligned}\]
Hence,
\begin{align*}
\int_{\RE^2}\left|\partial_y\mathcal{P} \phi(x,y)\right|^2 \d\x  = &\int_{\RE^2} \left|\partial_y\psi^{BO}(x,y)\right|^2 |f_\phi(x)|^2 \d x\,\d y  \\
  \leq & \left(\sup_{x \in \RE} \int_\RE \left|\partial_y\psi^{BO}(x,y)
\right|^2 dy \right)\|f_{\phi}\|^2_{L^2(\RE)}\\
\leq &\alpha^{2}\| \phi \|^2_{L^2(\RE^2)}\, .
\end{align*} 
Defining
\begin{equation*}
    \tilde f_\phi(x) :=  \int_{\RE} \big(\partial_x\psi^{BO}(x,y)\big) \phi(x,y) \d y \,,
    \end{equation*}
one has
     \[
\|  \tilde f_\phi\| \leq \frac{\delta}{2}\,\|\phi\|\,. 
    \]
    Since
    \begin{equation*}
    \partial_x \PP\phi  = \big(\partial_x\psi^{BO}\big) f_\phi  +\psi^{BO}\tilde f_\phi +  \psi^{BO}f_{\partial_x\phi},
    \end{equation*}
    there holds 
  \[\begin{aligned}
    \|\partial_x \PP \phi\|  \leq & \|\big(\partial_x\psi^{BO}\big) f_\phi\|  +\|\psi^{BO}\tilde f_\phi \| +\|\psi^{BO} f_{\partial_x\phi} \|  \\ 
     \leq &
     \frac\delta2\,\|f_\phi\|  +\|\tilde f_\phi \| + \|\partial_x \phi \| \leq (1+\delta) \|\phi\|_{H^1(\RE^2)} . 
   \end{aligned}
   \]
    In a similar way, one obtains 
 \[
    [\partial_x, \PP]\phi  = \big(\partial_x\psi^{BO}\big) f_\phi  +\psi^{BO}\tilde f_\phi ,
    \]
    and   
    \[
    \|\,[\partial_x, \PP]\phi\|  
     \leq \delta\, \|\phi\|. 
    \]   
We are left to prove the bound on $\delta$.
    From the formula \eqref{psiBOx}, one can explicitly compute $\partial_x\psi^{BO}$:
\[
\begin{aligned}
\partial_x \psi^{BO}(x,y) =&
\underbrace{\frac{\NN'}{\NN} \psi^{BO}}_{\varphi^{BO}_1}
-
\underbrace{\frac{{\NN}\lambda_0'}{2\sqrt{\lambda_0}}  
\left(|x/2-y|e^{-\sqrt{\lambda_0}|x/2-y|} +|x/2+y|e^{-\sqrt{\lambda_0}|x/2+y|}\right)}_{\varphi^{BO}_2} 
\\
&-\underbrace{\frac{{\NN} \sqrt{\lambda_0}}{2} 
\left(\sgn(x/2-y)e^{-\sqrt{\lambda_0}|x/2-y|} + \sgn(x/2+y)e^{-\sqrt{\lambda_0}|x/2+y|}\right)}_{\varphi^{BO}_3}. 
\end{aligned}
\]
Computing the norm of $\varphi^{BO}_1$ one obtains
\[
\|\varphi^{BO}_1\|^{2} =\left( \frac{\NN'}{\NN}\right)^2.
\]
From Eq. \eqref{Normalization_constant}, by a straightforward calculation there follows 
    \begin{equation*}
            \frac{\NN'}{\NN} = \frac{(\sqrt{\lambda_0})^{\prime}}{2\sqrt{\lambda_0}} + \frac{\frac{(\sqrt{\lambda_0})^{\prime}}{\sqrt{\lambda_0}} \left(\lambda_0 x^2 + \frac{\lambda_0^{3/2}}{(\sqrt{\lambda_0})^{\prime}}x\right)}{2 \left(  \sqrt{\lambda_0}|x|+e^{\sqrt{\lambda_0}|x|}+1\right)} \, .
    \end{equation*}
We notice that 
\[
\lambda_0(x) = \frac{\alpha^2}{4} \nu^2(|\alpha| x/ 2) \qquad x\geq 0
\]
with 
\[
\nu(x) = \frac{W(xe^{-x})}{x}+1.
\]
Hence, 
\begin{equation*}
    \left\lvert\frac{(\sqrt{\lambda_0})^{\prime}}{2\sqrt{\lambda_0}} \right\rvert = \frac{|\alpha|}{4}  \left|\frac{\nu'(ax/2)}{\nu(ax/2)}\right|.  
\end{equation*}
From the identity $W(x)e^{W(x)} = x$, $x\geq 0$, we infer 
\[
\nu(x) = e^{-x\nu(x)}+1.
\]
Hence, 
\[
\frac{\nu'(x)}{\nu(x)} = - \frac{1}{e^{x\nu(x)}+x}
\]
and $|\nu'(x)/\nu(x)| \leq \frac{1}{e^{x}+x} \leq 1$. We deduce the bound:  
\begin{equation} \label{sqrtlambdaprime}
    \left| \frac{(\sqrt{\lambda_0})^{\prime}}{2\sqrt{\lambda_0}} \right| \leq \frac{|\alpha|}{4}.
\end{equation}
Recalling also that $\sqrt{\lambda_0} \leq |\alpha|$, we infer
    \begin{equation*}
 \left|\frac{\frac{(\sqrt{\lambda_0})^{\prime}}{\sqrt{\lambda_0}} \left(\lambda_0 x^2 + \frac{\lambda_0^{3/2}}{(\sqrt{\lambda_0})^{\prime}}x\right)}{2 \left(  \sqrt{\lambda_0}|x|+e^{\sqrt{\lambda_0}|x|}+1\right)} 
    \right| 
    \leq \frac{\frac{|\alpha|}{2}\lambda_0 x^2+|\alpha|\sqrt{\lambda_0}|x|
    }{2 \left(  \sqrt{\lambda_0}|x|+e^{\sqrt{\lambda_0}|x|}+1\right)}\leq \frac{|\alpha|}{4}, 
        \end{equation*}
        where we used the  inequality $\frac{\frac{1}{2}s^2+s}{2 (s +e^{s}+1)} \leq \frac14$, $s\geq 0$. We deduce that   \[ \left| \frac{\NN'}{\NN}\right|  \leq \frac{|\alpha|}{2}\qquad \text{and} \qquad  
        \|\varphi^{BO}_1\|_{L^2(\RE,dy)}^2 =  \left( \frac{\NN'}{\NN}\right)^2  \leq \frac{\alpha^2}{4} .\]
Next we focus attention on $\varphi^{BO}_2$.
\[
\begin{aligned}
\|\varphi^{BO}_2\|^2 =& \frac{({\NN}\lambda_0')^2}{4(\lambda_0)^{5/2}}\left(1+\frac{(\sqrt{\lambda_0}|x|)^3}{3}e^{-\sqrt{\lambda_0}|x|} +(1+\sqrt{\lambda_0}|x|) e^{-\sqrt{\lambda_0}|x|} \right) \\
=&
\frac{(\lambda_0')^2}{2(\lambda_0)^{2}}\frac{1+\frac{(\sqrt{\lambda_0}|x|)^3}{3}e^{-\sqrt{\lambda_0}|x|} +(1+\sqrt{\lambda_0}|x|) e^{-\sqrt{\lambda_0}|x|}}{1+e^{-\sqrt{\lambda_0}|x|} +\sqrt{\lambda_0}|x| e^{-\sqrt{\lambda_0}|x|}} \leq \frac{(\lambda_0')^2}{(\lambda_0)^{2}} \leq\alpha^2,
\end{aligned}
\]
where in the latter step we used the bound  $|\lambda_0'/\lambda_0| \leq |\alpha|$ that can be easily obtained from the bound \eqref{sqrtlambdaprime}. 

We conclude with a bound on $\|\varphi^{BO}_3\|$.
\[
\begin{aligned}
\|\varphi^{BO}_3\|^2  =& \frac{{\NN}^2\sqrt{\lambda_0}}{2}\left(1-e^{-\sqrt{\lambda_0}|x|} +\sqrt{\lambda_0}|x| e^{-\sqrt{\lambda_0}|x|} \right) \\
=&
 \frac{\lambda_0}{4}\frac{1-e^{-\sqrt{\lambda_0}|x|} +\sqrt{\lambda_0}|x| e^{-\sqrt{\lambda_0}|x|} }{1+e^{-\sqrt{\lambda_0}|x|} +\sqrt{\lambda_0}|x| e^{-\sqrt{\lambda_0}|x|}} \leq  \frac{\lambda_0}{4} \leq \frac{\alpha^2}{4}.
\end{aligned}
\]
Summing up we obtain $\|\partial_x \psi^{BO}\|  \leq \sum_{j=1}^3 \|\varphi_j^{BO}\|  \leq 2|\alpha|$ and $\delta \leq 4|\alpha| $. 
\end{proof}
\begin{remark}\label{remH1}  By Remark \ref{remL2} and by point i) in Lemma \ref{l:PH}, $\mathcal P$ is a bounded operator in $H_{\bosfer}^1(\RE^2)$.
\end{remark}

By Proposition \ref{p:spectrumHveLB} there follows that $-\alpha^2$ is a lower bound for the quadratic form $\BB^\bosfer_{\ve}$ and for the associated self-adjoint operator $H^\bosfer_\ve$. Preferring to work with non negative quadratic forms and operators, we define the sesquilinear form 
\begin{equation*}
q_x (u,v) := b_x(u,v) +\alpha^2
\end{equation*}
with corresponding self-adjoint, non-negative operator $h_x +\alpha^2$.
Then, we define the sesquilinear  form $\QQ_\ve^\bosfer$ with $D(\QQ_\ve^\bosfer) = H_{\bosfer}^1(\RE^2)$ by
\be\label{r:Qve}
\QQ^\bosfer_{\ve} (\varphi, \psi):= \BB^\bosfer_{\ve} (\varphi, \psi) +\alpha^2(\varphi, \psi)_{L^2(\RE^2)} = \int_{\RE^2} \ve^2 {\partial_x \overline\varphi(x,y)}\partial_x \psi(x,y) \d\x  +\int_\RE  q_x(\varphi_x,\psi_x)\d x \,.
\ee
The corresponding non-negative self-adjoint operator is \[\Lscr_\ve^\bosfer:= H^\bosfer_\ve+\alpha^2\,.
\]

By Remarks \ref{remL2} and \ref{remH1}, the following sesquilinear form in $L^{2}_{\bosfer}(\RE^{2})$ is well defined: 
\begin{equation*}
 D(\widehat{\QQ}^{\bosfer}_\ve) =   H_{\bosfer}^1(\RE^2) \qquad \widehat{\QQ}^{\bosfer}_\ve(\phi, \psi)   : = {\QQ}^{\bosfer}_\ve (\mathcal{P} \phi, \mathcal{P} \psi)+{\QQ}^{\bosfer}_\ve \big(\mathcal{P}^{\perp} \phi, \mathcal{P}^{\perp} \psi\big).
\end{equation*}

\begin{proposition}\label{p:QwidehatQ} For all $\phi,\psi \in H^1_\bosfer(\RE^2)$ there holds 
\begin{equation*}
\big|\QQ^{\bosfer}_\ve(\phi,\psi) - \widehat\QQ^{\bosfer}_\ve(\phi,\psi) \big|  \\ 
\leq  
2 \ve \delta \Big( 
\sqrt{\QQ_\ve^\bosfer(\phi)} \, \| \psi \|+\|\phi\| \, \sqrt{\widehat \QQ_\ve^\bosfer( \psi )}\ \Big).
\end{equation*}
\end{proposition}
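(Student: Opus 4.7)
The plan is to split $\QQ^\bosfer_\ve - \widehat\QQ^\bosfer_\ve$ into cross-terms using the orthogonal projectors $\PP$ and $\PP^\perp$, exploit the eigenfunction property of $\psi^{BO}_{x}$ to eliminate the transverse ($y$) contributions, and then use the commutator bound of Lemma \ref{l:PH} together with the natural energy estimates to control the remaining $x$-derivative cross terms.

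First I would decompose: by sesquilinearity applied to $\phi=\PP\phi+\PP^\perp\phi$ and $\psi=\PP\psi+\PP^\perp\psi$, the definition $\widehat\QQ^\bosfer_\ve(\phi,\psi) = \QQ^\bosfer_\ve(\PP\phi,\PP\psi)+\QQ^\bosfer_\ve(\PP^\perp\phi,\PP^\perp\psi)$ immediately yields
\[
\QQ^\bosfer_\ve(\phi,\psi) - \widehat\QQ^\bosfer_\ve(\phi,\psi) = \QQ^\bosfer_\ve(\PP\phi,\PP^\perp\psi) + \QQ^\bosfer_\ve(\PP^\perp\phi,\PP\psi).
\]
Next, using the representation \eqref{r:Qve} together with $(\PP\phi)_{x}=f_\phi(x)\,\psi^{BO}_{x}$, and recalling that $\psi^{BO}_{x}\in D(h_x)$ satisfies $h_x\psi^{BO}_{x}=-\lambda_0(x)\psi^{BO}_{x}$, an integration by parts shows
\[
q_x((\PP\phi)_{x},(\PP^\perp\psi)_{x}) = \overline{f_\phi(x)}\,(\alpha^2-\lambda_0(x))\,\langle\psi^{BO}_{x},(\PP^\perp\psi)_{x}\rangle = 0,
\]
since $(\PP^\perp\psi)_{x}$ is fiberwise orthogonal to $\psi^{BO}_{x}$ by construction. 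The same argument annihilates the $y$-part of the second cross-term, so only the $x$-derivative pieces survive:
\[
\QQ^\bosfer_\ve(\phi,\psi) - \widehat\QQ^\bosfer_\ve(\phi,\psi) = \ve^2\langle\partial_x\PP\phi,\partial_x\PP^\perp\psi\rangle + \ve^2\langle\partial_x\PP^\perp\phi,\partial_x\PP\psi\rangle.
\]

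The key step is then to set $C:=[\partial_x,\PP]$, so that by Lemma \ref{l:PH} one has $\|C\|_{\Bou(L^2)}\le\delta$, and to write $\partial_x\PP=\PP\partial_x+C$, $\partial_x\PP^\perp=\PP^\perp\partial_x-C$. Expanding the first inner product, the cross term $\langle\PP\partial_x\phi,\PP^\perp\partial_x\psi\rangle$ vanishes by orthogonality, leaving
\[
\langle\partial_x\PP\phi,\partial_x\PP^\perp\psi\rangle = -\langle\PP\partial_x\phi,C\psi\rangle + \langle C\phi,\partial_x\PP^\perp\psi\rangle.
\]
Cauchy-Schwarz together with the bounds $\|\partial_x\phi\|\le\ve^{-1}\sqrt{\QQ^\bosfer_\ve(\phi)}$ (from the definition of $\QQ^\bosfer_\ve$) and $\|\partial_x\PP^\perp\psi\|\le\ve^{-1}\sqrt{\QQ^\bosfer_\ve(\PP^\perp\psi)}\le\ve^{-1}\sqrt{\widehat\QQ^\bosfer_\ve(\psi)}$ (using $\widehat\QQ^\bosfer_\ve(\psi)=\QQ^\bosfer_\ve(\PP\psi)+\QQ^\bosfer_\ve(\PP^\perp\psi)$) then gives
\[
\ve^2|\langle\partial_x\PP\phi,\partial_x\PP^\perp\psi\rangle| \le \ve\delta\big(\sqrt{\QQ^\bosfer_\ve(\phi)}\,\|\psi\|+\|\phi\|\,\sqrt{\widehat\QQ^\bosfer_\ve(\psi)}\big).
\]
The second cross term is handled by the analogous expansion $\langle\partial_x\PP^\perp\phi,\partial_x\PP\psi\rangle = \langle\PP^\perp\partial_x\phi,C\psi\rangle - \langle C\phi,\PP\partial_x\psi\rangle$, which produces exactly the same estimate; summing yields the factor $2$ in the claim.

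The main subtle point is arranging the commutator expansion so that in each $\langle\cdot,\cdot\rangle$ involving $C$ one factor is controlled by $\QQ^\bosfer_\ve(\phi)$ through $\|\partial_x\phi\|$ and the other by $\widehat\QQ^\bosfer_\ve(\psi)$ through either $\|\partial_x\PP\psi\|$ or $\|\partial_x\PP^\perp\psi\|$. A naive choice of expansion would contaminate the bound with $\sqrt{\QQ^\bosfer_\ve(\psi)}$ or $\sqrt{\widehat\QQ^\bosfer_\ve(\phi)}$, destroying the asymmetry between $\phi$ and $\psi$ that is needed downstream in Lemma \ref{l:reduction}.
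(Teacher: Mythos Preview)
Your proof is correct and follows essentially the same strategy as the paper: reduce to the $x$-derivative cross terms using the eigenfunction property, rewrite via the commutator $C=[\partial_x,\PP]$, and bound with Lemma~\ref{l:PH}. One slip: in your expansion of the second cross term the last factor should be $\partial_x\PP\psi$, not $\PP\partial_x\psi$ (the identity $\langle\partial_x\PP^\perp\phi,\partial_x\PP\psi\rangle = \langle\PP^\perp\partial_x\phi,C\psi\rangle - \langle C\phi,\partial_x\PP\psi\rangle$ is what actually holds, and as you yourself note in the final paragraph the estimate needs $\|\partial_x\PP\psi\|\le\ve^{-1}\sqrt{\widehat\QQ^\bosfer_\ve(\psi)}$); with that fix the bound goes through exactly as claimed. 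Your organization---treating the two cross terms separately---is in fact tidier than the paper's longer rearrangement through $\PP^\perp=1-\PP$ followed by the inequality $a+b\le 2\sqrt{a^2+b^2}$.
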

\begin{proof} By the definition of $\widehat\QQ_\ve^\bosfer$ there follows 
\[
\QQ_\ve^\bosfer(\phi,\psi) - \widehat\QQ_\ve^\bosfer(\phi,\psi) = 
\QQ_\ve^\bosfer(\PP \phi,\PP^\perp \psi) + \QQ_\ve^\bosfer(\PP^\perp\phi,\PP\psi). 
\]
Defining $S_\ve \phi := -i\ve \partial_x$, since $q_x(\PP_x u, \PP_x^\perp u) = q_x(\PP_x^\perp u, \PP_x u)= 0$, by Definition \ref{r:Qve},  there follows 
\[
\begin{aligned}
\QQ_\ve^\bosfer(\phi,\psi) - \widehat\QQ_\ve^\bosfer(\phi,\psi) = &
\langle S_\ve \PP \phi,S_\ve \PP^\perp \psi\rangle  + \langle S_\ve \PP^\perp\phi,S_\ve \PP\psi\rangle \\
= &
\langle S_\ve \PP \phi,S_\ve (1-\PP) \psi\rangle + \langle S_\ve  (1-\PP)\phi,S_\ve \PP\psi\rangle \\
= &
\langle S_\ve \PP \phi,S_\ve  \psi\rangle  +\langle S_\ve  \phi,S_\ve \PP\psi\rangle  
-2\langle S_\ve \PP \phi,S_\ve \PP \psi\rangle  
\\
=&\langle [S_\ve, \PP] \phi,S_\ve  \psi\rangle  + \langle  \PP S_\ve \phi,S_\ve  \psi\rangle  \\
&+\langle S_\ve  \phi,S_\ve \PP\psi\rangle -2\langle [S_\ve, \PP] \phi,S_\ve \PP \psi\rangle -2\langle  \PP S_\ve \phi,S_\ve \PP \psi\rangle .
\end{aligned}
\]
Note that 
\[
\begin{aligned}
-2\langle \PP S_\ve \phi,S_\ve \PP \psi\rangle  =&
-2\langle   S_\ve \phi, \PP S_\ve \PP \psi\rangle  \\
=&  -2\langle   S_\ve \phi, [\PP, S_\ve] \PP \psi\rangle  - 2\langle   S_\ve \phi, S_\ve\PP^2 \psi\rangle  \\
=&  2\langle   S_\ve \phi, [S_\ve,\PP] \PP \psi\rangle - 2\langle   S_\ve \phi, S_\ve\PP \psi\rangle  ,
\end{aligned}
\]
and
\[
\langle \PP S_\ve \phi,S_\ve  \psi\rangle  = \langle   S_\ve \phi,\PP S_\ve  \psi\rangle 
= -  \langle   S_\ve \phi,[ S_\ve,\PP]  \psi\rangle  + \langle   S_\ve \phi, S_\ve \PP   \psi\rangle .
\]
Hence, 
\[
\begin{aligned}
\QQ_\ve^\bosfer(\phi,\psi) - \widehat\QQ_\ve^\bosfer(\phi,\psi) =&\langle[S_\ve, \PP] \phi,S_\ve  \psi\rangle +\big(-  \langle  S_\ve \phi,[ S_\ve,\PP]  \psi\rangle + \langle S_\ve \phi, S_\ve \PP   \psi\rangle\big) \\
&+\langle S_\ve  \phi,S_\ve \PP\psi)-2\langle[S_\ve, \PP] \phi,S_\ve \PP \psi\rangle+ \big( 2\langle  S_\ve \phi, [S_\ve,\PP] \PP \psi\rangle - 2\langle  S_\ve \phi, S_\ve\PP \psi\rangle\big)\\
= &
\langle[S_\ve, \PP] \phi,S_\ve  \psi\rangle - \langle  S_\ve \phi,[ S_\ve,\PP]  \psi\rangle -2\langle[S_\ve, \PP] \phi,S_\ve \PP \psi\rangle+ 2\langle  S_\ve \phi, [S_\ve,\PP] \PP \psi\rangle\\
= &
\langle[S_\ve, \PP] \phi,S_\ve \PP^\perp \psi\rangle - \langle S_\ve \phi,[ S_\ve,\PP]  \PP^\perp\psi\rangle -\langle[S_\ve, \PP] \phi,S_\ve \PP \psi\rangle+ \langle  S_\ve \phi, [S_\ve,\PP] \PP \psi\rangle\,.
\end{aligned}
\]
By Lemma \ref{l:PH}, this gives
\[
\begin{aligned}
&\big|\QQ_\ve^\bosfer(\phi,\psi) - \widehat\QQ_\ve^\bosfer(\phi,\psi) \big|  \\ 
\leq &  
\ve \delta \Big( 
\|\ve \partial_x \phi\| \big(  \|\PP \psi \|  +  \|\PP^\perp \psi \|\big)
+
\|\phi\| \big(  \| \ve \partial_x\PP \psi \| +  \|\ve \partial_x \PP^\perp \psi \|\big) \Big).
\end{aligned}
\]
By the obvious inequality $a+b \leq 2 \sqrt{a^2+b^2}$, there follows 
\[
\begin{aligned}
&\big|\QQ_\ve^\bosfer(\phi,\psi) - \widehat\QQ_\ve^\bosfer(\phi,\psi) \big|  \\ 
\leq &  
2 \ve \delta \Big( 
\|\ve \partial_x \phi\| \sqrt{\|\PP \psi \|^2  +  \|\PP^\perp \psi \|^2}
+
\|\phi\| \sqrt{\| \ve \partial_x\PP \psi \|^2 +  \|\ve \partial_x \PP^\perp \psi \|^2} \ \Big) \\ 
\leq &  
2 \ve \delta \Big( 
\sqrt{\QQ_\ve^\bosfer(\phi)} \, \| \psi \|+\|\phi\| \, \sqrt{\widehat \QQ_\ve^\bosfer( \psi )}\ \Big),
\end{aligned}
\]
where in the latter inequality we used the fact that the quadratic form $q_x$ is non-negative. 
\end{proof} 
In the next Lemma we recall several results  from \cite{Krejcirik-etal-m2018}.
\begin{lemma}[{\cite[Lemmata 2.2 and 2.3]{Krejcirik-etal-m2018}}]\label{l:spectrum-widehatL}
The quadratic form $\widehat\QQ_\ve^\bosfer$ is closed, non-negative, and determines a unique self-adjoint, non-negative operator $\widehat\Lscr_\ve^\bosfer$ in $L^2_\bosfer(\RE^2)$.  Furthermore, 
\begin{enumerate}[(i)]
\item \[\widehat \Lscr_{\ve}^{\bosfer} = \widehat\Lscr_{\ve,\PP}^{\bosfer}\oplus \widehat
\Lscr_{\ve,\PP^\perp}^{\bosfer}\,,\qquad
\sigma(\widehat \Lscr_{\ve}^{\bosfer}) = \sigma(\widehat\Lscr^{\bosfer}_{\ve,\PP}) \cup \sigma(\widehat\Lscr^{\bosfer}_{\ve,\PP^\perp})\,,
\] 
where 
\[\widehat\Lscr^{\bosfer}_{\ve,\PP} : D(\widehat\Lscr^{\bosfer}_{\ve,\PP}) \subset \ran(\PP| L^2_{\bosfer}(\RE^2)) \to \ran(\PP| L^2_\bosfer (\RE^2))\,,
\]
\[
\widehat\Lscr^{\bosfer}_{\ve,\PP^\perp} : D(\widehat\Lscr^{\bosfer}_{\ve,\PP^\perp}) \subset \ran(\PP^\perp| L^2_\bosfer(\RE^2)) \to \ran(\PP^\perp |L^2_\bosfer(\RE^2))\,;
\]
 \item 
 \[
 \langle\psi, \widehat\Lscr^{\bosfer}_{\ve,\PP}\psi\rangle = \QQ_\ve^\bosfer(\PP\psi,\PP\psi) \geq 0\,, \qquad \forall \psi \in D(\widehat\Lscr^{\bosfer}_{\ve,\PP}), 
 \]
 \[
 \langle\psi, \widehat\Lscr^{\bosfer}_{\ve,\PP^\perp}\psi\rangle = \QQ_\ve^\bosfer(\PP^\perp\psi,\PP^\perp\psi) \geq \inf_{x\in \RE} (-\lambda_1(x)) +\alpha^2 = \frac34\alpha^2 \,,\qquad \forall \psi \in D(\widehat\Lscr^{\bosfer}_{\ve,\PP^{\perp}})\,. 
 \]
\end{enumerate}
\end{lemma}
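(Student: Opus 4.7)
The plan is to exploit the orthogonal direct-sum structure of $\widehat\QQ_\ve^\bosfer$ with respect to the decomposition $L^2_\bosfer(\RE^2)=\ran(\PP|L^2_\bosfer(\RE^2))\oplus\ran(\PP^\perp|L^2_\bosfer(\RE^2))$: once this is established, the operator-theoretic properties and the spectral decomposition follow. Non-negativity is immediate from Theorem \ref{th:lowerboundbH}, which gives $q_x\ge0$, so $\QQ_\ve^\bosfer\ge 0$ and hence $\widehat\QQ_\ve^\bosfer(\phi,\phi)=\QQ_\ve^\bosfer(\PP\phi,\PP\phi)+\QQ_\ve^\bosfer(\PP^\perp\phi,\PP^\perp\phi)\ge 0$. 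For closedness, Lemma \ref{l:PH}(i) and Remark \ref{remH1} ensure that $\PP$ is bounded on $H^1_\bosfer(\RE^2)$, and hence so is $\PP^\perp$; the aim is then to show that the form norms of $\widehat\QQ_\ve^\bosfer$ and $\QQ_\ve^\bosfer$ are equivalent on $H^1_\bosfer(\RE^2)$. The upper bound $\widehat\QQ_\ve^\bosfer\le C(\QQ_\ve^\bosfer+\|\cdot\|^2)$ follows by estimating each summand $\QQ_\ve^\bosfer(\PP\phi,\PP\phi)$, $\QQ_\ve^\bosfer(\PP^\perp\phi,\PP^\perp\phi)$ by a multiple of $\|\phi\|_{H^1}^2$, using boundedness of $\PP$, $\PP^\perp$ on $H^1$ together with the standard trace inequality $|u(y)|^2\le \|u\|_{H^1(\RE)}^2$; the reverse $\QQ_\ve^\bosfer\le 2\widehat\QQ_\ve^\bosfer$ follows from form Cauchy--Schwarz applied to the expansion $\QQ_\ve^\bosfer(\phi,\phi)=\QQ_\ve^\bosfer(\PP\phi,\PP\phi)+\QQ_\ve^\bosfer(\PP^\perp\phi,\PP^\perp\phi)+2\Re\QQ_\ve^\bosfer(\PP\phi,\PP^\perp\phi)$. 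Closedness of $\QQ_\ve^\bosfer$ (implicit in Proposition \ref{p:quadratic}) then transfers to $\widehat\QQ_\ve^\bosfer$, and the form representation theorem yields the self-adjoint $\widehat\Lscr_\ve^\bosfer\ge 0$.

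For item (i), the key observation is that $\widehat\QQ_\ve^\bosfer$ has no coupling between $\ran(\PP|L^2_\bosfer(\RE^2))$ and $\ran(\PP^\perp|L^2_\bosfer(\RE^2))$: for $\phi\in H^1_\bosfer(\RE^2)\cap\ran\PP$ and $\psi\in H^1_\bosfer(\RE^2)\cap\ran\PP^\perp$ one has $\PP\phi=\phi$, $\PP^\perp\phi=0$, $\PP\psi=0$, $\PP^\perp\psi=\psi$, and so $\widehat\QQ_\ve^\bosfer(\phi,\psi)=\QQ_\ve^\bosfer(\phi,0)+\QQ_\ve^\bosfer(0,\psi)=0$. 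This identifies $\widehat\QQ_\ve^\bosfer$ as the orthogonal sum of its restrictions to these two closed subspaces; each restriction is a closed non-negative form and determines a self-adjoint operator on the corresponding subspace, giving $\widehat\Lscr_\ve^\bosfer=\widehat\Lscr_{\ve,\PP}^\bosfer\oplus\widehat\Lscr_{\ve,\PP^\perp}^\bosfer$ with $\sigma(\widehat\Lscr_\ve^\bosfer)=\sigma(\widehat\Lscr_{\ve,\PP}^\bosfer)\cup\sigma(\widehat\Lscr_{\ve,\PP^\perp}^\bosfer)$.

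For item (ii), the first two identities are the form representation on each invariant subspace. For the lower bound on $\widehat\Lscr_{\ve,\PP^\perp}^\bosfer$, given $\psi\in D(\widehat\Lscr_{\ve,\PP^\perp}^\bosfer)\subset\ran(\PP^\perp|L^2_\bosfer(\RE^2))$, the condition $\PP\psi=0$ forces $\langle\psi^{BO}_x,\psi_x\rangle_{L^2(\RE,dy)}=0$ for a.e.\ $x\in\RE$. Applying the min--max principle to $h_x$ on $\{\psi^{BO}_x\}^\perp$: for $|x|>2/|\alpha|$ the second eigenvalue is $-\lambda_1(x)>-\alpha^2/4$ by Lemma \ref{l:lambda0}(ii), while for $|x|\le 2/|\alpha|$ the spectrum above the ground state is $[0,+\infty)$; in either case $b_x(\psi_x)\ge -\frac{\alpha^2}{4}\|\psi_x\|^2_{L^2(\RE)}$, and hence $q_x(\psi_x)\ge \frac{3\alpha^2}{4}\|\psi_x\|^2_{L^2(\RE)}$. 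Discarding the non-negative kinetic term $\ve^2\|\partial_x\psi\|^2$ and integrating in $x$ yields $\langle\psi,\widehat\Lscr_{\ve,\PP^\perp}^\bosfer\psi\rangle\ge \frac{3\alpha^2}{4}\|\psi\|^2$. The main technical obstacle is the closedness step, since one must ensure that the possibly negative trace contribution in $q_x$ (when $\alpha<0$) does not destroy the norm equivalence; this is handled by the standard trace inequality already invoked.
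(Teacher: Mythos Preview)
Your argument is correct and self-contained. Note, however, that the paper does not actually supply its own proof of this lemma: it is stated as a citation of \cite[Lemmata 2.2 and 2.3]{Krejcirik-etal-m2018}, preceded by the remark ``In the next Lemma we recall several results from \cite{Krejcirik-etal-m2018}.'' So there is no internal proof to compare against; your write-up effectively fills in what the paper leaves to the reference.

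A couple of minor points worth making explicit in a polished version. First, for the form-norm equivalence you should say once that $\QQ_\ve^\bosfer+\|\cdot\|^2$ is equivalent to the $H^1$ norm (this is the closedness of $\BB_\ve^\bosfer$ on $H^1_\bosfer$, implicit in Proposition~\ref{p:quadratic}); then your inequality $\QQ_\ve^\bosfer\le 2\,\widehat\QQ_\ve^\bosfer$ via Cauchy--Schwarz for the non-negative form $\QQ_\ve^\bosfer$, together with the $H^1$-boundedness of $\PP$ and $\PP^\perp$ from Remark~\ref{remH1}, gives the two-sided bound cleanly. Second, for the direct-sum decomposition in (i) you implicitly use that the form domain itself splits, i.e.\ $H^1_\bosfer(\RE^2)=\PP H^1_\bosfer(\RE^2)\oplus\PP^\perp H^1_\bosfer(\RE^2)$; this is exactly what Remark~\ref{remH1} provides, but it is worth stating. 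Your treatment of the lower bound $\frac34\alpha^2$ via the fibrewise min--max on $\{\psi^{BO}_x\}^\perp$, distinguishing the regimes $|x|\le 2/|\alpha|$ (where the spectrum above the ground state starts at $0$) and $|x|>2/|\alpha|$ (where it starts at $-\lambda_1(x)>-\alpha^2/4$), is exactly the right mechanism and matches the statement $\inf_x(-\lambda_1(x))+\alpha^2=\frac34\alpha^2$.
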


\begin{proposition}\label{p:diff} For all $\lambda \in \RE $, $\phi \in D(\Lscr_\ve^\bosfer)$, and $\psi \in D(\widehat\Lscr_\ve^\bosfer)$, there holds 
\[\begin{aligned}
&\left|
      \big\langle\phi , (\widehat \Lscr_\ve^\bosfer - \lambda  ) \psi\big\rangle
    -
    \big\langle( \Lscr_\ve^\bosfer - \lambda   ) \phi, \psi  \big\rangle\right| \\ 
\leq & 2 \ve \delta \left( \|\psi\| \sqrt{\left( \|(\Lscr_\ve^\bosfer -\lambda )\phi\| +\lambda_+ \,  \|\phi\|\right) \|\phi\|} + \|\phi\| \sqrt{\left( \|(\widehat\Lscr_\ve^\bosfer -\lambda )\psi\| + \lambda _+ \,  \|\psi\|\right) \|\psi\|}\,\right)\,.
\end{aligned}\]
\end{proposition}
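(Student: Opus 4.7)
The plan is to reduce the left-hand side to a difference of sesquilinear forms, apply Proposition \ref{p:QwidehatQ}, and then control the quadratic-form norms by the operator-norm quantities using the non-negativity already established for $\Lscr_\ve^\bosfer$ and $\widehat\Lscr_\ve^\bosfer$.

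First I would exploit the fact that $\Lscr_\ve^\bosfer$ is the self-adjoint operator associated to the closed form $\QQ_\ve^\bosfer$, and $\widehat\Lscr_\ve^\bosfer$ to $\widehat\QQ_\ve^\bosfer$, with the common form domain $H^1_\bosfer(\RE^2)$. Since $\phi\in D(\Lscr_\ve^\bosfer)\subset H^1_\bosfer(\RE^2)$ and $\psi\in D(\widehat\Lscr_\ve^\bosfer)\subset H^1_\bosfer(\RE^2)$, the first representation theorem gives
\[
\langle\phi,(\widehat\Lscr_\ve^\bosfer-\lambda)\psi\rangle=\widehat\QQ_\ve^\bosfer(\phi,\psi)-\lambda\langle\phi,\psi\rangle,
\qquad
\langle(\Lscr_\ve^\bosfer-\lambda)\phi,\psi\rangle=\QQ_\ve^\bosfer(\phi,\psi)-\lambda\langle\phi,\psi\rangle.
\]
Subtracting, the $\lambda$-terms cancel and the quantity to be bounded is exactly $|\widehat\QQ_\ve^\bosfer(\phi,\psi)-\QQ_\ve^\bosfer(\phi,\psi)|$. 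Proposition \ref{p:QwidehatQ} immediately controls this by $2\ve\delta\bigl(\sqrt{\QQ_\ve^\bosfer(\phi)}\,\|\psi\|+\|\phi\|\,\sqrt{\widehat\QQ_\ve^\bosfer(\psi)}\bigr)$.

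Next I would bound the diagonal values $\QQ_\ve^\bosfer(\phi)$ and $\widehat\QQ_\ve^\bosfer(\psi)$. Since $\Lscr_\ve^\bosfer\ge 0$ (see the definition following \eqref{r:Qve}), its form is non-negative, and for $\phi\in D(\Lscr_\ve^\bosfer)$ one has
\[
0\le \QQ_\ve^\bosfer(\phi)=\langle\phi,\Lscr_\ve^\bosfer\phi\rangle=\langle\phi,(\Lscr_\ve^\bosfer-\lambda)\phi\rangle+\lambda\|\phi\|^2
\le \|\phi\|\,\|(\Lscr_\ve^\bosfer-\lambda)\phi\|+\lambda_+\|\phi\|^2,
\]
where the final inequality uses Cauchy--Schwarz and the elementary observation $\lambda\|\phi\|^2\le\lambda_+\|\phi\|^2$. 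The identical argument, based on $\widehat\Lscr_\ve^\bosfer\ge 0$ from Lemma \ref{l:spectrum-widehatL}(ii), yields the analogous bound for $\widehat\QQ_\ve^\bosfer(\psi)$. Monotonicity of $\sqrt{\,\cdot\,}$ and factoring a $\|\phi\|$ (respectively $\|\psi\|$) under the square root then delivers exactly the expressions
\[
\sqrt{\QQ_\ve^\bosfer(\phi)}\le \sqrt{(\|(\Lscr_\ve^\bosfer-\lambda)\phi\|+\lambda_+\|\phi\|)\|\phi\|},\qquad \sqrt{\widehat\QQ_\ve^\bosfer(\psi)}\le \sqrt{(\|(\widehat\Lscr_\ve^\bosfer-\lambda)\psi\|+\lambda_+\|\psi\|)\|\psi\|},
\]
which plugged into the Proposition \ref{p:QwidehatQ} bound give the claim.

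There is no real obstacle here: the proposition is essentially a straightforward algebraic corollary of Proposition \ref{p:QwidehatQ}, and the only delicate point is keeping track of sign conventions to arrive at $\lambda_+$ rather than $|\lambda|$, which is handled by noting that when $\lambda<0$ the term $\lambda\|\phi\|^2$ is already negative and can be dropped from the upper bound.
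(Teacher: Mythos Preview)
Your proposal is correct and follows essentially the same approach as the paper: reduce the left-hand side to $|\widehat\QQ_\ve^\bosfer(\phi,\psi)-\QQ_\ve^\bosfer(\phi,\psi)|$ via the first representation theorem, invoke Proposition~\ref{p:QwidehatQ}, and bound each diagonal form value by $(\|(\cdot-\lambda)\cdot\|+\lambda_+\|\cdot\|)\|\cdot\|$ using Cauchy--Schwarz and non-negativity. The paper presents these steps in a slightly different order but the argument is identical.
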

\begin{proof}
    For all $\phi \in D(\Lscr_\ve^\bosfer)$, there holds 
    \[\QQ_\ve^\bosfer(\phi) = \langle\phi,(\Lscr_\ve^\bosfer -\lambda )\phi\rangle + \lambda \,  \|\phi\|^2.\]
    Hence, 
    \[\QQ_\ve^\bosfer(\phi) \leq \left( \|(\Lscr_\ve^\bosfer -\lambda )\phi\| +\lambda_+ \,  \|\phi\|\right) \|\phi\|\,.
    \]
    Similarly, for all $\psi \in D(\widehat \Lscr_\ve^\bosfer)$,
    \[\widehat\QQ_\ve^\bosfer(\psi) \leq \left( \|(\widehat\Lscr_\ve^\bosfer -\lambda )\psi\| + \lambda _+ \,  \|\psi\|\right) \|\psi\|.\]
    For all  $\phi \in D(\Lscr_\ve^\bosfer)$ and $\psi \in D(\widehat \Lscr_\ve^\bosfer)$, there holds 
    \[
      \big\langle\phi , (\widehat \Lscr_\ve^\bosfer - \lambda  ) \psi\big\rangle
    -
    \big\langle \Lscr_\ve^\bosfer - \lambda   ) \phi, \psi  \big\rangle
        = 
    \big\langle\phi , \widehat \Lscr_\ve^\bosfer \psi\big\rangle
    -
    \big\langle \Lscr_\ve^\bosfer   \phi, \psi  \big\rangle
        = 
       \widehat \QQ_\ve^\bosfer (\phi , \psi)
    -
 \QQ_\ve^\bosfer (    \phi, \psi  ).
\]
Hence, 
\[\begin{aligned}
&\left|
      \big\langle\phi , (\widehat \Lscr_\ve^\bosfer - \lambda  ) \psi\big\rangle
    -
    \big\langle( \Lscr_\ve^\bosfer - \lambda   ) \phi, \psi  \big\rangle\right| \\ 
\leq & 2 \ve \delta \left( \|\psi\| \sqrt{\left( \|(\Lscr_\ve^\bosfer -\lambda )\phi\| +\lambda_+ \,  \|\phi\|\right) \|\phi\|} + \|\phi\| \sqrt{\left( \|(\widehat\Lscr_\ve^\bosfer -\lambda )\psi\| + \lambda _+ \,  \|\psi\|\right) \|\psi\|}\,\right).
\end{aligned}\]
\end{proof}

\begin{lemma}\label{l:resolvent-difference} (i) Defining 
\[
d_\ve \equiv d_\ve(\lambda) := \dist(\lambda ,\sigma(\Lscr_\ve^\bosfer))\,,\qquad \text{and} \qquad \widehat d_\ve \equiv \widehat d_\ve(\lambda) := \dist(\lambda ,\sigma(\widehat\Lscr_\ve^\bosfer))\,,
\]
one has 
\[\left\{\lambda \in [0,+\infty) \cap \rho(\widehat \Lscr_\ve^\bosfer):  
\frac{4\ve\delta}{\sqrt{\widehat d_\ve(\lambda)}} \,\sqrt{ 1 +\frac{\lambda}{\widehat d_\ve(\lambda)}} < \frac12\right\} \subseteq
\left\{\lambda  \in \rho(\Lscr_\ve^\bosfer):  
d_\ve(\lambda) \geq \frac{\widehat d_\ve(\lambda)}{32}\,\right\}.
\]
(ii) For any $\lambda\in\rho(\Lscr_\ve^\bosfer)\cap\rho(\widehat \Lscr_\ve^\bosfer)$, one has 
\[
\big\|(\Lscr_\ve^\bosfer - \lambda )^{-1} - (\widehat \Lscr_\ve^\bosfer - \lambda )^{-1} \big\|\leq   
2 \ve \delta \left( \frac{1}{\widehat d_\ve}\sqrt{\frac{1}{d_\ve}\left( 1 +\frac{\lambda_+}{d_\ve}\right)} + \frac{1}{d_\ve} \sqrt{ \frac{1}{\widehat d_\ve} \left( 1+ \frac{\lambda _+}{\widehat d_\ve}\right)}\ \right). 
\]
\end{lemma}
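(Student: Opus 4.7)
The plan is to deduce both parts from Proposition \ref{p:diff} by specializing its two test functions to resolvents applied to generic $L^{2}$ vectors, and to exploit the self-adjointness identity $\operatorname{dist}(\lambda,\sigma(T)) = \|(T-\lambda)^{-1}\|^{-1}$ valid for both $T = \Lscr_\ve^\bosfer$ and $T = \widehat\Lscr_\ve^\bosfer$ whenever $\lambda$ is real and in the resolvent set.

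For part (ii), given arbitrary $u,v\in L^{2}(\RE^{2})$, I would set $\phi := (\Lscr_\ve^\bosfer - \lambda)^{-1}u$ and $\psi := (\widehat\Lscr_\ve^\bosfer - \lambda)^{-1}v$. Self-adjointness of the resolvents for real $\lambda$ rewrites the left-hand side of the inequality in Proposition \ref{p:diff} as
\begin{equation*}
\langle\phi,v\rangle - \langle u,\psi\rangle = \langle u,[(\Lscr_\ve^\bosfer - \lambda)^{-1} - (\widehat\Lscr_\ve^\bosfer - \lambda)^{-1}]v\rangle.
\end{equation*}
Substituting the bounds $\|\phi\| \leq \|u\|/d_\ve$, $\|\psi\| \leq \|v\|/\widehat d_\ve$, $\|(\Lscr_\ve^\bosfer - \lambda)\phi\| = \|u\|$, $\|(\widehat\Lscr_\ve^\bosfer - \lambda)\psi\| = \|v\|$ into the right-hand side, and then taking the supremum over unit $u,v$, yields the stated operator-norm estimate.

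For part (i), the goal is the coercivity inequality $\|(\Lscr_\ve^\bosfer - \lambda)\phi\| \geq (\widehat d_\ve/32)\,\|\phi\|$ for every $\phi \in D(\Lscr_\ve^\bosfer)$; once this is established, self-adjointness of $\Lscr_\ve^\bosfer - \lambda$ upgrades it automatically to bounded invertibility (closed range plus trivial kernel force dense range via $\overline{\ran(\Lscr_\ve^\bosfer - \lambda)} = \ker(\Lscr_\ve^\bosfer - \lambda)^\perp$), and the resolvent-norm bound then gives $d_\ve(\lambda) \geq \widehat d_\ve/32$. To prove coercivity I would normalize $\|\phi\|=1$, set $A := \|(\Lscr_\ve^\bosfer - \lambda)\phi\|$, and take $\psi := (\widehat\Lscr_\ve^\bosfer - \lambda)^{-1}\phi$ so that $\|\psi\| \leq 1/\widehat d_\ve$, $(\widehat\Lscr_\ve^\bosfer - \lambda)\psi = \phi$, and $\langle\phi,(\widehat\Lscr_\ve^\bosfer - \lambda)\psi\rangle = 1$. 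Proposition \ref{p:diff} combined with Cauchy--Schwarz on $|\langle(\Lscr_\ve^\bosfer - \lambda)\phi,\psi\rangle|$, and use of $\lambda_+ = \lambda$, yields
\begin{equation*}
1 \leq \frac{A}{\widehat d_\ve} + \frac{2\ve\delta}{\widehat d_\ve}\sqrt{A+\lambda} + 2\ve\delta\sqrt{\frac{1}{\widehat d_\ve}\Big(1+\frac{\lambda}{\widehat d_\ve}\Big)}.
\end{equation*}
A dichotomy then closes the estimate: if $A \geq \widehat d_\ve$ the conclusion is immediate; otherwise $\sqrt{A+\lambda} < \sqrt{\widehat d_\ve}\sqrt{1+\lambda/\widehat d_\ve}$ bounds the middle term by half of the hypothesis quantity, whence the smallness assumption reduces the inequality to $1 < A/\widehat d_\ve + 1/2$, forcing $A > \widehat d_\ve/2 \geq \widehat d_\ve/32$.

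The main subtlety is the implicit coupling between the two error terms in Proposition \ref{p:diff} through the unknown $A$ in part (i); once $A$ is compared to $\widehat d_\ve$ the square-root structure of the right-hand side essentially forces the conclusion. The factor $32$ in the statement is deliberately crude to absorb this case analysis. By contrast, part (ii) is almost mechanical once the correct test vectors are chosen, since the symmetric two-term structure of Proposition \ref{p:diff} matches the symmetric form of the resolvent difference.
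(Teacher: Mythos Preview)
Your proof is correct and follows essentially the same strategy as the paper: both parts plug resolvent-type test functions into Proposition~\ref{p:diff}, and part~(ii) is virtually identical. For part~(i) the paper instead uses $\sqrt{A+\lambda}\le\sqrt{A}+\sqrt{\lambda}$ and then completes the square in $s:=\sqrt{A/\widehat d_\ve}$ to extract the coercivity bound, whereas your dichotomy $A\gtrless\widehat d_\ve$ is cleaner and in fact yields the sharper constant $A>\widehat d_\ve/2$ rather than $\widehat d_\ve/32$.
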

\begin{proof}
In Proposition \ref{p:diff} set $\psi = (\widehat\Lscr_\ve^\bosfer -\lambda )^{-1}\phi$, then 
\[\begin{aligned}
&\left| \|\phi\|^2
    -
    \big\langle( \Lscr_\ve^\bosfer - \lambda   ) \phi, (\widehat\Lscr_\ve^\bosfer -\lambda )^{-1}\phi  \big\rangle\right| \\ 
&\leq 
2 \ve \delta \Bigg( \|(\widehat\Lscr_\ve^\bosfer -\lambda )^{-1}\phi\| \sqrt{\left( \|(\Lscr_\ve^\bosfer -\lambda )\phi\| +\lambda_+ \,  \|\phi\|\right) \|\phi\|} \\ & \qquad+ \|\phi\| \sqrt{\left( \|\phi\| + \lambda _+ \,  \|(\widehat\Lscr_\ve^\bosfer -\lambda )^{-1}\phi\|\right) \|(\widehat\Lscr_\ve^\bosfer -\lambda )^{-1}\phi\|}\ \Bigg)\\
&\leq 
2 \ve \delta \|\phi\| \left( \frac{1}{\widehat d_\ve}\sqrt{\left( \|(\Lscr_\ve^\bosfer -\lambda )\phi\| +\lambda_+ \,  \|\phi\|\right) \|\phi\|} +\|\phi\| \sqrt{\frac{1}{\widehat d_\ve}\left( 1 +\frac{\lambda_+}{\widehat d_\ve} \right)}\ \right)\\
&\leq 
2 \ve \delta \|\phi\| \left( \frac{1}{\widehat d_\ve}\sqrt{\|(\Lscr_\ve^\bosfer -\lambda )\phi\|\|\phi\|}
+ \frac{ \|\phi\|}{\widehat d_\ve} \sqrt{\lambda_+ }
+\|\phi\| \sqrt{\frac{1}{\widehat d_\ve}\left( 1 +\frac{\lambda_+}{\widehat d_\ve} \right)}\ \right)\\
\\
&\leq 
2 \ve \delta \|\phi\| \left( \frac{1}{\widehat d_\ve}\sqrt{\|(\Lscr_\ve^\bosfer -\lambda )\phi\|\|\phi\|}
+2\|\phi\| \sqrt{\frac{1}{\widehat d_\ve}\left( 1 +\frac{\lambda_+}{\widehat d_\ve} \right)}\ \right).
\end{aligned}\]
The latter inequality gives 
\[
    \|\phi\|^2 \leq  \big|
    \big\langle( \Lscr_\ve^\bosfer - \lambda   ) \phi, (\widehat\Lscr_\ve^\bosfer -\lambda )^{-1}\phi  \big\rangle\big|+  2 \ve \delta \|\phi\| \left( \frac{1}{\widehat d_\ve}\sqrt{\|(\Lscr_\ve^\bosfer -\lambda )\phi\|\|\phi\|}
+2\|\phi\| \sqrt{\frac{1}{\widehat d_\ve}\left( 1 +\frac{\lambda_+}{\widehat d_\ve} \right)}\ \right).
\]
From which there follows, 
\[
    \|\phi\| \leq \frac{1}{\widehat d_\ve} \|( \Lscr_\ve^\bosfer -  \lambda   ) \phi\|+  2 \ve \delta  \left( \frac{1}{\widehat d_\ve}\sqrt{\|(\Lscr_\ve^\bosfer -\lambda )\phi\|\|\phi\|}
+2 \|\phi\|\sqrt{\frac{1}{\widehat d_\ve}\left( 1 +\frac{\lambda_+}{\widehat d_\ve} \right)}\ \right).
\]
Defining
\[
s : =\sqrt{ \frac{\|( \Lscr_\ve^\bosfer -  \lambda   ) \phi\|}{\widehat d_\ve \|\phi\|}},
\]
one has 
\[
\left(s+\frac{\ve\delta}{\sqrt{\widehat d_\ve}}\right)^2 - \frac{\ve^2\delta^2}{\widehat d_\ve} \geq 1 -  \frac{4\ve\delta}{\sqrt{\widehat d_\ve}} \sqrt{ 1 +\frac{\lambda_+}{\widehat d_\ve}} \, .
\]
If 
\[
\frac{4\ve\delta}{\sqrt{\widehat d_\ve}} \sqrt{ 1 +\frac{\lambda_+}{\widehat d_\ve}} < \frac12,
\]
then the following inequality must be satisfied
\[
s \geq -   \frac{\ve\delta}{\sqrt{\widehat d_\ve}} + \sqrt{ \frac{\ve^2\delta^2}{\widehat d_\ve}+ \frac12 } \geq \frac14 \frac{1}{\sqrt{ 1 + \frac{\ve^2\delta^2}{\widehat d_\ve}}}.
\]
Hence, 
\[
\|( \Lscr_\ve^\bosfer -  \lambda   ) \phi\| \geq \frac{\widehat d_\ve}{16} \frac{1}{ 1 + \frac{\ve^2\delta^2}{\widehat d_\ve}}\|\phi\|\geq \frac{\widehat d_\ve}{32}\|\phi\|>0
\]
which implies $\lambda \in \rho(\Lscr_\ve^\bosfer)$ and the bound on $d_\ve$. We remark that if $\lambda <0$, the statement is trivial since  $(-\infty,0)\subset  \rho(\Lscr_\ve^\bosfer)$.

To prove the bound on the resolvents difference, let  $\tilde\chi, \chi \in L^2_{\bosfer}(\RE^2)$,   
    \[
    \phi = (\Lscr_\ve^\bosfer -\lambda )^{-1}\tilde \chi \qquad \text{and}\qquad  
       \psi = (\widehat \Lscr_\ve^\bosfer - \lambda )^{-1}\chi,
    \]
    and apply Proposition \ref{p:diff} to obtain 
\[\begin{aligned}
&\left|\left\langle\tilde\chi, \big((\Lscr_\ve^\bosfer - \lambda )^{-1} - (\widehat \Lscr_\ve^\bosfer - \lambda )^{-1} \big) \chi \right\rangle\right| 
\\
\leq & 2 \ve \delta \Bigg( \|(\widehat \Lscr_\ve^\bosfer - \lambda )^{-1}\chi\| \sqrt{\left( \|\tilde \chi\| +\lambda_+ \,  \|\left(\Lscr_\ve^\bosfer -\lambda \right)^{-1}\tilde \chi\|\right) \|(\Lscr_\ve^\bosfer -\lambda )^{-1}\tilde \chi\|}  \\ & \qquad + \|(\Lscr_\ve^\bosfer -\lambda )^{-1}\tilde \chi\| \sqrt{\left( \|\chi\| + \lambda _+ \,  \|(\widehat \Lscr_\ve^\bosfer - \lambda )^{-1}\chi\|\right) \|(\widehat \Lscr_\ve^\bosfer - \lambda )^{-1}\chi\|}\ \Bigg)\\
\leq & 2 \ve \delta \left( \frac{1}{\widehat d_\ve}\sqrt{\frac{1}{d_\ve}\left( 1 +\frac{\lambda_+}{d_\ve}\right)} + \frac{1}{d_\ve} \sqrt{ \frac{1}{\widehat d_\ve} \left( 1+ \frac{\lambda _+}{\widehat d_\ve}\right)}\ \right)\|\chi\|\|\tilde \chi\| .
\end{aligned}
\]
To conclude, take 
\[\tilde\chi = \big((\Lscr_\ve^\bosfer - \lambda )^{-1} - (\widehat \Lscr_\ve^\bosfer - \lambda )^{-1} \big) \chi\]
so that 
\[
\Big\|\big((\Lscr_\ve^\bosfer - \lambda )^{-1} - (\widehat \Lscr_\ve^\bosfer - \lambda )^{-1} \big) \chi \Big\|
\leq  2 \ve \delta \left( \frac{1}{\widehat d_\ve}\sqrt{\frac{1}{d_\ve}\left( 1 +\frac{\lambda_+}{d_\ve}\right)} + \frac{1}{d_\ve} \sqrt{ \frac{1}{\widehat d_\ve} \left( 1+ \frac{\lambda _+}{\widehat d_\ve}\right)}\ \right)   \|\chi\|\,.
\]
\end{proof}
\section{The effective Hamiltonian\label{s:5}}
We begin by a rewriting of the quadratic form $\QQ^\bosfer_{\ve} (\PP\varphi,\PP \psi)$ associated to the operator $\widehat\Lscr^{\bosfer}_{\ve,\PP}$ given in Lemma \ref{l:spectrum-widehatL}, i.e., 
\[\QQ^\bosfer_{\ve} (\PP\varphi,\PP \psi) = \int_{\RE^2} \ve^2 {\partial_x (\overline{\psi^{BO} f_\varphi})}\partial_x (\psi^{BO} f_\psi) \d\x  +\int_\RE  q_x(\psi_{x}^{BO} ,\psi_{x}^{BO} )\overline{f_{\varphi}}f_{\psi}\,\d x .
\]
We point out the identity  
\[
 \int_{\RE^2}  {\partial_x (\overline{\psi^{BO} f_\varphi})}\partial_x (\psi^{BO} f_\psi) \d\x   = 
  \int_{\RE}  \overline{ f_\varphi'} f_\psi' \, \d x  
  +
   \int_{\RE}   \left(\int_{\RE} |\partial_x \psi^{BO}|^2  dy\right) \overline{ f_\varphi} f_\psi \, \d x,
\]
where we used $\int_{\RE} \psi^{BO}\partial_x \psi^{BO}   \d y =0$, which is a consequence of the normalization condition $\|\psi_{x}^{BO}\| = 1$. Also we notice that
\[
q_x(\psi_{x}^{BO} ,\psi_{x}^{BO} ) = -\lambda_{0}(x)+\alpha^2  \,.
\]
Hence, 
\[\QQ^\bosfer_{\ve} (\PP\varphi,\PP \psi) =
\int_{\RE} \big(\ve^2 \overline{f_\varphi'}  f_\psi'  +  \big( V + \ve^2 \,R \big) \overline{f_{\varphi}} f_{\psi} \big) \, \d x  
\]
with  
    \begin{equation*}
        V(x) :=
          -\lambda_0(x)+\alpha^2 = -   \left(\frac{ W\left(\frac{|\alpha| |x|}{2} e^{\frac{-|\alpha|  |x|}{2}} \right)}{|x|} + \frac{|\alpha|}{2}\right)^{\!\!2} +\alpha^2  \, ,
    \end{equation*}
    and
    \begin{equation*}
        R(x) := \int_\rr |\partial_x \psi^{BO}(x,y)|^2 \d y \, .
    \end{equation*}
\begin{remark} \label{r:UBO} Let us define
\[
L^{2}_{\bosfer}(\RE):=\{f\in L^{2}(\RE): f(x)=\bosferpm_{\bosfer} f(-x)\}\,,\qquad H^{1}_{\bosfer}(\RE):=
H^{1}(\RE)\cap L^{2}_{\bosfer}(\RE)
\] 
and
\[
U_{BO}: \ran(\mathcal P|L_{\bosfer}^{2}(\RE^{2}))\to L^{2}_{\bosfer}(\RE)\,,\qquad U_{BO}(\psi^{BO}f_{\phi}):=f_{\phi}\,.
\] 
By $\|\psi^{BO}_{x}\|=1$, 
\[
\langle \psi^{BO}f_{\phi},\psi^{BO}f_{\psi}\rangle=\int_{\RE}\langle 
\psi^{BO}_{x},\psi^{BO}_{x}\rangle \overline f_{\phi}(x)f_{\psi}(x)\,dx=\langle f_{\phi},f_{\psi}\rangle\,,
\]
i.e., $U_{BO}$ preserves the scalar product. It is a bijection with inverse $U_{BO}^{-1}f=\psi^{BO}f$; hence, $U_{BO}$ is unitary and 
\[
\ran(\mathcal P|L_{\bosfer}^{2}(\RE^{2}))\simeq L^{2}_{\bosfer}(\RE)\,.
\]
\end{remark}
By the previous Remark, $\QQ^\bosfer_{\ve} (\PP\,\cdot, \PP\,\cdot)$ identifies with the sesquilinear  form in $L^{2}_{\bosfer}(\RE)$ defined by
\[
D(\Qeff^{\bosfer}) := H_{\bosfer}^1(\RE) \times H_{\bosfer}^1(\RE) \,,\qquad 
\Qeff ^{\bosfer} (f,g ) :=
\int_{\RE} \ve^2 \overline{f'}  g'  +  \big( V + \ve^2 \,R \big) \overline{f} g  \, \d x  .
\]
Since $ V + \ve^2 \,R$ is a bounded potential, the associated self-adjoint Hamiltonian in $L_{\bosfer}^2(\RE)$   is given by 
    \begin{equation*}
      D(\Lscr^{\eff\bosfer}_\ve) =
       H_{\bosfer}^2(\RE):=H^{2}(\RE)\cap L^{2}_{\bosfer}(\RE)\qquad  
       \Lscr^{\eff\bosfer}_\ve  
       = -\ve^2 \frac{d^2}{dx^2}+ V + \ve^2 \,R.
    \end{equation*}
Now, we analyze the spectrum of 
\[
\Lscr^{\eff\bosfer}_\ve - \ve^2 R =  -\ve^2 \frac{d^2}{dx^2}+ V . 
\] 
We note that:
\begin{enumerate}[$(i)$]
\item $V$ is even on $\RE$;
\item $0 \leq V(x)<\frac{3 }{4}\alpha^2 $;
\item $V(0)= 0$, and $\lim_{x\to\pm\infty} V(x) = \frac{3 }{4}\alpha^2$;
\item $V$ is strictly increasing on $(0,+\infty)$;
\end{enumerate}
Furthermore, by 
\[
(W(y e^{-y})+y)^2 - 4 y^2 = - 8 y^3  + O(y^4)\,,\qquad y\ll1\,, 
\]
there follows 
\begin{equation}\label{seriesV}
V(x) =   |\alpha|^3|x|+ O(x^2)\,,\qquad |x|\ll1\,.   
\end{equation}
To proceed, we set $\Lambda := \ve^{-1}$ and define
    \begin{equation}\label{Ham}
        \Lrm_\Lambda^\bosfer:=\Lambda^2 \Lscr^{\eff\bosfer}_{1/\Lambda } - R\equiv -\frac{d^2}{dx^2} + \Lambda^2 V  \,.
    \end{equation}
Denoting by $\Lrm_\Lambda$ the self-adjoint operator in $L^{2}(\RE)$ defined by the same differential expression as $\Lrm_\Lambda^\bosfer$,  since $(V-\frac34\alpha^2)$ is bounded and vanishes as $|x|\to+\infty$ one has (see, e.g., \cite[Theorem 3.8.2]{Schechter}) 
\[
\sigma_{ess}(  \Lrm_\Lambda^\bosfer )\subseteq\sigma_{ess}(  \Lrm_\Lambda ) = 
\sigma_{ess}\!\left(-\frac{d^2}{dx^2}+\frac34\,\alpha^2\Lambda^{2}\right)= 
\left[\frac34\,\alpha^2\Lambda^2,+\infty\right)\,.
\]
Since  $\Lrm_\Lambda$ is in the limit point case at both ends $\pm\infty$, its eigenvalues are simple  (see, e.g., \cite[Section 10]{Weidmann}); hence the eigenvalues of $\Lrm_\Lambda^\bosfer$ are simple as well. \par
We denote by $\ell_{\Lambda,n}^\bosfer$, $n=0,1,2,\dots,$ the eigenvalues of $\Lrm_\Lambda^\bosfer$ below the essential spectrum, numbered in increasing order 
\[0<\ell^{\bosfer}_{\Lambda,0}<\ell^{\bosfer}_{\Lambda,1}<\cdots <\ell_{\Lambda,n}^\bosfer< \cdots. \]
\par
The behavior as $\Lambda\gg 1$ of  such eigenvalues   is provided in the following
\begin{theorem}\label{theorem_linearHam}
For any  fixed  integer $n\ge0$ and  sufficiently large $\Lambda$,  $\Lrm_\Lambda^\bosfer$ has at least $n+1$  simple isolated eigenvalues and
\begin{equation*}
        \ell_{\Lambda,n}^\bosfer = e_{n}^{\bosfer} \Lambda^{4/3} + O(\Lambda)\,,\qquad e^{\bos}_{n}:= e_{2n}\,,\quad e_{n}^{\fer}:=e_{2n+1}\,, 
    \end{equation*}
where $e_{k}$ denotes  the $(k+1)$-th eigenvalue of the closure $K^1$ of the essentially self-adjoint operator 
\begin{equation*} 
        D( \mathrm{\dot K^1}) =C^{\infty}_{0}(\RE)\,, \qquad  \mathrm{\dot K^1} := -\frac{d^2}{dx^2}  + |\alpha|^3|x|  .
    \end{equation*}
\end{theorem}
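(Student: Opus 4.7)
The plan is to follow the rescaling approach of Simon \cite{Sim}, adapted to accommodate the Lipschitz (rather than smooth) minimum of $V$ at the origin. I would introduce the unitary dilation $U_\Lambda:L^{2}(\RE)\to L^{2}(\RE)$, $(U_\Lambda f)(x):=\Lambda^{1/3}f(\Lambda^{2/3}x)$, which preserves $L^{2}_{\bosfer}(\RE)$ because it commutes with the reflection $S$. A direct computation gives
\[
U_\Lambda^{*}\Lrm_\Lambda^{\bosfer}U_\Lambda=\Lambda^{4/3}\widetilde\Lrm_\Lambda^{\bosfer},\qquad \widetilde\Lrm_\Lambda^{\bosfer}:=-\frac{d^{2}}{dy^{2}}+V_\Lambda(y),\quad V_\Lambda(y):=\Lambda^{2/3}V(\Lambda^{-2/3}y).
\]
The expansion \eqref{seriesV} yields $V_\Lambda(y)=|\alpha|^{3}|y|+r_\Lambda(y)$ with $|r_\Lambda(y)|\leq C\Lambda^{-2/3}y^{2}$ on every fixed compact set, while globally $V_\Lambda$ is bounded by $\tfrac{3}{4}\alpha^{2}\Lambda^{2/3}$. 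It therefore suffices to show that the $n$-th eigenvalue $\tilde\ell_{\Lambda,n}^{\bosfer}$ of $\widetilde\Lrm_\Lambda^{\bosfer}$ in $L^{2}_{\bosfer}(\RE)$ satisfies $\tilde\ell_{\Lambda,n}^{\bosfer}=e_{n}^{\bosfer}+O(\Lambda^{-1/3})$. Since $K^{1}$ commutes with parity, its eigenfunctions on $\{y>0\}$ can be written explicitly as $\Ai(|\alpha|y-e/|\alpha|^{2})$ with Neumann condition at $0$ (bosonic sector, giving the values $e_{2k}$ characterized by $\Ai'(\sigma_{2k})=0$) or Dirichlet condition at $0$ (fermionic sector, giving the values $e_{2k+1}$ with $\Ai(\sigma_{2k+1})=0$), matching the indexing $e_{n}^{\bos}=e_{2n}$, $e_{n}^{\fer}=e_{2n+1}$.

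For the upper bound I would apply the min-max principle to the $(n+1)$-dimensional subspace spanned by the first $n+1$ normalized eigenfunctions $\phi_{0}^{\bosfer},\dots,\phi_{n}^{\bosfer}$ of $K^{1}|L^{2}_{\bosfer}(\RE)$. These Airy-type eigenfunctions are Schwartz and decay like $\exp(-c|y|^{3/2})$ at infinity, so the matrix elements $\langle\phi_{j}^{\bosfer},r_\Lambda\phi_{k}^{\bosfer}\rangle$ split at $|y|=M(\Lambda)$ into an inner contribution of size $O(M^{3}\Lambda^{-2/3})$ (from the Taylor bound) and a tail contribution of size $O(\Lambda^{2/3}e^{-cM^{3/2}})$ (from the global bound on $V_\Lambda$ together with the Airy decay). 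Choosing $M$ a suitable slowly growing function of $\Lambda$ makes both pieces $O(\Lambda^{-1/3})$, which gives $\tilde\ell_{\Lambda,n}^{\bosfer}\leq e_{n}^{\bosfer}+O(\Lambda^{-1/3})$.

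The lower bound is the main obstacle, since $V_\Lambda$ saturates at $\tfrac{3}{4}\alpha^{2}\Lambda^{2/3}$ while $|\alpha|^{3}|y|$ is unbounded, so no global operator inequality $\widetilde\Lrm_\Lambda^{\bosfer}\geq K^{1}-o(1)$ can hold. I would use an IMS localization with an \textbf{even} smooth partition $\chi^{2}+\eta^{2}=1$, with $\chi$ supported in $|y|\leq\Lambda^{\kappa}$ and $\eta$ in $|y|\geq\tfrac{1}{2}\Lambda^{\kappa}$ for some $0<\kappa<2/3$; the evenness preserves $L^{2}_{\bosfer}(\RE)$, so the min-max counting proceeds within each parity sector. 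On the inner region $|r_\Lambda|\leq C\Lambda^{2\kappa-2/3}$, which yields $\widetilde\Lrm_\Lambda^{\bosfer}\geq K^{1}-C\Lambda^{2\kappa-2/3}$ on localized states; on the outer region the monotonicity of $V$ together with the Taylor expansion gives $V_\Lambda(y)\geq V_\Lambda(\tfrac{1}{2}\Lambda^{\kappa})\gtrsim\Lambda^{\kappa}\gg e_{n}^{\bosfer}$; the IMS remainder $|\chi'|^{2}+|\eta'|^{2}$ is of order $\Lambda^{-2\kappa}$. Balancing these contributions (choosing $\kappa=1/6$) produces the lower bound $\tilde\ell_{\Lambda,n}^{\bosfer}\geq e_{n}^{\bosfer}-O(\Lambda^{-1/3})$; multiplying by $\Lambda^{4/3}$ completes the proof, with simplicity of the eigenvalues inherited from the one already noted for $\Lrm_\Lambda^{\bosfer}$ below the essential spectrum.
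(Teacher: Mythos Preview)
Your proposal is correct and follows essentially the same strategy as the paper: Simon's method with the unitary dilation $U_\Lambda$, IMS localization for the lower bound, and Rayleigh--Ritz with Airy eigenfunctions for the upper bound. The only cosmetic differences are that you work in the rescaled picture (comparing $\widetilde\Lrm_\Lambda^{\bosfer}$ to $K^1$) while the paper works in the original $x$-variable (comparing $\Lrm_\Lambda^{\bosfer}$ to $\Lrm_\Lambda^{1\bosfer}$), and that you use unlocalized Airy trial functions whereas the paper multiplies them by $J_1$; your localization exponent $\kappa=1/6$ corresponds exactly to the paper's cutoff scale $|x|\sim\Lambda^{-1/2}$.
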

We prove this theorem by Barry Simon's approach presented in \cite{Sim}, where the author considered the  case with a smooth, bounded from below potential. We adapt his strategy to our situation, where the potential is not differentiable at the origin. The proof of  Theorem \ref{theorem_linearHam} follows from the combination of  Lemmata \ref{l:lowerbound} and \ref{l:upperbound}.
\par
For later use, we recall several properties of the eigenvalues and eigenfunctions of  ${\mathrm{K}}^1$. We refer to \cite[Sect. 6.10]{Schwinger} for the details (in particular, see  \cite[Eqs. 6.10.9, 6.10.10,   6.9.1, 6.9.2]{Schwinger}; in the notation there, $\sigma_{2n} = \overline{\overline{\sigma}}_{n+1}$ and $\sigma_{2n+1} = \overline{\sigma}_{n+1}$).
\begin{lemma}\label{r:airy}
The solutions $\phi_k\in L^{2}(\RE)$ of the  eigenvalue equations   
    \begin{equation*}
        {\mathrm{K}}^{1} \phi_k = e_k \phi_k\,, \qquad  k = 0,1,2,\dots\,.
    \end{equation*}
    can be written in terms of the Airy function $\Ai$ as
    \begin{equation}\label{phi-even}
        \phi_{2n}(x) = C_{2n} \Ai(|\alpha||x| + \sigma_{2k})\,,\qquad n = 0,1,2,\dots\,,
    \end{equation}
    \begin{equation}\label{phi-odd}
        \phi_{2n+1}(x) = C_{2n+1} \sgn(x) \Ai(|\alpha||x| + \sigma_{2n+1})\,,\qquad n = 0,1,2,\dots\,,
    \end{equation}
where the $C_k$'s are normalization constants,  
the $\sigma_{2n}$'s and $\sigma_{2n+1}$'s interlace, 
\[\dots<\sigma_{2n+1}<\sigma_{2n}
<\dots<\sigma_{1}<\sigma_{0}<0\,,
\]
and are the  extrema  and the zeros   of the Airy function respectively (i.e., $\Ai'(\sigma_{2n})=0$ and   $\Ai(\sigma_{2n+1}) =0$). The eigenvalues 
\[0<e_0<e_1<e_2<\dots<e_k<\dots. \]
are given by
\begin{equation*}
    e_k = |\sigma_k|\,\alpha^2\,.
\end{equation*}
\end{lemma}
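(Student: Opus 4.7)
The plan is to exploit the parity symmetry of the potential $|\alpha|^3|x|$ to split the spectral problem into even and odd sectors, and in each sector reduce the ODE on $(0,+\infty)$ to the classical Airy equation by an affine change of variable; the matching conditions at $x=0$ will single out, respectively, the zeros of $\Ai'$ and of $\Ai$ as the admissible parameters.

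First I would establish the basic spectral picture. The potential $|\alpha|^{3}|x|$ is continuous, bounded below, and grows without bound as $|x|\to\infty$; standard criteria (e.g. Sears' theorem, or the observation that Weyl's limit-point case holds at both ends) give essential self-adjointness of $\dot{\mathrm{K}}^{1}$ on $C_{0}^{\infty}(\RE)$, and the growth at infinity makes the resolvent of $\mathrm{K}^{1}$ compact. Hence $\mathrm{K}^{1}$ has purely discrete, simple spectrum accumulating only at $+\infty$. Since $\mathrm{K}^{1}$ commutes with the parity operator $\phi(x)\mapsto\phi(-x)$ and the eigenvalues are simple, every eigenfunction is either even or odd.

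Next I would solve the reduced problem on $(0,+\infty)$. There the eigenvalue equation becomes $-\phi''+|\alpha|^{3}x\,\phi=e\,\phi$, and under the affine change of variable $t:=|\alpha|x-e/|\alpha|^{2}$ it reduces to the Airy equation $\phi_{tt}(t)=t\,\phi(t)$. The general solution is $c_{1}\Ai(t)+c_{2}\Bi(t)$, but $\Bi$ blows up as $t\to+\infty$, so $L^{2}$-integrability on $(0,+\infty)$ forces $c_{2}=0$. Therefore every $L^{2}$ eigenfunction of $\mathrm{K}^{1}$ restricted to $(0,+\infty)$ has the form $\phi(x)=C\,\Ai(|\alpha|x+s)$ with $s=-e/|\alpha|^{2}$.

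Now I would impose the matching at the origin. In the even sector, the $C^{1}$-extension of $\phi$ to $\RE$ requires $\phi'(0^{+})=0$, which gives $|\alpha|C\,\Ai'(s)=0$, so $s$ must be a zero of $\Ai'$; labelling these zeros $\sigma_{0}>\sigma_{2}>\sigma_{4}>\cdots$ yields \eqref{phi-even}. In the odd sector, continuity at $0$ forces $\phi(0)=0$, hence $\Ai(s)=0$, and labelling the zeros of $\Ai$ as $\sigma_{1}>\sigma_{3}>\sigma_{5}>\cdots$ yields \eqref{phi-odd}. Since $\Ai$ is strictly positive on $[0,+\infty)$, both families of $\sigma_{k}$ are negative, and the relation $e=-s\,|\alpha|^{2}$ gives $e_{k}=|\sigma_{k}|\,\alpha^{2}$.

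The remaining point, which I expect to be the main subtlety, is the interlacing $\ldots<\sigma_{2n+1}<\sigma_{2n}<\sigma_{2n-1}<\ldots<\sigma_{0}<0$ together with the fact that these two families together exhaust all eigenvalues. Interlacing is a classical consequence of Rolle's theorem applied to $\Ai$ on $(-\infty,0)$: between two consecutive zeros of $\Ai$ there is exactly one zero of $\Ai'$, and vice versa (the latter via the Airy differential identity $\Ai''(t)=t\,\Ai(t)$, which prevents two consecutive zeros of $\Ai'$ from being separated by a zero-free stretch of $\Ai$ in $t<0$). Completeness follows because the even and odd sectors together span $L^{2}(\RE)$ and in each sector the half-line Sturm-Liouville problem with a growing potential has a complete set of eigenfunctions; the parameters obtained above are the only ones producing an $L^{2}$ solution compatible with the parity-prescribed boundary condition at $x=0$. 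Ordering everything in increasing $e_{k}$ then matches the prescribed indexing of $\sigma_{k}$ and completes the proof.
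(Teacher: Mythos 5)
Your proposal is correct. It is worth noting that the paper itself does not prove this lemma: it simply refers the reader to Schwinger's textbook (\cite[Sect.~6.10]{Schwinger}) for the details, with a dictionary matching the notation. Your argument — splitting into even/odd sectors by parity, reducing the half-line problem to the Airy equation by the affine change $t=|\alpha|x-e/|\alpha|^{2}$, discarding $\Bi$ by the $L^{2}$ requirement at $+\infty$, and then imposing the Neumann (even) or Dirichlet (odd) condition at the origin to pick out the stationary points and zeros of $\Ai$ respectively — is the standard one and is presumably what the cited reference does as well, so you are supplying a self-contained version of what the paper delegates. Two points you glossed over but which are easy to fill in: (a) the eigenfunctions are in $H^{2}(\RE)\subset C^{1}(\RE)$, which is what justifies the matching conditions $\phi'(0)=0$ (even) and $\phi(0)=0$ (odd); and (b) the base case of the interlacing, namely $\sigma_{1}<\sigma_{0}<0$, follows from $\Ai>0$ and $\Ai''=t\,\Ai<0$ on $(\sigma_{1},0]$ together with $\Ai(\sigma_{1})=0<\Ai(0)$ and $\Ai'(0)<0$, forcing a sign change of $\Ai'$ strictly between $\sigma_{1}$ and $0$. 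With these remarks, the chain of exact alternation and the formula $e_{k}=|\sigma_{k}|\alpha^{2}$ follow as you indicate.
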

\begin{remark}\label{ev/odd} Note that $L^{2}(\RE)=L^{2}_{\bos}(\RE)\oplus L^{2}_{\fer}(\RE)$, this is equivalent to decomposing any function in $L^2(\RE)$ in the sum of its even and odd parts. Hence,  
one has \[\mathrm{K}^{1}=\mathrm{K}^{1\bos}\oplus\mathrm{K}^{1\fer}\,.
\] 
By Lemma \ref{r:airy}, $\{e^{\bos}_{n},\phi_{n}^{\bos}\}:=\{ e_{2n},\phi_{2n}\}$ and $\{e^{\fer}_{n},\phi_{n}^{\fer}\}:=\{e_{2n+1},\phi_{2n+1}\}$ are the eigensystems of $\mathrm{K}^{1\bos}$ and $\mathrm{K}^{1\fer}$ respectively.
\end{remark}
\begin{remark}
For $x \gg 1$ the Airy function $\Ai$ behaves as  (see, e.g., \cite[Sec. 10.4]{AS})
    \begin{equation}\label{asym_Airy}
    \Ai(x) =\frac{1}{2 \sqrt{\pi} x^{1 / 4}}\ e^{-\frac{2}{3} x^{3 / 2}} \left(1 + O(x^{-3 / 2})\right)\, .
\end{equation}
\end{remark}
\begin{remark} By \cite{Hethcote}, one has the bounds
\[
-\left(\frac{3\pi}{8}\,(4n+3)+\frac32\arctan\frac5{18\pi(4n+3)}\ \right)^{2/3}\le \sigma_{2n+1}\le-\left(\frac{3\pi}{8}\,(4n+3)\right)^{2/3}\,.
\] 
Then, by $\sigma_{2n+1}<\sigma_{2n}<\sigma_{2(n-1)+1}$, one obtains
\[
-\left(\frac{3\pi}{8}\,(4n+3)+\frac32\arctan\frac5{18\pi(4n+3)}\ \right)^{2/3}\le \sigma_{2n}\le
-\left(\frac{3\pi}{8}\,(4n-1)
\right)^{2/3}.
\] 
\end{remark}
As a first step toward the proof of Theorem \ref{theorem_linearHam}, we define an auxiliary self-adjoint operator $\Lrm_\Lambda^{1\bosfer}$  to compare with $\Lrm_\Lambda^\bosfer$; it is defined as
\begin{equation} \label{scaled_eig}
         \Lrm^{1\bosfer}_\Lambda :=\Lambda^{4/3}U_{\Lambda}\mathrm{K}^{1\bosfer}U^{-1}_{\Lambda} \, ,
    \end{equation}
where  the unitary operator $U_{\Lambda}$ in  $L^{2}_{\bosfer}(\RE)$ is defined by $ \left(U_{\Lambda}f \right)(x) := \Lambda^{1/3}\, f(\Lambda^{2/3} x)$. For any $\psi\in C^{\infty}_{0}(\RE)\cap L^{2}_{\bosfer}(\RE)$ one has
    \begin{equation*}
  \Lrm^{1\bosfer}_\Lambda \psi(x)= -\psi''(x) + \Lambda^2  |\alpha|^3|x| \psi(x)\,.
    \end{equation*}
Notice that, by \eqref{scaled_eig}, the spectrum of  $\Lambda^{-4/3}\Lrm^{1\bosfer}_\Lambda$ is  independent of $\Lambda$; by Lemma \ref{r:airy} and Remark \ref{ev/odd}, the eigensystem of $\Lrm^{1\bosfer}_\Lambda$ is given by   $\{\Lambda^{4/3} e^{\bosfer}_{n}, U_{\Lambda}\phi^{\bosfer}_{n}\}$.\par
We introduce a cut-off function on a suitable scale of $\Lambda$. Let $j \in C^\infty_0(\RE)\cap L^{2}(\RE)$ with $0\leq j\leq1$, even,  and $j(x)=1$ for $|x| \leq 1$ and $j(x)=0$ for $|x| \geq 2$. Define:
    \begin{equation} \label{cutoff}
        \mathrm{J_1}(x) := j (\Lambda^{1/2} x) \, ,
    \end{equation}
and let 
    \begin{equation*} 
        \mathrm{J_0}(x) := \sqrt{1 - \mathrm{J^{2}_1}(x)} \qquad  ( \mathrm{J^{2}_0} +  \mathrm{J^{2}_1} = 1)  .
    \end{equation*}
We point out  that both $J_0$ and $J_1$ depend on $\Lambda$, to simplify the notation we omit this dependence. We also  remark that in the definition of $J_1$, see Eq. \eqref{cutoff}, any exponent between $1/3$ and $2/3$ would work. 

We first point out that, by the Taylor expansion of $V$, taking into account the fact that $J_1$ is supported on $|x|\leq 2/\Lambda^{1/2}$, there follows that   $\mathrm{J_1}\left( \Lrm_\Lambda^\bosfer -  \Lrm^{1\bosfer}_\Lambda\right) \mathrm{J_1} = \mathrm{J_1}\left(\Lambda^2 \left(V - |\alpha|^3|x| \right)\right) \mathrm{J_1}$ is a bounded operator in $L^{2}_{\bosfer}(\RE)$ and  its norm satisfies the bound 
    \begin{equation}\label{norm_estimate}
        \lVert \mathrm{J_1}\left( \Lrm_\Lambda^\bosfer -  \Lrm^{1\bosfer}_\Lambda\right) \mathrm{J_1}\rVert = \lVert \mathrm{J_1}\left(\Lambda^2 \left(V - |\alpha|^3|x| \right)\right) \mathrm{J_1}\rVert \leq C \lVert \mathrm{J_1} \left(\Lambda^2 x^2 \right)\mathrm{J_1}\rVert =  O\left(\Lambda\right) \, .
    \end{equation}

\subsection{Lower bound} To establish a lower bound for $\ell_{\Lambda,n}^\bosfer$, we use  the IMS (Ismagilov \cite{ismagilov}, Morgan \cite{morgan}, Morgan-Simon \cite{morgan-simon}) localization technique (see, for instance, \cite[Lemma 3.1]{Sim} or \cite[Chapter 11]{Cycon_etal}):
\begin{lemma}\label{l:lowerbound} Suppose that $\Lrm_\Lambda^\bosfer$ has $n+1$ eigenvalues below the essential spectrum. Then 
    \begin{equation}\label{lower_bound}
        \ell_{\Lambda,n}^\bosfer \geq \Lambda^{4/3} e^{\bosfer}_{n} + O(\Lambda) \qquad \Lambda\gg 1\,.
     \end{equation}
\end{lemma}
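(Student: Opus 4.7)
The plan is to combine the max-min characterization of eigenvalues below $\sigma_{ess}(\Lrm_\Lambda^\bosfer)$ with the IMS localization formula associated to the partition of unity $\mathrm{J_0}^2+\mathrm{J_1}^2=1$ from \eqref{cutoff}, and to compare the inner piece with the model operator $\Lrm_\Lambda^{1\bosfer}$. Since $\Lrm_\Lambda^{1\bosfer}$ has confining potential, it has purely discrete spectrum with eigenvalues $\Lambda^{4/3}e_k^{\bosfer}$ and complete orthonormal eigenbasis $\{U_{\Lambda}\phi_k^{\bosfer}\}_{k\ge 0}$ of $L^2_{\bosfer}(\RE)$. Let $W_n$ denote the orthogonal complement of $\mathrm{span}\{U_{\Lambda}\phi_0^{\bosfer},\dots,U_{\Lambda}\phi_{n-1}^{\bosfer}\}$; the max-min principle will give
\[
\ell^{\bosfer}_{\Lambda,n}\ge \inf\bigl\{\langle\psi,\Lrm_\Lambda^\bosfer\psi\rangle:\psi\in W_n\cap H^1_{\bosfer}(\RE),\ \|\psi\|=1\bigr\},
\]
so the task reduces to bounding this infimum from below by $\Lambda^{4/3}e_n^{\bosfer}+O(\Lambda)$.

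Given such a $\psi$, the IMS identity
\[
\langle\psi,\Lrm_\Lambda^\bosfer\psi\rangle=\sum_{j=0,1}\langle\mathrm{J_j}\psi,\Lrm_\Lambda^\bosfer\mathrm{J_j}\psi\rangle-\sum_{j=0,1}\|\mathrm{J_j}'\psi\|^2
\]
produces an $O(\Lambda)$ error from the last sum, since $\|\mathrm{J_j}'\|_\infty=O(\Lambda^{1/2})$. On $\supp(\mathrm{J_0})$, where $|x|\ge\Lambda^{-1/2}$, the Taylor expansion \eqref{seriesV} gives $\Lambda^2 V(x)\ge |\alpha|^3\Lambda^{3/2}(1-o(1))$, so that for $\Lambda$ sufficiently large $\langle\mathrm{J_0}\psi,\Lrm_\Lambda^\bosfer\mathrm{J_0}\psi\rangle\ge \Lambda^{4/3}e_n^{\bosfer}\|\mathrm{J_0}\psi\|^2$. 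For the $\mathrm{J_1}$-piece, the operator-norm bound \eqref{norm_estimate} replaces $\Lrm_\Lambda^\bosfer$ with $\Lrm_\Lambda^{1\bosfer}$ at cost $O(\Lambda)\|\psi\|^2$, and the spectral expansion of $\Lrm_\Lambda^{1\bosfer}$ then yields
\[
\langle\mathrm{J_1}\psi,\Lrm_\Lambda^{1\bosfer}\mathrm{J_1}\psi\rangle\ge\Lambda^{4/3}e_n^{\bosfer}\Bigl(\|\mathrm{J_1}\psi\|^2-\sum_{k<n}|\langle U_{\Lambda}\phi_k^{\bosfer},\mathrm{J_1}\psi\rangle|^2\Bigr).
\]

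The crux is that the orthogonality $\psi\perp U_{\Lambda}\phi_k^{\bosfer}$ will rewrite $\langle U_{\Lambda}\phi_k^{\bosfer},\mathrm{J_1}\psi\rangle=-\langle(1-\mathrm{J_1})U_{\Lambda}\phi_k^{\bosfer},\psi\rangle$, and the change of variable $u=\Lambda^{2/3}x$ together with the Airy asymptotics \eqref{asym_Airy} will give
\[
\|(1-\mathrm{J_1})U_{\Lambda}\phi_k^{\bosfer}\|^2\le\int_{|u|\ge\Lambda^{1/6}}|\phi_k^{\bosfer}(u)|^2\,du=O(\Lambda^{-N})\quad\text{for every }N>0,
\]
so the projection corrections are super-polynomially small in $\Lambda$ and are absorbed into the $O(\Lambda)$ remainder. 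Collecting all bounds and using $\|\mathrm{J_0}\psi\|^2+\|\mathrm{J_1}\psi\|^2=1$ will then yield $\langle\psi,\Lrm_\Lambda^\bosfer\psi\rangle\ge\Lambda^{4/3}e_n^{\bosfer}-O(\Lambda)$, proving the lemma. The main technical obstacle is precisely the verification that the overlap with the first $n$ model eigenfunctions is super-polynomially small under the $\Lambda^{2/3}$-rescaling, since it is this estimate which allows the projection correction to be swallowed by the IMS remainder and not pollute the leading $\Lambda^{4/3}$ term.
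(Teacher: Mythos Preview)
Your proof is correct and follows the same IMS localization skeleton as the paper, with the same cutoffs $\mathrm{J_0},\mathrm{J_1}$, the same $O(\Lambda)$ gradient error, the same comparison \eqref{norm_estimate} on the inner piece, and the same treatment of the $\mathrm{J_0}$-region via monotonicity of $V$. The two arguments diverge only in how the low-lying modes of the model operator $\Lrm_\Lambda^{1\bosfer}$ are handled.

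The paper packages the projection onto the first $n$ model eigenfunctions into a finite-rank operator $\mathrm{F}_1^\bosfer=\mathrm{J_1}(\Lrm_\Lambda^{1\bosfer}-e_n^\bosfer\Lambda^{4/3})\mathrm{P}_\Lambda^{1\bosfer}\mathrm{J_1}$, derives the operator inequality $\Lrm_\Lambda^\bosfer\ge e_n^\bosfer\Lambda^{4/3}+\mathrm{F}_1^\bosfer+O(\Lambda)$, and then uses a pure dimension-count: among the first $n+1$ eigenvectors of $\Lrm_\Lambda^\bosfer$ one can choose a normalized $\psi\in\ker\mathrm{F}_1^\bosfer$. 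This is Simon's original device; it requires no quantitative information whatsoever about the model eigenfunctions. You instead invoke the max--min principle with the test subspace $W_n=(\mathrm{span}\{U_\Lambda\phi_0^\bosfer,\dots,U_\Lambda\phi_{n-1}^\bosfer\})^\perp$ and kill the overlap terms $\langle U_\Lambda\phi_k^\bosfer,\mathrm{J_1}\psi\rangle=-\langle(1-\mathrm{J_1})U_\Lambda\phi_k^\bosfer,\psi\rangle$ by the Airy decay \eqref{asym_Airy}, which gives a super-polynomial bound once rescaled to $|u|\ge\Lambda^{1/6}$. Your route is a bit more explicit and reuses exactly the decay estimate that the paper needs anyway for the upper bound (Lemma~\ref{l:upperbound}, Eq.~\eqref{5}); the paper's route is slicker for the lower bound alone since it sidesteps that estimate entirely.
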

    \begin{proof}
     Denote     $\mathrm{P}^{1\bosfer}_{\Lambda}$  the orthogonal projection of the $n$-dimensional subspace spanned by the first $n$ eigenvectors of $\Lrm^{1\bosfer}_\Lambda$.
Our goal is to prove
    \begin{equation} \label{lb}
        \Lrm_\Lambda^\bosfer \geq e^{\bosfer}_{n} \Lambda^{4/3}  + {\mathrm{F}^\bosfer_1} + O(\Lambda) \, ,
    \end{equation}
where ${\mathrm{F}^\bosfer_1}$ is the  symmetric operator
\begin{equation*}
   {\mathrm{F}^\bosfer_1} := \mathrm{J_1}(\Lrm^{1\bosfer}_{\Lambda}- e_n\Lambda^{4/3})\mathrm{P}^{1\bosfer}_{\Lambda}\mathrm{J_1}  .
\end{equation*}
We remark that $  {\mathrm{F}^\bosfer_1}$ has finite rank at most equal to $n$. As usual, inequalities of the form of  \eqref{lb} have to be understood in the sense $(\psi,\Lrm_\Lambda^\bosfer\psi) \geq e^\bosfer_{n} \Lambda^{4/3}\|\psi\|^2  + (\psi,{\mathrm{F_{1}^{\bosfer}}}\psi)  + O(\Lambda)$ for all $\psi\in D(\Lrm_\Lambda^\bosfer)$.

Inequality \eqref{lb} implies \eqref{lower_bound}. This is easily seen by noticing that one can choose a normalized vector $\psi$ in the span of the first $n+1$ eigenvectors of $\Lrm_\Lambda^\bosfer$ such that $\psi \in \ker {\mathrm{F}^\bosfer_1}$. For such $\psi$ we obtain 
\[
\ell^{\bosfer}_{\Lambda, n} \geq \langle\psi, \Lrm_\Lambda^\bosfer \psi\rangle \geq e^\bosfer_{n}  \Lambda^{4/3} + O(\Lambda)
\]
which implies \eqref{lower_bound}.

Let us first recall the IMS localization. Since  $\mathrm{J_0}$ and $\mathrm{J_1}$ are smooth and  $\mathrm{J_0}^2+\mathrm{J_1}^2=1$, one has
\begin{equation*}
    \Lrm_{\Lambda}^{\bosfer} = \sum_{i=0}^{1} \mathrm{J_i} \Lrm_\Lambda^\bosfer  \mathrm{J_i} - \sum_{i=0}^{1}\left( \mathrm{J_i}'\right)^2 \, .
\end{equation*}
Therefore, we can re-write  $\Lrm_\Lambda^\bosfer$ as
    \begin{equation}\label{rewritten}
        \Lrm_\Lambda^\bosfer = \mathrm{J_0} \Lrm_\Lambda^\bosfer  \mathrm{J_0} + \mathrm{J_1}  \Lrm_\Lambda^{1\bosfer} \mathrm{J_1}+ \mathrm{J_1}\left( \Lrm_\Lambda^\bosfer  -\Lrm_\Lambda^{1\bosfer}\right) \mathrm{J_1}-\sum_{i=0}^1 \left( \mathrm{J_i}'\right)^2.
    \end{equation}
From the definition of $J_0$ and $J_1$, there follows the  bound on the operator norm:
 \begin{equation}\label{grad_cutoff}
        \left\|\sum_{i=0}^{1}\left( \mathrm{J_i}'\right)^2\right\| = O(\Lambda) \, .
  \end{equation}
By the definition of the orthogonal projection $P_\Lambda^{1\bosfer}$, there follows 
\begin{equation*}
  \Lrm^{1\bosfer}_\Lambda  =  \Lrm^{1\bosfer}_\Lambda \mathrm{P}^{1\bosfer}_{\Lambda} + \Lrm^{1\bosfer}_\Lambda\left(\mathrm{I}-\mathrm{P}^{1\bosfer}_{\Lambda}\right)
      =  {\mathrm{F}}^\bosfer  + e^\bosfer_{n} \Lambda^{4/3}\mathrm{P}^{1\bosfer}_{\Lambda}   + \Lrm^{1\bosfer}_\Lambda\left(\mathrm{I}-\mathrm{P}^{1\bosfer}_{\Lambda}\right) \geq 
       {\mathrm{F}}^\bosfer  + e^\bosfer_{n} \Lambda^{4/3},
\end{equation*}
    with 
    \begin{equation*}
   {\mathrm{F}^\bosfer} := (\Lrm^{1\bosfer}_{\Lambda}- e^\bosfer_{n}\Lambda^{4/3})\mathrm{P}^{1\bosfer}_{\Lambda} .
\end{equation*}
Hence, 
    \begin{equation} \label{3}
    \mathrm{J_1}  \Lrm^{1\bosfer}_\Lambda \mathrm{J_1}  \geq {\mathrm{F}^\bosfer_1} + e^\bosfer_{n} \Lambda^{4/3} \mathrm{J_1}^2.
    \end{equation}
The latter inequality can be understood as an inequality for $\langle\psi,    \mathrm{J_1}  \Lrm^{1\bosfer}_\Lambda \mathrm{J_1}\psi\rangle$ for all $\psi \in D(\Lrm_\Lambda^\bosfer)$, since $\psi \in D(\Lrm_\Lambda^\bosfer)$ implies $ \mathrm{J_1}\psi \in D(\Lrm_\Lambda^{1\bosfer})$.

Also, we need a control on the support of $\mathrm{J_0}$. Since $V$ is even and increasing for $x>0$, by the expansion \eqref{seriesV}, one gets
\[
V(x) \geq V(1/\Lambda^{1/2}) \geq  \frac{c}{\Lambda^{1/2}} \qquad\text{on $\supp(\mathrm{J_0})$},
\]
for some positive constant $c$ and  $\Lambda$ large enough. Hence, since $-\frac{d^{2} }{dx^{2}\,} $ is a positive definite operator, one has
\begin{equation}\label{4}
\mathrm{J_0} \Lrm^{\bosfer}_\Lambda  \mathrm{J_0} \geq c \Lambda^{3/2} \left(\mathrm{J_0}\right)^2 \geq e^\bosfer_{n} \Lambda^{4/3} \left(\mathrm{J_0}\right)^2 ,
\end{equation}
for $\Lambda$ large enough. Taking into account \eqref{rewritten}, together with  \eqref{norm_estimate},\eqref{grad_cutoff},\eqref{3} and \eqref{4}, the inequality \eqref{lb} follows   
\end{proof}

\subsection{Upper bound}
To obtain  the upper bound, we use the Rayleigh-Ritz variational principle. In this approach, we take the scaled eigenfunctions of  $\mathrm{K}^{1\bosfer}$ as trial wave functions for  $\Lrm^{1\bosfer}_\Lambda$.
\begin{lemma} \label{l:upperbound} For any  fixed  integer $n\ge 0$ and  sufficiently large $\Lambda$, $\Lrm_\Lambda^\bosfer$ has at least $n+1$  simple isolated eigenvalues and 
\begin{equation}\label{upper_bound}
  \ell_{\Lambda,n}^\bosfer \leq \Lambda^{4/3}e^\bosfer_{n} + O(\Lambda) .
\end{equation}
\end{lemma}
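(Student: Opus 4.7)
The plan is to apply the Rayleigh--Ritz / min--max principle with trial functions built from the eigenfunctions of $\mathrm{K}^{1\bosfer}$, suitably cut off. For $k=0,1,\dots,n$ I would set
\[
\widetilde{\psi}_k := \mathrm{J_1}\, U_\Lambda \phi^\bosfer_k \,,
\]
which lies in $H^2(\RE)\cap L^2_\bosfer(\RE) = D(\Lrm_\Lambda^\bosfer)$ and is supported in $|x|\le 2\Lambda^{-1/2}$. The key length-scale observation is that $U_\Lambda \phi^\bosfer_k$ is concentrated at scale $\Lambda^{-2/3}$ while $\mathrm{J_1}$ lives on the strictly larger scale $\Lambda^{-1/2}$; hence on the transition region $\Lambda^{-1/2}\le |x|\le 2\Lambda^{-1/2}$ the function $U_\Lambda \phi^\bosfer_k(x) = \Lambda^{1/3}\phi^\bosfer_k(\Lambda^{2/3}x)$ is evaluated at Airy arguments of size $\gtrsim |\alpha|\Lambda^{1/6}$, so by \eqref{asym_Airy} both $U_\Lambda \phi^\bosfer_k$ and its derivative are of order $e^{-c\Lambda^{1/4}}$ times a polynomial in $\Lambda$ there.

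From this I would deduce the approximate orthonormality $\langle \widetilde{\psi}_k,\widetilde{\psi}_j\rangle = \delta_{kj} + O(e^{-c\Lambda^{1/4}})$ from the orthonormality of $\{\phi^\bosfer_k\}$ and the Airy tail bound. Then, decomposing $\Lrm^\bosfer_\Lambda = \Lrm^{1\bosfer}_\Lambda + \Lambda^2(V-|\alpha|^3|x|)$ and expanding
\[
\Lrm^{1\bosfer}_\Lambda \widetilde{\psi}_j = \Lambda^{4/3} e^\bosfer_j\, \widetilde{\psi}_j - \mathrm{J_1}''\,U_\Lambda\phi^\bosfer_j - 2\mathrm{J_1}'\,(U_\Lambda\phi^\bosfer_j)'\,,
\]
the last two terms are supported in the transition region, so even taking into account that $\mathrm{J_1}'$ and $\mathrm{J_1}''$ are $O(\Lambda^{1/2})$ and $O(\Lambda)$ respectively, they contribute $O(e^{-c'\Lambda^{1/4}})$. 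For the potential-difference term, the Taylor expansion \eqref{seriesV} gives, on $\supp \mathrm{J_1}$,
\[
\Lambda^2 \bigl| V(x) - |\alpha|^3|x|\bigr| \le C\Lambda^2 x^2 \le 4C\,\Lambda\,,
\]
whence $\bigl|\Lambda^2 \langle \widetilde{\psi}_k, (V-|\alpha|^3|x|)\widetilde{\psi}_j\rangle\bigr| = O(\Lambda)$. Combining these estimates I would obtain
\[
\langle \widetilde{\psi}_k,\Lrm^\bosfer_\Lambda \widetilde{\psi}_j\rangle = \Lambda^{4/3} e^\bosfer_j\,\delta_{kj} + O(\Lambda)\,.
\]

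To conclude, I would apply the min--max principle to the $(n+1)$-dimensional subspace $\mathcal{W}_n := \mathrm{span}(\widetilde{\psi}_0,\dots,\widetilde{\psi}_n)$: for any $\psi = \sum_k c_k \widetilde{\psi}_k \in \mathcal{W}_n$, the estimates above yield $\langle\psi,\Lrm^\bosfer_\Lambda\psi\rangle/\|\psi\|^2 \le \Lambda^{4/3} e^\bosfer_n + O(\Lambda)$. Since for large $\Lambda$ this bound is far below $\tfrac{3}{4}\alpha^2\Lambda^2 = \inf\sigma_{ess}(\Lrm^\bosfer_\Lambda)$, the principle delivers at least $n+1$ eigenvalues below the essential spectrum, which are simple by the discussion preceding Theorem \ref{theorem_linearHam}, and produces \eqref{upper_bound}.

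The main technical obstacle is the balance of scales forced by the cut-off. The scale $\Lambda^{-1/2}$ hardcoded into $\mathrm{J_1}$ must simultaneously (i) keep $\Lambda^2(V-|\alpha|^3|x|)$ at size $O(\Lambda)$ on $\supp \mathrm{J_1}$, which requires $|x|\lesssim \Lambda^{-1/2}$ from above, and (ii) make the commutator terms involving $\mathrm{J_1}',\mathrm{J_1}''$ negligible, which requires the Airy tails at the cut-off scale to beat all polynomial losses. As the authors note after \eqref{cutoff}, any exponent in $(1/3,2/3)$ works; one just has to track the two competing estimates carefully enough to see that the Airy asymptotic \eqref{asym_Airy} yields exponential decay fast enough to dominate the polynomial blow-up of the cut-off derivatives.
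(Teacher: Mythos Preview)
Your proposal is correct and follows essentially the same Rayleigh--Ritz strategy as the paper, using the same trial functions $\mathrm{J_1}\,U_\Lambda\phi^\bosfer_k$. The only cosmetic differences are that you bound the commutator terms $\mathrm{J_1}''\,U_\Lambda\phi^\bosfer_j$ and $\mathrm{J_1}'\,(U_\Lambda\phi^\bosfer_j)'$ directly by the Airy tail (getting an exponentially small contribution), whereas the paper uses the localization identity $\mathrm{J_1}\Lrm^{1\bosfer}_\Lambda\mathrm{J_1}=\tfrac12(\mathrm{J_1^2}\Lrm^{1\bosfer}_\Lambda+\Lrm^{1\bosfer}_\Lambda\mathrm{J_1^2})+(\mathrm{J_1'})^2$ and settles for the cruder $O(\Lambda)$ bound on $(\mathrm{J_1'})^2$; and you apply min--max directly on the almost-orthonormal family, while the paper first passes through Gram--Schmidt. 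Neither difference is substantive, since the dominant $O(\Lambda)$ error comes from the potential remainder $\Lambda^2(V-|\alpha|^3|x|)$ in both arguments.
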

\begin{proof}
Recall that the functions $\phi^{\bosfer}_{n}$'s in \eqref{phi-even} and \eqref{phi-odd} are the orthonormal  eigenfunctions of  $\mathrm{K}^{1\bosfer}$ corresponding to the  eigenvalues $e^\bosfer_{n}$. Therefore, the $U_{\Lambda}\phi^{\bosfer}_{n}$'s are the orthonormal  eigenfunctions of  $\Lrm^{1\bosfer}_\Lambda$, that is
\begin{equation*}
\Lrm^{1\bosfer}_\Lambda \left(U_{\Lambda}\phi^{\bosfer}_{n}\right) = \Lambda^{4/3} e^\bosfer_{n} \left(U_{\Lambda}\phi^{\bosfer}_{n}\right) \, .
\end{equation*}
Defining
\begin{equation*}
        \psi^{\bosfer}_n := \mathrm{J_1} U_{\Lambda}\phi^{\bosfer}_{n} 
    \end{equation*}
and taking into account the asymptotic behavior of the Airy functions, see Eq.  \eqref{asym_Airy}, one has
    \begin{equation}\label{5}
        \langle \psi^{\bosfer}_n ,\psi^{\bosfer}_m \rangle = \delta_{mn} + O(e^{-c\Lambda^{1/4}}) ,
    \end{equation}
    for some positive constant $c$. 
Furthermore, by 
\[
\mathrm{J_1}\Lrm^{1\bosfer}_\Lambda\mathrm{J_1} = \frac12 \left( \mathrm{J_1^2}\Lrm^{1\bosfer}_\Lambda + \Lrm^{1\bosfer}_\Lambda \mathrm{J_1^2}\right) + (\mathrm{J_1'})^2\,,
\] 
by \eqref{5} and \eqref{grad_cutoff}, one obtains
\begin{equation}\begin{aligned}\label{6}
        \langle \psi^{\bosfer}_n ,\Lrm^{1\bosfer}_\Lambda \psi^{\bosfer}_m \rangle =& \Lambda^{4/3} \left(\frac{e^\bosfer_{n} + e^{\bosfer}_{m}}{2}\right) \langle\psi^{\bosfer}_n ,\psi^{\bosfer}_m \rangle + \big( U_{\Lambda}\phi^{\bosfer}_{n} ,\left(\mathrm{J_1'}\right)^2U_{\Lambda}\phi^{\bosfer}_{m} \big) \\ =& \Lambda^{4/3} e^\bosfer_{n}\delta_{nm} + O(\Lambda)\, .
        \end{aligned}
    \end{equation}
Therefore, by \eqref{norm_estimate},   
    \begin{equation*}
    \langle \psi^{\bosfer}_n ,\Lrm_\Lambda^\bosfer\psi^{\bosfer}_m \rangle  = \langle \psi^{\bosfer}_n ,\Lrm^{1\bosfer}_\Lambda  \psi^{\bosfer}_m \rangle + O\left(\Lambda\right) 
         =  \Lambda^{4/3} e^\bosfer_{n} \delta_{mn} + O(\Lambda)\, .
    \end{equation*}
Let us fix $n \ge 0$. We want to apply the Rayleigh-Ritz variational method, however, since the vectors $\{\psi^{\bosfer}_i\}_{i=0}^n$ are not orthonormal (but almost orthonormal, see Eq. \eqref{5}), we proceed as follows. Noticing that  the $\psi_i^{\bosfer}$'s are linearly independent, by the Gram-Schmidt algorithm  we  construct a set of orthonormal eigenfunctions $\{\tilde \psi_i^{\bosfer}\}_{i=0}^n$ which, by construction, still satisfy Eq. \eqref{6}, that is   
\begin{equation}\label{tildepsi}
        \langle\tilde \psi^{\bosfer}_n , \Lrm^{1\bosfer}_\Lambda  \tilde\psi^{\bosfer}_m \rangle =\Lambda^{4/3} e^\bosfer_{n}\delta_{nm} + O(\Lambda)\, .
    \end{equation}
 Then, a direct application of the Rayleigh-Ritz variational method (see e.g. \cite{ReSi}, Theorem XIII.3), taking as trial space the span of $\{\tilde \psi^{\bosfer}_i\}_{i=1}^n$, gives the upper bound $\ell_{\Lambda,n}^\bosfer\leq \Lambda^{4/3} e^\bosfer_{n} + O(\Lambda)$. The fact that  $\sigma_{ess}(  \Lrm_\Lambda^\bosfer ) \subseteq [\frac34\alpha^2\Lambda^2,+\infty)$ and Eq. \eqref{tildepsi} guarantee that there are at least $n+1$ eigenvalues.
\end{proof}
Our main result on the eigenvalues of $\Lscr_\ve^{\eff\bosfer}$ is given in the following 
\begin{theorem}\label{th:eigenvalues}
For any  fixed  integer $n\ge 0$ and $\ve>0$ sufficiently small, the $\Lscr^{\eff\bosfer}_{\ve}$ has at least $n+1$  simple isolated eigenvalues. The $(n+1)$-th eigenvalue is given by 
        \begin{equation}\label{Eeff-ae}
            \eigenvalueL^{\eff\bosfer}_{\ve, n} = s^{\bosfer}_{n}\alpha^2 \ve^{2/3} + O (\ve) \, ,
        \end{equation}
    where $s^{\bos}_{n}:=|\sigma_{2n}|$ and $s_{n}^{\fer}:=|\sigma_{2n+1}|$ and the negative numbers  $\sigma_{k}$ are defined in Lemma \ref{r:airy}.
\end{theorem}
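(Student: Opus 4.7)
The plan is to deduce Theorem~\ref{th:eigenvalues} directly from Theorem~\ref{theorem_linearHam}, which has already been proved for the auxiliary operator $\Lrm_\Lambda^\bosfer$. The key observation is the explicit relationship set up in \eqref{Ham}: specifying $\Lambda = 1/\ve$, one has
\[
\Lscr^{\eff\bosfer}_{\ve} \;=\; \ve^{2}\,\Lrm^{\bosfer}_{1/\ve} \;+\; \ve^{2} R.
\]
Thus I would first analyze the ``unperturbed'' operator $\Lscr^{\eff\bosfer}_{\ve} - \ve^{2}R = \ve^{2}\,\Lrm^{\bosfer}_{1/\ve}$, whose spectrum is exactly $\ve^{2}$ times the spectrum of $\Lrm^{\bosfer}_{1/\ve}$. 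By Theorem~\ref{theorem_linearHam}, for $\ve$ sufficiently small this operator has at least $n+1$ simple isolated eigenvalues, and its $(n+1)$-th eigenvalue equals
\[
\ve^{2}\,\ell^{\bosfer}_{1/\ve,\,n} \;=\; \ve^{2}\!\left(e^{\bosfer}_{n}\,\ve^{-4/3} + O(\ve^{-1})\right) \;=\; s^{\bosfer}_{n}\alpha^{2}\,\ve^{2/3} + O(\ve),
\]
since $e^{\bosfer}_{n} = s^{\bosfer}_{n}\alpha^{2}$ by Lemma~\ref{r:airy}. Moreover, the essential spectrum of $\ve^{2}\Lrm^{\bosfer}_{1/\ve}$ sits in $\bigl[\tfrac{3}{4}\alpha^{2},+\infty\bigr)$, so these eigenvalues are well separated from it for small $\ve$.

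Next, I would reintroduce the perturbation $\ve^{2}R$. By Lemma~\ref{l:PH}, the function $R(x) = \int_{\RE}|\partial_{x}\psi^{BO}(x,y)|^{2}\,dy$ satisfies $\sup_{x\in\RE}R(x) \le (\delta/2)^{2} \le 4\alpha^{2}$, hence $\ve^{2}R$ is a bounded self-adjoint multiplication operator with
\[
\|\ve^{2}R\|_{\Bou(L^{2}(\RE))} \;\le\; 4\alpha^{2}\,\ve^{2}.
\]
Applying the min-max principle to the two self-adjoint operators $\Lscr^{\eff\bosfer}_{\ve}$ and $\Lscr^{\eff\bosfer}_{\ve} - \ve^{2}R$, the $k$-th min-max value of each differs by at most $\|\ve^{2}R\| = O(\ve^{2})$. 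Since $O(\ve^{2})$ is absorbed into the $O(\ve)$ remainder, the expansion \eqref{Eeff-ae} follows for the min-max values of $\Lscr^{\eff\bosfer}_{\ve}$. For $\ve$ sufficiently small, these min-max values lie strictly below the threshold of $\sigma_{ess}(\Lscr^{\eff\bosfer}_{\ve})$ (which, since $V(x) + \ve^{2}R(x)$ is bounded and $V \to \tfrac34\alpha^{2}$ at infinity, is contained in a half-line with left endpoint close to $\tfrac34\alpha^{2}$), so by the standard min-max characterization they coincide with the first $n+1$ isolated eigenvalues of $\Lscr^{\eff\bosfer}_{\ve}$.

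It remains to check simplicity: the operator $\Lscr^{\eff\bosfer}_{\ve}$ is a one-dimensional Schr\"odinger operator on $\RE$ with bounded potential, restricted to functions of definite parity in $L^{2}_{\bosfer}(\RE)$. As in the discussion preceding Theorem~\ref{theorem_linearHam}, the limit-point character at $\pm\infty$ gives that the eigenvalues of the full-line operator are simple, hence so are those on each parity sector. Alternatively, simplicity transfers through the $O(\ve^{2})$ perturbation from the simple eigenvalues of $\ve^{2}\Lrm^{\bosfer}_{1/\ve}$, since the gaps between consecutive eigenvalues of the latter are of order $\ve^{2/3}\gg\ve^{2}$ for small $\ve$. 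The only mild obstacle is keeping track of these scales to ensure that the asymptotic ordering $\eigenvalueL^{\eff\bosfer}_{\ve,0}<\dots<\eigenvalueL^{\eff\bosfer}_{\ve,n}$ is preserved after perturbation and that all lie below $\sigma_{ess}(\Lscr^{\eff\bosfer}_{\ve})$; both follow at once from the smallness of $\ve$ relative to the fixed gaps $|\sigma_{k+1}|-|\sigma_{k}|$ of the limiting Airy spectrum.
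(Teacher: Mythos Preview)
Your proof is correct and follows essentially the same strategy as the paper's: both exploit the identity $\Lscr^{\eff\bosfer}_{\ve} = \ve^{2}\bigl(\Lrm^{\bosfer}_{1/\ve} + R\bigr)$ and the boundedness of $R$ to transfer the eigenvalue asymptotics of Theorem~\ref{theorem_linearHam} to $\Lscr^{\eff\bosfer}_{\ve}$. The only stylistic difference is that the paper re-runs the lower and upper bound Lemmata~\ref{l:lowerbound} and~\ref{l:upperbound} directly for $\Lrm^{\bosfer}_{\Lambda}+R$ (using $R\ge 0$ for the lower bound and $\|R\|_{\infty}<\infty$ for the Rayleigh--Ritz upper bound), whereas you treat Theorem~\ref{theorem_linearHam} as a black box and then invoke the min-max principle under the bounded perturbation $\ve^{2}R=O(\ve^{2})$; the two are equivalent.
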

\begin{proof}
Denote by $\tilde \ell_{\Lambda,n}^\bosfer$ the eigenvalues of the self-adjoint operator (compare the following definition with Eq. \eqref{Ham}) 
\[
\widetilde \Lrm_\Lambda^\bosfer:=\Lambda^2 \Lscr^{\eff\bosfer}_{1/\Lambda } = \Lrm_\Lambda^\bosfer +R.
\] 
The eigenvalues  $\tilde \ell_{\Lambda,n}^\bosfer$ satisfy the lower bound \eqref{lower_bound}, because $R$ is positive. Moreover, the $\tilde \ell_{\Lambda,n}^\bosfer$'s satisfy the upper bound \eqref{upper_bound}, because in the Rayleigh-Ritz variational approach (see Eq. \eqref{tildepsi})  $R$ gives a contrubution of order $1$. Hence, for any fixed integer $n\ge 0$,  $\widetilde \Lrm_\Lambda^\bosfer$ has at least $n+1$ simple isolated eigenvalues, and $\tilde \ell_{\Lambda,n}^\bosfer  =  \Lambda^{4/3}e^\bosfer_{n} + O(\Lambda)$. Noticing that $ \Lscr^{\eff\bosfer}_{\ve} = \ve^2  \widetilde \Lrm^{\bosfer}_{1/\ve}$ we obtain the expansion \eqref{Eeff-ae}.
\end{proof}

\section{Proof of Theorem \ref{th:main}\label{s:6}}
The result in Eq. \eqref{spectrum1} about the essential spectrum is part of Theorem \ref{th:spectrumHve}. The lower bound on the spectrum in the same equation follows immediately from  Theorem \ref{th:lowerboundbH} and Eq. \eqref{directintegralB}. 

Fix $n\ge 0$. By Theorem \ref{th:eigenvalues}, one can take  $\ve$ so small that $\Lscr_\ve^{\eff\bosfer}$ has $n+1$ eigenvalues, $\eigenvalueL^{\eff\bosfer}_{\ve,0}, \eigenvalueL^{\eff\bosfer}_{\ve,1},\dots, \eigenvalueL^{\eff\bosfer}_{\ve,n}$.  These are also eigenvalues for $\widehat \Lscr^{\bosfer}_{\PP,\ve}$, as a matter of fact they are the lowest eigenvalues (ordered in increasing order) of  $\widehat \Lscr^{\bosfer}_{\PP,\ve}$, 
and  of $\widehat \Lscr^{\bosfer}_{\ve}$, 
 see Lemma  \ref{l:spectrum-widehatL}. Choose $c_n>0$ so that $\eigenvalueL^{\eff\bosfer}_{\ve,n-1}<\mu^{\bosfer}_{\ve,n}< \eigenvalueL^{\eff\bosfer}_{\ve,n}$ and $\widehat d_\ve(\mu^{\bosfer}_{\ve,n}) > c_n \ve^{2/3}$, where $\mu^{\bosfer}_{\ve,n} := \eigenvalueL^{\eff\bosfer}_{\ve,n} - c_n \ve^{2/3} $.  Since $( \mu^{\bosfer}_{\ve,n}-\eigenvalueL^{\eff\bosfer}_{\ve,n} )^{-1}$ is the lowest eigenvalue of  $(\mu^{\bosfer}_{\ve,n} - \widehat \Lscr_\ve^\bosfer)^{-1}$, 
\[
(\mu^{\bosfer}_{\ve,n} - \eigenvalueL^{\eff\bosfer}_{\ve,n})^{-1}=\min_{\psi\in L^{2}_{\bosfer}(\RE^{2}) ,\|\psi\|=1} \big\langle\psi, (\mu^{\bosfer}_{\ve,n}-\widehat \Lscr_\ve^\bosfer)^{-1}\psi\big\rangle \,.
\] 

By Lemma \ref{l:resolvent-difference} (i), possibly for smaller $\ve$, there holds $\mu^{\bosfer}_{\ve,n} \in \rho(\Lscr_\ve^\bosfer)$ and $d_\ve(\mu^{\bosfer}_{\ve,n})> c'_n \ve^{2/3}$. 
Let us define $\eigenvalueL^{\bosfer}_{\ve,n}$ through the relation
\[
(\mu^{\bosfer}_{\ve,n} - \eigenvalueL^{\bosfer}_{\ve,n} )^{-1}=\inf_{\psi\in  L^{2}_{\bosfer}(\RE^{2}),\|\psi\|=1}  \big\langle\psi, (\mu^{\bosfer}_{\ve,n}- \Lscr_\ve^\bosfer)^{-1}\psi\big\rangle\,.
\] 
By the spectral mapping theorem, either $\eigenvalueL^{\bosfer}_{\ve,n}$ is an eigenvalue
of $\Lscr_\ve^\bosfer$ or $\eigenvalueL^{\bosfer}_{\ve,n} =\inf\sigma_{ess}(\Lscr_\ve^\bosfer)$, where, by Theorem \ref{th:spectrumHve}, $\inf\sigma_{ess}(\Lscr_\ve^\bosfer)=\frac{3+\ve^{2}}{4+\ve^{2}}\,\alpha^{2}$.

By the bound on the resolvent difference in Lemma \ref{l:resolvent-difference} (ii), there exists $c''_{n}>0$ such that 
\[
\left|(\eigenvalueL^{\bosfer}_{\ve,n}-\mu^{\bosfer}_{\ve,n})^{-1}- (\eigenvalueL^{\eff\bosfer}_{\ve,n} - \mu^{\bosfer}_{\ve,n})^{-1}\right|\le c''_{n}\,.
\]
Therefore, 
\begin{align*}
|\eigenvalueL^{\bosfer}_{\ve,n}-\eigenvalueL^{\eff\bosfer}_{\ve,n}|\le& c''_{n}\,|\eigenvalueL^{\bosfer}_{\ve,n}-\mu^{\bosfer}_{\ve,n}|\,|\eigenvalueL^{\eff\bosfer}_{\ve,n} - \mu^{\bosfer}_{\ve,n}|\le c''_{n}c_{n}\ve^{2/3}|\eigenvalueL^{\bosfer}_{\ve,n}-\eigenvalueL^{\eff\bosfer}_{\ve,n}+c_{n}\ve^{2/3}|\\
\le &c_{n}''c_{n}\ve^{2/3}|\eigenvalueL^{\bosfer}_{\ve,n}-\eigenvalueL^{\eff\bosfer}_{\ve,n}|+c_{n}''c^{2}_{n}\ve^{4/3}\,.
\end{align*}
This inequality and Theorem \ref{th:eigenvalues} give 
\[
\eigenvalueL^{\bosfer}_{\ve,n}=\eigenvalueL^{\eff\bosfer}_{\ve,n}+O(\ve^{4/3})= s_{n}^{\bosfer}\alpha^2\ve^{2/3} +O(\ve)\,.
\]
This shows that $\eigenvalueL^{\bosfer}_{\ve,n}<\inf\sigma_{ess}(\Lscr_\ve^\bosfer)$ whenever $\ve$ is sufficiently small; hence, $\eigenvalueL^{\bosfer}_{\ve,n}$ is an eigenvalue below the essential spectrum of $\Lscr_\ve^\bosfer$. One can repeat the argument above for any $k=0,\dots,n$, and, by the inequalities  
\[
\eigenvalueL^{\eff\bosfer}_{\ve,k} -\tilde c_k \ve^{4/3}  \leq 
\eigenvalueL^{\bosfer}_{\ve,k}\leq \eigenvalueL^{\eff\bosfer}_{\ve,k} +\tilde c_k \ve^{4/3}\qquad k=0,\dots,n
\]
\[
 \eigenvalueL^{\eff\bosfer}_{\ve,k-1}+\tilde c_{k-1} \ve^{4/3}  \leq 
\eigenvalueL^{\eff\bosfer}_{\ve,k} -\tilde c_k \ve^{4/3} \qquad k=1,\dots,n
\]
which hold true for $\ve$ small enough, there follows that $\eigenvalueL^{\bosfer}_{\ve,n}$ is the $(n+1)$-th eigenvalue.   The proof is then concluded by $\sigma_{d}(H^\bosfer_\ve)=\sigma_{d}(\Lscr_\ve^\bosfer)-\alpha^{2}$.

\vskip10pt\noindent
{\bf Acknowledgements.}
The authors express their sincere gratitude to their  friend and collaborator Rodolfo Figari; the enlightening and stimulating discussions we had during his visit to our Department greatly contributed to the development of this work.

We are grateful to Nicholas Raymond who kindly drew our attention  to Ref. \cite{DucheneRaymond14}. 

The authors acknowledge funding from the Next Generation EU-Prin 2022 project 2022CHELC7 ``Singular Interactions and Effective Models in Mathematical Physics'' and the support of the National Group of Mathematical Physics (GNFM-INdAM). 

\vskip10pt\noindent
{\bf Conflict of Interest.} The authors have no conflicts to disclose.

\vskip10pt\noindent
{\bf Data Availability Statement.} Data sharing is not applicable to this article as no new data were created or analyzed in this study.

\end{document}